%
\documentclass[runningheads, envcountsect]{llncs}
%

\usepackage{subfig}
	\usepackage{latexsym,graphicx,epsfig,color}
\usepackage{mathtools}
\usepackage{url,setspace}
\usepackage{multirow}
\usepackage{rotating}
\usepackage{tikz}
\usepackage[shortlabels]{enumitem}
\usepackage{xspace}
\usepackage{algorithm,algpseudocode}
\usepackage{algpseudocode}
\usepackage{bm}

\newcommand{\IGNORE}[1]{}

\usetikzlibrary{decorations.markings}
\usetikzlibrary{arrows}
\tikzstyle{block}=[draw opacity=0.7,line width=1.4cm]
\tikzstyle{graphnode}=[circle, draw, fill=black!20, inner sep=0pt, minimum width=6pt]
\tikzstyle{point}=[circle, draw, fill=black!30, inner sep=0pt, minimum width=1pt]
\tikzstyle{input}=[rectangle, draw, fill=black!75,inner sep=3pt, inner ysep=3pt, minimum width=4pt]
\tikzstyle{unmatched}=[graphnode,fill=black!0]
\tikzstyle{shaded}=[graphnode,fill=black!20]
\tikzstyle{matched}=[graphnode,fill=black!100]
\tikzstyle{matching} = [ultra thick]
\tikzset{
    >=stealth',
    pil/.style={
           ->,
           thick,
           shorten <=2pt,
           shorten >=2pt,}
}
\tikzset{->-/.style={decoration={
  markings,
  mark=at position .5 with {\arrow{>}}},postaction={decorate}}}

\makeindex

\newcommand{\argmax}{\text{arg max}}

\newtheorem{observation}[theorem]{Observation}
\newtheorem{assumption}[theorem]{Assumption}

\usepackage{thmtools,thm-restate} 





\newcommand{\E}{\mathbb{E}}




\newcommand{\OPT}{\textsc{OPT}\xspace}

\newcommand{\F}{{\mathcal{F}\xspace}}

\def\bx {{\bf x}}
\def\by {{\bf y}}
\def\bz {{\bf z}}

\def \reals {\mathbb{R}}







\usepackage{graphicx}
%

\usepackage[utf8]{inputenc} 
\usepackage[hidelinks]{hyperref}       
\usepackage{url}            
\usepackage{booktabs}       
\usepackage{amsfonts}       
\usepackage{nicefrac}       
\usepackage{microtype}      

\newcommand{\agnote}[1]{\refstepcounter{note}$\ll${\bf Anupam~\thenote:}
  {\sf \color{purple}  #1}$\gg$\marginpar{\tiny\bf AG~\thenote}}

\newcommand{\price}{\ensuremath{\pi}}
\newcommand{\bprice}{\boldsymbol{\price}}
\newcommand{\probed}{\mathsf{Probed}}
\newcommand{\val}{\mathsf{val}}
\newcommand{\cost}{\mathsf{cost}}
\newcommand{\I}{\mathbb{I}}

\newcommand{\MS}{\mathcal{S}}
\newcommand{\X}{\mathbf{X}}
\newcommand{\cali}{\mathcal{I}}
\newcommand{\Y}{\mathbf{Y}}

\newcommand{\R}{\mathbb{R}}

\newcommand{\A}{\ensuremath{\mathcal{A}}\xspace}

\newcommand{\frugal}{\text{\sc{Frugal}}\xspace}
\newcommand{\robust}{\text{\sc{Robust}}\xspace}

\newcommand{\dest}{{d}\xspace}
\newcommand{\MPOI}{{\sc Markovian PoI}\xspace}
\newcommand{\reward}{r\xspace}
\newcommand{\breward}{{\bf \reward}\xspace}
\newcommand{\prepared}{{ready}\xspace}
\newcommand{\FreeInfo}{{\sc Free-Info}\xspace}
\newcommand{\traj}{\omega}
\newcommand{\trajProf}{{\boldsymbol{\omega}}}
\newcommand{\taumax}{\ensuremath{\tau^{\max}}\xspace}

\newcommand{\Ym}{\ensuremath{\Y_{M}}\xspace}
\newcommand{\YmH}{\ensuremath{\widehat{\Y}_{M}}\xspace}
\newcommand{\Ymmax}{\ensuremath{\Ym^{\max}}\xspace}

\newcommand{\YmHmin}{\ensuremath{\YmH^{\min}}\xspace}
\newcommand{\YmHmax}{\ensuremath{\YmH^{\max}}\xspace}
\newcommand{\bYmin}{\ensuremath{\mathbf{Y^{\min}}}\xspace}
\newcommand{\bYHmin}{\ensuremath{{\widehat{\mathbf{Y}}^{\min}}}\xspace}
\newcommand{\bYHmax}{\ensuremath{{\widehat{\mathbf{Y}}^{\max}}}\xspace}
\newcommand{\bYmax}{\ensuremath{\mathbf{Y^{\max}}}\xspace}
\newcommand{\Ymin}{\ensuremath{Y^{\min}}\xspace}
\newcommand{\YHmin}{\ensuremath{\widehat{Y}^{\min}}\xspace}
\newcommand{\Ymax}{\ensuremath{Y^{\max}}\xspace}
\newcommand{\YHmax}{\ensuremath{\widehat{Y}^{\max}}\xspace}
\newcommand{\Rmin}{\ensuremath{R^{\min}}\xspace}

\newcommand{\eat}[1]{}

\newcommand{\ALG}{\ensuremath{\text{\sc{ALG}}}\xspace}
\newcommand{\ALGA}{\ensuremath{\text{\sc{ALG}}_\A}\xspace}

\newcommand{\ALGH}{\ensuremath{\widehat{\text{\sc{ALG}}}}\xspace}
\newcommand{\ALGHA}{\ensuremath{\widehat{\text{\sc{ALG}}}_\A}\xspace}
\newcommand{\TG}{\ensuremath{G_T}\xspace}

\newcommand{\Det}{\textsc{Deterministic}\xspace}
\newcommand{\UtiMax}{{\sc Util-Max}\xspace}
\newcommand{\DisMin}{{\sc Disutil-Min}\xspace}
\newcommand{\Umin}{\ensuremath{U^{\min}}\xspace}
\newcommand{\Umax}{\ensuremath{U^{\max}}\xspace}

\newcommand{\MCH}{\widehat{\mathcal{S}}}

\newcommand{\pH}{\widehat{p}}
\newcommand{\PH}{\widehat{P}}
\newcommand{\tauH}{\widehat{\tau}}
\newcommand{\gamH}{\widehat{\gamma}}
\newcommand{\GH}{\widehat{G}}
\newcommand{\SC}{\mathcal{S}}

\newcommand{\Sur}{\ensuremath{\text{\sc SUR}}\xspace}
\newcommand{\Rob}{\ensuremath{\text{\sc Robustness}}\xspace}
\newcommand{\DAG}{\text{\sc{Dag}}\xspace}
\newcommand{\pol}{\ensuremath{\text{\sc Pol}}\xspace}

\newcommand{\ctr}{\ensuremath{\text{\rm{ctr}}}}
\newcommand{\etal}{et al.}
\newcommand{\poly}{\ensuremath{\text{\rm{poly}}}}

\newenvironment{proofof}[1]{\smallskip\noindent{\bf Proof of #1.}}%
        {\hspace*{\fill}$\Box$\par}

\renewcommand{\backslash}{\setminus}

\begin{document}
\title{The Markovian Price of Information}
%
%
\author{Anupam Gupta \inst{1} \and
Haotian Jiang\inst{2} \and
Ziv Scully\inst{1} \and
Sahil Singla\inst{3}}
\authorrunning{Gupta et al.}
%
\institute{Carnegie Mellon University, Pittsburgh PA 15213, USA \and
University of Washington, Seattle WA 98195, USA \and
 Princeton University, Princeton NJ 08544, USA 
}
\maketitle              
\begin{abstract}

Suppose there are $n$ Markov chains and we need to pay a per-step \emph{price} to advance them. The ``destination'' states of the Markov chains contain rewards; however, we can only get  rewards for a subset of them that satisfy a combinatorial constraint, e.g., at most $k$ of them, or  they are acyclic in an underlying graph. What strategy should we choose to advance the Markov chains if our goal is to maximize the total reward \emph{minus} the total  price that we pay?

In this paper we introduce a Markovian price of information model to capture  settings such as the above, where the input parameters of a combinatorial optimization problem are  given  via  Markov chains. We design optimal/approximation algorithms that jointly optimize the value of the combinatorial problem and the total paid price. We also study \emph{robustness} of our algorithms to the distribution parameters and how to handle the \emph{commitment} constraint.

Our work brings
  together two classical lines of investigation: getting optimal
  strategies for Markovian multi-armed bandits, and getting exact and approximation algorithms
  for discrete optimization problems using combinatorial as 
  well as linear-programming relaxation ideas.

\IGNORE{
In this paper we introduce  a Markovian price of information model where the input parameters of a combinatorial optimization problem are  given  via  Markov chains. 
To find the true value of a parameter, we need to perform \emph{multiple probes} on its Markov chain, where each probe incurs  a price and results in a random transition. 
We design optimal/approximation algorithms that jointly optimize the value of the combinatorial problem and the total  price paid.
We also study \emph{robustness} of our algorithms to the distribution parameters and how to handle a \emph{commitment} constraint.
}
\keywords{Multi-armed bandits \and Gittins index  \and Probing algorithms.}
\end{abstract}


\section{Introduction}\label{sec:intro}

Suppose we are running an oil company
and are deciding where to set up new drilling operations.
There are several  candidate sites,
but the value of drilling each site is a random variable.
We must therefore \emph{inspect} sites before drilling.
Each inspection gives more information about a site's value,
but the inspection process is costly.
Based on laws, geography, or availability of equipment,
there are constraints on which sets of drilling sites are feasible.
We ask:
\begin{quote}
  What adaptive inspection strategy should we adopt to
   find a feasible set of sites to drill which maximizes,
   in expectation,
  the value of the chosen (drilled) sites minus the total inspection cost of all sites?
\end{quote}
Let us consider the optimization challenges in this problem:
\begin{enumerate}[(i)]
\item
  \label{item:combo}
  Even if we could fully inspect each site for free,
  choosing the best feasible set of sites
  is a \emph{combinatorial optimization} problem.
\item
  \label{item:multi}
  Each site may have \emph{multiple stages} of inspection.
  The costs and possible outcomes of later stages
  may depend on the outcomes of earlier stages.
  We use a \emph{Markov chain} for each site to model
  how our knowledge about the value of the site stochastically evolves with each inspection.
\item
  \label{item:robust}
  Since a site's Markov chain model may not exactly match reality,
  we want a \emph{robust} strategy
  that performs well even under small changes in the model parameters.
\item
  \label{item:commit}
  If there is competition among several companies,
  it may not be possible to do a few stages of inspection at a given site,
  abandon that site's inspection to inspect other sites,
  and then later return to further inspect the first site.
  In this case the problem has additional ``take it or leave it''
  or \emph{commitment} constraints,
  which prevent interleaving inspection of multiple sites.
\end{enumerate}

While each of the above aspects has been individually studied in
the past, no prior work addresses all of them.
In particular,  aspects \ref{item:combo} and \ref{item:multi}
have not been simultaneously studied before.
In this work we advance the state of the art by solving
the  \ref{item:combo}-\ref{item:multi}-\ref{item:robust} and
the \ref{item:combo}-\ref{item:multi}-\ref{item:commit} problems.

To study aspects \ref{item:combo} and \ref{item:multi} together,
in \S\ref{subsec:Model}
we propose the \emph{Markovian Price of Information} (\MPOI) model.
The \MPOI model unifies prior models
which address  \ref{item:combo} or \ref{item:multi} alone.
These prior models include those of
Kleinberg et al.~\cite{KWW-EC16} and Singla~\cite{Singla-SODA18},
who study the combinatorial optimization aspect~\ref{item:combo}
in the so-called \emph{price of information} model,
in which each site has just a single stage of inspection;
and those of Dimitriu et al.~\cite{DTW-SIDMA03} and Kleinberg et al.~\cite[Appendix~G]{KWW-EC16},
who consider the multiple stage inspection aspect~\ref{item:multi}
for the problem of selecting just a single site.

Our main results show how to solve combinatorial optimization problems,
including both maximization and minimization problems,
in the \MPOI model.
We give two methods of transforming classic algorithms,
originally designed for the \FreeInfo (inspection is free) setting,
into \emph{adaptive} algorithms for the \MPOI setting.
These adaptive algorithms respond dynamically to the random outcomes of inspection.
\begin{itemize}[noitemsep,topsep=5pt]
\item
  In \S\ref{sec:UtilityMaximization}
  we  transform ``greedy'' $\alpha$-approximation
  algorithms in the \FreeInfo setting
  into $\alpha$-approximation adaptive algorithms in the \MPOI setting
  (Theorem~\ref{thm:frugalToAdaptiveMax}).
  For example, this yields optimal algorithms for matroid optimization
  (Corollary~\ref{cor:impCorollaries}).
\item
  In \S\ref{subsubsec:RobustnessModel}
  we show how to slightly modify
  our $\alpha$-approximations for the \MPOI setting in Theorem~\ref{thm:frugalToAdaptiveMax}
  to make them robust to small changes in the model parameters
  (Theorem~\ref{thm:RobustInformal}).
\item
  In \S\ref{subsec:IntroCommitment}
  we   use   \emph{online contention resolution schemes} (OCRSs)~\cite{FSZ-SODA16} to   transform LP based \FreeInfo maximization algorithms
into adaptive \MPOI algorithms while respecting the commitment constraints.
Specifically, a $1/\alpha$-selectable OCRS yields $\alpha$-approximation
with commitment  (Theorem~\ref{thm:OCRStoCommitment}).

\IGNORE{  in the \FreeInfo setting,\footnote{%
    By this we mean a combinatorial maximization problem in which
    each item's value is a random variable which can be revealed for free,
    but we must irrevocably take or leave an item
    immediately after revealing its value.
    In fact, OCRSs treat a harder version of the problem in which
    an adversary can choose the order in which values are revealed.}
    }
\end{itemize}

The general idea behind our first result (Theorem~\ref{thm:frugalToAdaptiveMax}) is the following.
A \frugal combinatorial algorithm (Definition~\ref{defn:frugalPacking})
is, roughly speaking, ``greedy'':
it repeatedly selects the feasible item of greatest marginal value.
We show how to adapt \emph{any} \frugal algorithm to the \MPOI setting:
\begin{itemize}[noitemsep,topsep=5pt]
\item
  Instead of using a fixed value for each item~$i$,
  we use a \emph{time-varying ``proxy'' value}
  that depends on the state of $i$'s Markov chain.
\item
  Instead of immediately selecting the item~$i$ of greatest marginal value,
  we \emph{advance $i$'s Markov chain one step}.
\end{itemize}
The main difficulty lies in choosing each item's proxy value,
for which simple heuristics can be suboptimal.
We use a quantity for each state of each item's Markov chain
called its \emph{grade},
and an item's proxy value is its \emph{minimum grade so far}.
A state's grade is closely related to the Gittins index
from the multi-armed bandit literature,
which we discuss along with other related work in~\S\ref{sec:relatedWork}.

\IGNORE{

Consider a scenario where you run an oil company and want to set up a
new oil drill. You have estimates on the amount of oil (the ``value'')
available at potential sites, say, based on prior surveys. To find the
exact value at a site you need to conduct a closer inspection that
incurs some ``price''. What inspection strategy should you adopt to
maximize the expected value of the \emph{best site} you find \emph{minus} the total inspection price you pay?

A popular model to study the above problem is the Pandora's box model:
given  probability distributions of
$n$ independent r.v.s $\{X_i\}$ (representing the amount of oil at site $i$) and their \emph{probing} (inspection) prices $\price_i$, the goal is design a strategy to \emph{adaptively} probe a set $\probed \subseteq \{1,2,\ldots, n\}$ to maximize   \emph{utility}:
\[ {\E \Big[ \max_{i\in \probed} \{X_i\} - \sum_{i\in \probed} \price_i  \Big]. }\]
An optimal policy for this model was given by Weitzman in~\cite{Weitzman-Econ79}.

Although powerful, the basic Pandora's box model is limited in assuming we
find the exact amount of oil at a site using  a \emph{single}
inspection. What if we need to perform a sequence of inspections at a site
before finding $X_i$, where each inspection incurs a price and improves
our estimate? A natural way to model this evolution of $X_i$  is to use
a Markov chain for each site, where each probe at a site incurs a price
and results in a random transition in the chain. It is only when a
Markov chain reaches one of its ``destination'' states that we find $X_i$.
This model was used in~\cite{DTW-SIDMA03} to study a related
\emph{minimization} problem where the goal is to minimize the total
price paid to set up an oil-drill, while ensuring that one of the sites
reaches its destination state. More recently, a very general model for the
maximization problem was proposed by~\cite[Appendix~G]{KWW-EC16}, who
gave an optimal algorithm to pick a single site.

In a different setting, suppose we find the amount of oil in a single inspection, but
 are allowed to set up more than one oil-drill?
We may  have constraints on the allowed final sites: no city can have more
than two oil-drills, and no state can have more than five
oil-drills. Combinatorial constraints of this particular form can be
captured using a ``laminar matroid''.
 Some recent
works~\cite{KWW-EC16,Singla-SODA18} extended the Pandora's box model
 to such richer combinatorial constraints.  In general, given
a packing combinatorial constraint $\F \subseteq 2^{J}$ over some ground elements $J$,
the \emph{price   of information} model asks for a strategy to probe a set
$\probed \subseteq J$ of r.v.s and return a subset
$\I \subseteq \probed$ that is feasible according to the constraint
(i.e., $\I \in \F$) and maximizes  utility:
\begin{align} \label{eq:POI}
\E \Big[ \underbrace{ \textstyle \sum_{i\in \I} X_i}_{\text{value}} - \underbrace{\textstyle \sum_{i\in \probed} \price_i }_{\text{total price}}  \Big].
\end{align}

In this paper, we propose the \emph{Markovian Price of Information}
(\MPOI) model that combines these two lines of inquiry,
 the Markovian and the combinatorial generalizations of the
Pandora's box. We design optimal/approximation algorithms in the setting
where the objective is to select a feasible set of elements that have
reached their destination states, while minimizing the total prices paid
in advancing the Markov chains to get to their destination states.

In general, our work brings together two classical lines of
investigation: getting optimal strategies for Markovian multi-armed
bandits (e.g., using the Gittins index~\cite{GGW-Book11}), and getting
exact and approximation algorithms for discrete optimization problems
(e.g., using combinatorial, as well as linear-programming relaxation
ideas~\cite{Schrijver-Book03}). Given the rich expressive power and the
successes of both
these models, we hope that our techniques combining them will find applications beyond
those we present here.

In the following sections, we describe the  \MPOI model in
more detail and our algorithmic results that use \frugal algorithms. We
also extend our  model to a robust variant
and to a model with commitments.
}



\IGNORE{ In a recent paper by Singla~\cite{Singla-SODA18}, a stochastic model called the Price of Information is studied.
One motivating example for the model is the following network design minimization problem.
Suppose a company wishes to lay down a minimum-cost spanning tree in a given graph.
However, it only has stochastic information about the edge costs.
To find the precise cost $X_i$ of any edge, the company has to conduct a study that incurs a price (or cost) $\price_i$.
The goal is to minimize the \emph{disutility}, which is the sum of the tree cost and the total price that we spend on the studies in expectation. The paper also studies the cases where the goal is to optimize the \emph{utility}.

In practice, however, the procedure of carrying out a study is much more complicated than paying a single cost and the procedure often involves uncertainties.
We therefore model the dynamics of such a procedure as a Markov system.
The information can be obtained by advancing the Markov system, which we call the Markovian Price of Information.
What kind of strategy should one adopt to optimize the disutility/utility?

Along a different and completely independent series of work on scheduling, Dumitriu \etal~\cite{DTW-SIDMA03} studied the following problem of scheduling Markovian jobs.
We have a set of jobs, each associated with a Markov system.
Every Markov system in the system starts from some initial state and is designated a destination state which marks the completion of that job.
Advancing a Markov system from each state incurs a cost that only depends on that state, which can be think of as the time it takes to move on to the next state.
The goal is to minimize the expected time it takes before the first job is done.
One can equivalently think of the requirement as a very simple \emph{completion constraint} that the set of finished jobs should form a non-empty set.
It is therefore natural to ask what policy one should use in the case of finishing multiple jobs or even for more general completion constraints.

In this paper, we study the Markovian Price of Information (\MPOI) model
which captures both questions above. We design optimal/approximation algorithms in the case where the
  objective function contains both the value of the solution and the
  costs incurred by advancing the Markov systems.  }

\section{The Markovian Price of Information Model} \label{subsec:Model}

To capture the evolution of our knowledge about an item's value,
we use the notion of a Markov system from~\cite{DTW-SIDMA03}
(who did not consider values at the destinations).
\begin{definition} [Markov System]
  A Markov system $\MS=(V,P,s,T, \bprice, \breward)$ for an element
  consists of a discrete Markov chain with state space $V$, a transition
  matrix $P=\{p_{u,v}\}$ indexed by $V \times V$ (here $p_{u,v}$ is the
  probability of transitioning from $u$ to $v$), a starting state $s$, a
  set of absorbing \emph{destination} states
  $T \subseteq V$, a non-negative probing price
  $\price^u\in \reals_{\geq 0}$ for every state $u \in V \backslash T$,
  and a value $\reward^t \in \reals$ for each destination state
  $t \in T$. We assume that every state $u \in V$ reaches some
  destination state.
\end{definition}

We have a collection $J$ of \emph{ground elements}, each associated with
its own Markov system.  An element is \emph{\prepared} if its Markov
system has reached one of its absorbing destination states. For a \prepared
element, if $\traj$ is the (random) \emph{trajectory} of its Markov chain then $\dest(\traj)$
denotes its associated destination state.
We now define the \MPOI game, which consists of an objective function on
  $J$.
\begin{definition} [\MPOI Game]
  \label{defn:MPOIGame}
  Given a set of ground elements $J$, constraints $\F \subseteq 2^J$,
  an objective function $f:2^J \times \R^{|J|} \rightarrow \R$, and a
  Markov system $\MS_i=(V_i,P_i,s_i,T_i,\bprice_i,\breward_i)$ for each
  element $i \in J$, the \MPOI game is the following.  At each time
  step, we either advance a Markov system $\MS_i$ from its current state
  $u \in V_i\setminus T_i$ by incurring price $\price^u_i$, or we end the
  game by selecting a subset of \emph{\prepared} elements
  $\I \subseteq J$ that are \emph{feasible}---i.e.,  $\I \in \F$.
\end{definition}
A common choice for $f$ is the \emph{additive} objective
$f(\I,{\bf x}) = \sum_{i\in \I} x_i$.

Let $\trajProf$ denote the \emph{trajectory profile} for the \MPOI game: it
consists of the random trajectories~$\traj_i$ taken by all the Markov chains~$i$ at
the end of the game.  To avoid confusion, we write the selected feasible
solution $\I$ as $\I(\trajProf)$.
A utility/disutility optimization problem is to give a strategy
for a \MPOI game while optimizing both the objective and the total
price.

\medskip
\noindent \textbf{Utility Maximization  (\UtiMax)}: A \MPOI game where the
constraints $\F$ are \emph{downward-closed} (i.e., \emph{packing}) and
the values $\breward_i$ are non-negative for every $i \in J$ (i.e.,
$\forall t\in T_i$, $\reward^t_i \geq 0$, and can be understood as
a reward obtained for selecting   $i$).  The goal is to
find a strategy $\ALG$ maximizing  \emph{utility}:
\begin{align} \label{eq:MPOI}
\Umax(\ALG) \overset{\Delta}{=}  \E_{\trajProf} \Big[ \underbrace{f
  \left(\I(\trajProf), \{\reward^{\dest(\traj_i)}_i\}_{i \in \I(\trajProf)}
  \right)}_{\text{value}}  - \underbrace{ \textstyle \sum_i  \sum_{u\in
  \traj_i} \price_i^u}_{\text{total price}}    \Big].
\end{align}
Since the empty set is always
feasible, the optimum utility is non-negative.
\medskip

We also define a minimization variant of the problem that is useful to
capture covering combinatorial problems such as minimum spanning trees and set cover.

\medskip
\noindent \textbf{Disutility Minimization  (\DisMin) }: A \MPOI game where
the constraints $\F$ are \emph{upward-closed} (i.e., \emph{covering}) and the
values $\breward_i$ are non-negative for every $i \in J$ (i.e.,
$\forall t\in T_i$, $\reward^t_i \geq 0$, and can be understood as a cost
we pay for selecting  $i$).  The goal is to
find a strategy $\ALG$ minimizing \emph{disutility}:
\[\Umin(\ALG) \overset{\Delta}{=}  \E_{\trajProf} \Big[ f \left(\I(\trajProf),
    \{\reward^{\dest(\traj_i)}_i\}_{i \in \I(\trajProf)} \right)  + {\textstyle \sum_i
  \sum_{u\in \traj_i} \price_i^u }    \Big].
\]
We will assume that the function $f$ is non-negative when all
$\breward_i$ are non-negative. Hence, the disutility of the optimal
policy is non-negative.
\medskip

In the special case where all the Markov chains for a \MPOI game are formed by a \emph{directed acyclic graph} (\DAG), we call the corresponding  optimization problem \DAG-\UtiMax or \DAG-\DisMin.

\section{Adaptive Utility Maximization via \frugal Algorithms}
\frugal algorithms, introduced in Singla~\cite{Singla-SODA18},
 capture the intuitive notion of ``greedy'' algorithms.
There are many known  \frugal algorithms, e.g., optimal algorithms for matroids and $O(1)$-approx algorithms for matchings, vertex cover, and facility location.
These \frugal algorithms were designed
in the traditional \emph{free information} (\FreeInfo) setting, where
 each ground element has a  fixed value. Can we use them in the \MPOI world?

Our main contribution is a technique that
adapts \emph{any} \frugal algorithm to the \MPOI world,
achieving the \emph{same approximation ratio} as the original algorithm.
The result applies to \emph{semiadditive} objective functions~$f$,
which are those of the form $f(\I,{\bf x}) = \sum_{i\in \I} x_i + h(\I)$
for some $h:2^J \rightarrow \R$.

\begin{restatable}{theorem}{frugalToAdaptiveMax} \label{thm:frugalToAdaptiveMax}
For a semiadditive objective function $\val$, if there exists an  $\alpha$-approximation \frugal algorithm for a \UtiMax problem over some packing constraints $\F$ in the \FreeInfo world, then there exists an $\alpha$-approximation strategy for the corresponding \UtiMax problem in the \MPOI world.
\end{restatable}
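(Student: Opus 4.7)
The plan is to reduce the \MPOI instance to a \FreeInfo instance with randomly determined surrogate values $Y_i^{\min}$, run the promised $\alpha$-approximate \frugal algorithm on that reduction, and then translate the guarantee back. The surrogate $Y_i^{\min}$ for item $i$ is extracted from its random trajectory via a Gittins-index-style \emph{grade} $\tau_i:V_i\to\R$, namely $Y_i^{\min} := \min_{v \in \traj_i} \tau_i^v$. The grade $\tau_i^u$ is defined so that the best adaptive stopping rule on $\MS_i$ alone, starting from $u$, satisfies the Gittins-type exchange identity: advancing $\MS_i$ one step from a state of current running-minimum grade $\tau$ and paying the probing price corresponds, in expectation, to an equal drop in the running-minimum grade, with destination states $t\in T_i$ having $\tau_i^t = \reward_i^t$.

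\textbf{Upper bound on OPT.} Using the exchange identity pointwise, for every strategy and every item $i$ the expected reward contributed by $i$ minus the total price paid on $\MS_i$ is at most $\E[Y_i^{\min}\cdot \mathbf{1}_{i\in \I}]$: each probe at a state of grade $\tau$ has its price absorbed by the decrease of the running-minimum grade, and the terminal reward $\reward_i^{\dest(\traj_i)}$ is at most $\tau_i^{\dest(\traj_i)}$, hence at most $Y_i^{\min}$ on the event that $i$ is selected. Combining across items and using semiadditivity of $\val$,
\[
\Umax(\OPT) \le \E\Big[\textstyle\sum_{i\in \I^{\OPT}(\trajProf)} Y_i^{\min} + h(\I^{\OPT}(\trajProf))\Big],
\]
and the right-hand side is at most the value of the best feasible set for the \FreeInfo instance that assigns item~$i$ the value $Y_i^{\min}$.

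\textbf{The algorithm and its matching lower bound.} The adaptive strategy runs the \frugal algorithm using the current running minimum $Y_i^t$ as item~$i$'s proxy value; whenever the \frugal algorithm wants to commit to an item~$i$ that is not yet at its destination, the strategy instead advances $\MS_i$ by one step, updates $Y_i^t$, and re-queries the \frugal algorithm, terminating when the committed set becomes feasible and fully \prepared. Because \frugal algorithms commit greedily to the element of largest current marginal value and the caps $Y_i^t$ are monotone non-increasing, the dynamic execution produces the same output set $\I^{\ALG}$ as running the \frugal algorithm once on the fixed instance $\mathbf{Y}^{\min}$: an item discarded at value $Y_i^t$ cannot be rescued later. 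The telescoping that bounded OPT now becomes an \emph{equality} for the strategy's own probing pattern, giving $\Umax(\ALG) = \E[\val(\I^{\ALG},\mathbf{Y}^{\min})]$. Applying the \frugal $\alpha$-approximation on the realized instance $\mathbf{Y}^{\min}$ and chaining with the upper bound yields $\Umax(\ALG)\ge \alpha\cdot \Umax(\OPT)$.

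\textbf{Main obstacle.} The delicate point is the exchange identity for~$\tau$: it must hold as an \emph{inequality} pointwise for an arbitrary adaptive strategy (to upper bound OPT) and as an \emph{equality} for the index-driven simulation (to match ALG), while handling items that are never advanced, items advanced but never selected, and the semiadditive offset $h(\I)$. A second subtlety is showing that the \frugal algorithm's final output is independent of whether the values arrive at once as $\mathbf{Y}^{\min}$ or decrease gradually through the $Y_i^t$'s; this reduces to a monotonicity-under-value-decrease property of \frugal selection on packing constraints, together with the fact that only the sorted order of surviving values drives a \frugal run.
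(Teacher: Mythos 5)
Your architecture is the same as the paper's: define a Gittins-style grade per state, use the running minimum of grades along the trajectory (the paper's \emph{prevailing cost} $\Ymax$) as a random surrogate value, bound \OPT by the surrogate optimum, and show the grade-driven adaptation of the \frugal algorithm outputs the same set as the \frugal algorithm run on the surrogate instance and extracts that much utility. The gap is in the amortization step that you yourself flag as the ``main obstacle'': the per-step exchange identity you propose---that each probe's price is matched, in expectation, by the drop in the running-minimum grade---is false. Consider a single Pandora's box: one probing state $s$ with price $\pi=0.1$ whose successor is a destination of deterministic value $10$. Then $\tau^s=9.9$ (from $\E[(X-\tau^s)^+]=\pi$) while the destination grade is $10$, so the running minimum never drops below $9.9$ even though a price is paid. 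The correct statement is not per-step but in expectation over an \emph{epoch}: the grade is the break-even termination penalty at which optimal continuation has utility exactly zero. This is what the paper packages as the fair ``teasing game'' (its Lemma on fairness): charging a selected element its current prevailing cost makes the game one in which \emph{no} strategy has positive expected value---which yields the upper bound on \OPT for arbitrary adaptive strategies---and in which equality holds only for strategies that, once they begin advancing a Markov system, continue through an entire epoch. Your adapted algorithm does have this epoch-completion property (because the marginal-value function $g$ is increasing in the proxy and every state visited during an epoch has grade at least the epoch's starting grade), but your proposal neither states nor uses it, and without it the claimed equality $\Umax(\ALG)=\E[\val(\I,\bYmax(\trajProf))]$ does not follow from your sketch.

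Two smaller points. The chain ``terminal reward $\le \tau$ at the destination, hence at most $Y^{\min}$'' is backwards: by definition of the minimum, $Y^{\min}\le\tau$ at the destination, and the destination grade \emph{equals} the reward there; the reward can strictly exceed the prevailing cost, and only after folding in the prices paid does the expected net contribution drop to at most the prevailing cost---again, exactly the fair-game assertion that needs proof. Also, the same-output claim (dynamic run with shrinking proxies versus one-shot run on $\bYmax$) is correct but needs the careful induction the paper gives via the last time the prevailing cost decreases; ``an item discarded cannot be rescued later'' is not by itself an argument, since \frugal algorithms never discard, they only postpone. Finally, the concluding bound should read $\Umax(\ALG)\ge\frac{1}{\alpha}\cdot\OPT$ for $\alpha\ge 1$, not $\alpha\cdot\OPT$.
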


We prove an analogous result for \DisMin in \S\ref{sec:DisutilityMinimization}.
The following corollaries immediately follow from known \frugal algorithms~\cite{Singla-SODA18}.

\begin{corollary} \label{cor:impCorollaries}
In the \MPOI world, we have:
  \begin{itemize} [noitemsep,topsep=5pt]
\item An optimal algorithm for both \UtiMax and \DisMin for matroids.
\item A $2$-approx for \UtiMax for matchings and a $k$-approx for a $k$-system.
\item A $\min\{f,\log n\}$-approx for \DisMin for set-cover, where $f$ is the maximum number of sets in which a ground element is present.
\item  A $1.861$-approx for \DisMin for  facility location.
\item  A $3$-approx for \DisMin for prize-collecting Steiner tree.
  \end{itemize}
\end{corollary}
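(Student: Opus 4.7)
The plan is to invoke Theorem~\ref{thm:frugalToAdaptiveMax} as a black box for the \UtiMax entries and its \DisMin counterpart (promised in \S\ref{sec:DisutilityMinimization}) as a black box for the \DisMin entries, and then, for each of the five bullets, simply exhibit a known \frugal algorithm in the \FreeInfo world achieving the claimed approximation factor. Because those reductions preserve the approximation ratio on any semiadditive objective, each bullet collapses to a pointer into the existing \frugal catalog of Singla~\cite{Singla-SODA18}. Thus the corollary is really a list of five instantiations rather than a new proof, and I would present it as such.

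For the matroid entry, I would instantiate with the classical greedy (sorting elements by marginal value and taking each if it maintains independence). This algorithm is optimal in the \FreeInfo world, and its selection rule is the canonical \frugal template: at each step pick the feasible element of extremal marginal score. Plugging it into Theorem~\ref{thm:frugalToAdaptiveMax} (resp.\ its \DisMin analog) gives the optimal \MPOI algorithm for matroid \UtiMax and \DisMin. For \UtiMax matching I would use the weight-greedy matching algorithm, which is $2$-approximate in the \FreeInfo world and \frugal for the same reason; the $k$-system bound comes from the general fact that weight-greedy is a $k$-approximation on any $k$-system and is \frugal verbatim.

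For the three \DisMin entries I would apply the \DisMin reduction to the standard greedy/primal-dual approximations: the $\min\{f,\log n\}$-approximation for set-cover (greedy set-cover gives $\log n$, and the ``cover one element at a time'' rule gives $f$, both \frugal); the $1.861$-approximation of Jain--Mahdian--Markakis--Saberi--Vazirani--Vazirani for metric facility location, whose primal-dual implementation raises dual variables uniformly and opens facilities at tight constraints, which Singla~\cite{Singla-SODA18} showed fits the \frugal template; and the $3$-approximation of Goemans--Williamson for prize-collecting Steiner tree, whose moat-growing implementation is likewise \frugal. In each case the \MPOI bound is inherited unchanged from the \FreeInfo bound via the reduction.

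The only non-routine step is checking that each of these five \FreeInfo algorithms really satisfies Definition~\ref{defn:frugalPacking} (or its covering analog): the approximation ratios themselves are classical, but whether the decisions depend on the input values \emph{only through monotone marginal-score comparisons} is what the \frugal framework requires. For the matroid, matching, and $k$-system greedys this is essentially syntactic. The main obstacle is the last three: set-cover's dual-based variant, facility location's Jain--Vazirani routine, and Goemans--Williamson must be rewritten so that ``raising duals'' corresponds to a \frugal-style monotone comparison. I would not redo that verification from scratch but instead cite the explicit \frugal implementations given in~\cite{Singla-SODA18}, so that the corollary follows immediately by composing those implementations with Theorem~\ref{thm:frugalToAdaptiveMax} and its \DisMin analog.
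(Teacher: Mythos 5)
Your proposal matches the paper's own justification exactly: the paper proves this corollary in one line by observing that it ``immediately follow[s] from known \frugal algorithms~\cite{Singla-SODA18},'' i.e., by composing Theorem~\ref{thm:frugalToAdaptiveMax} (and its \DisMin analog, Theorem~\ref{thm:frugalToAdaptiveMin}) with the \frugal implementations already catalogued in~\cite{Singla-SODA18}. Your additional care in flagging that the primal-dual algorithms (facility location, prize-collecting Steiner tree, dual-based set cover) need their \frugal-template verification outsourced to~\cite{Singla-SODA18} is consistent with, and slightly more explicit than, what the paper writes.
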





Before proving Theorem~\ref{thm:frugalToAdaptiveMax},
we define a \emph{grade} for every state in a Markov system in
\S\ref{para:grade}, much as in~\cite{DTW-SIDMA03}. This grade is a variant of the popular  \emph{Gittins index}. In \S\ref{para:PrevailingCost}, we use the grade  to define a \emph{prevailing cost} and an \emph{epoch} for a trajectory. In \S\ref{sec:UtilityMaximization}, we use these definitions to prove Theorem~\ref{thm:frugalToAdaptiveMax}.
We consider \UtiMax throughout,
but analogous definitions and arguments hold for \DisMin.

\subsection{Grade of a State} \label{para:grade}
To define the \emph{grade} $\tau^v$ of a state $v \in V$
in Markov system $\MS=(V,P,s,T,\bprice,\breward)$,
we consider the following Markov game called \emph{$\tau$-penalized $\MS$},
denoted $\MS(\tau)$.
Roughly, $\MS(\tau)$ is the same as $\MS$
but with a \emph{termination penalty}, which is a constant $\tau \in \reals$.

Suppose $v\in V$ denotes the current state of $\MS$ in the game $\MS(\tau)$. In each move,  the player has two choices: (a)   \emph{Halt} that immediately ends the game, and (b)   \emph{Play} that changes the  state, price, and value as follows:
  \begin{itemize} [noitemsep,topsep=5pt]
\item     If $v \in V \setminus T$,  the player pays price $\price^v$,
    the current  state  of $\MS$  changes according to the transition matrix $P$,
    and the game continues.
\item     If $v \in T$, then the player receives \emph{penalized value} $r^v - \tau$,
    where $\tau$ is the aforementioned termination penalty,
    and the game ends.
  \end{itemize}

The player wishes to maximize his \emph{utility},
which is the expected value he obtains minus the expected price he pays.
We write $U^v(\tau)$ for the utility attained by optimal play
starting from state $v \in V$.

The utility $U^v(\tau)$ is clearly non-increasing in the penalty~$\tau$,
and one can also show that it is continuous~\cite[Section~4]{DTW-SIDMA03}.
In the case of  large penalty $\tau \to +\infty$,
it is optimal to halt immediately, achieving $U^v(\tau) = 0$.
In the opposite extreme $\tau \to -\infty$,
it is optimal to play until completion,
achieving $U^v(\tau) \to +\infty$.
Thus, as we increase $\tau$ from $-\infty$ to $+\infty$,
the utility $U^v(\tau)$  becomes~$0$
at some critical value $\tau = \tau^v$. This critical value $\tau^v$ that  
depends on  state~$v$ is the \emph{grade}.

\begin{definition}[Grade]
  The \emph{grade} of a state $v$ in Markov system $\MS$ is
$
    \tau^v \overset{\Delta}{=}  \sup \{\tau \in \R \mid U^v(\tau) > 0\}.
$
For a \UtiMax problem, we write the grade of a state $v$ in  Markov system $\MS_i$ corresponding to element~$i$  as $\tau_i^v$.
\end{definition}

  The quantity grade of a state is well-defined  from the above discussion.
We emphasize that it  is independent of all other Markov systems.
Put another way, the grade of a state is
the penalty  $\tau$ that makes the player
\emph{indifferent} between halting and playing.
It is known how to compute grade
efficiently~\cite[Section~7]{DTW-SIDMA03}.

\subsection{Prevailing Cost and Epoch}
\label{para:PrevailingCost}
We now define a  \emph{prevailing cost}~\cite{DTW-SIDMA03} and an \emph{epoch}.
The prevailing cost of Markov system $\MS$ is its minimum grade at any point in time.

\begin{definition}[Prevailing Cost]
\label{defn:PrevailingCostMax}
The \emph{prevailing cost} of Markov system $\MS_i$ in a trajectory $\traj_i$ is
$\Ymax(\traj_i) = \min_{v \in \traj_i}\{\tau_i^{v} \}$.
For trajectory profile $\trajProf$, denote  $\Ymax(\trajProf)$   the list of prevailing costs for each Markov system.
\end{definition}

Put another way,
the prevailing cost is
the maximum termination penalty  for the game $\MS(\tau)$
such that for every state along $\traj$  the player does not want to halt.


Observe that the prevailing cost of a trajectory can only decrease as it extends further. In particular, it decreases whenever the Markov system reaches a state with grade smaller than each of the previously visited states. We can therefore view the prevailing cost as a non-increasing piecewise  constant function of time.
This motivates us to define an epoch. 

\begin{definition}[Epoch]
\label{defn:EpochMax}
An \emph{epoch} for a trajectory $\traj$ is  any maximal continuous segment  of $\traj$ where  the prevailing cost does not change.
\end{definition}
%
Since the grade can be computed efficiently, we can also compute the prevailing cost and epochs of a trajectory   efficiently.



\subsection{Adaptive Algorithms for Utility Maximization}
\label{sec:UtilityMaximization}

In this section, we prove  Theorem~\ref{thm:frugalToAdaptiveMax} that  adapts a  \frugal algorithm in
 \FreeInfo world to a probing strategy in the \MPOI world.
 This theorem concerns \emph{semiadditive functions},
which are useful to capture non-additive objectives of problems like facility location and prize-collecting Steiner tree.

\begin{definition}[Semiadditive Function~\cite{Singla-SODA18}]
A function $f(\I,\X): 2^{J}\times \R^{|J|} \rightarrow \R$ is \emph{semiadditive} if there exists a function $h:2^J \rightarrow \R$  s.t.
$	f(\I,{\bf x}) = \sum_{i\in \I} x_i + h(\I).
$
\end{definition}

All additive functions are semiadditive with $h(\I)=0$ for all $\I$. To capture the facility location problem on a graph $G=(J,E)$ with metric $(J,d)$,  clients $C\subseteq J$, and facility opening costs ${\bf x}: J \rightarrow \reals_{\geq 0}$,  we can define $h(\I) = \sum_{j \in C} \min_{i \in \I} d(j,i)$. Notice $h$ only depends on the identity of facilities $\I$ and not their opening costs.
%


The proof of Theorem~\ref{thm:frugalToAdaptiveMax} takes two  steps. We first give a randomized reduction to upper bound the utility of the optimal strategy in the \MPOI world with the optimum  of a \emph{surrogate problem} in the \FreeInfo world. 
Then,  we  transform a \frugal algorithm into a strategy with utility close to this  bound.

\subsubsection{Upper Bounding the Optimal Strategy Using a Surrogate.}
\label{subsec:BoundOptMax}

The \emph{main idea} in this section is to show that for \UtiMax,  no strategy (in particular,  optimal) can derive more utility from an element $i \in J$ than its  prevailing cost.  Here, the prevailing cost of $i$ is  for a random trajectory to a destination state in Markov system $\MS_i$.
Since the optimal strategy can only select a feasible set in $\F$, this idea naturally leads  to the following \FreeInfo \emph{surrogate  problem}: imagine each element's value is exactly its (random) prevailing cost, the goal is to select a  set feasible in $\F$ to maximize the  total value. In Lemma~\ref{lem:boundAdapMax}, we show that the expected optimum value of this surrogate problem is an upper bound on the optimum utility for \UtiMax.
First, we formally define the surrogate problem.



\begin{definition}[Surrogate Problem] \label{defn:SurMax}
Given a \UtiMax problem with semiadditive objective  $\val$ and packing constraints $\F$ over  universe $J$, the corresponding  \emph{surrogate} problem over $J$ is the following. It consists of constraints  $\F$ and (random) objective function  $\tilde{f}:2^J \rightarrow \reals $  given by $\tilde{f}(\I) = \val(\I, \bYmax(\trajProf))$, where $\bYmax(\trajProf)$ denotes the prevailing costs over  a random trajectory profile $\trajProf$ consisting of independent random trajectories for each element $i\in J$  to a destination state. The goal is to select  $\I \in \F$ to maximize $\tilde{f}(\I) $.
\end{definition}

Let $\Sur(\trajProf)\overset{\Delta}{=} \max_{\I \in \F} \{ \val(\I, \bYmax(\trajProf))\}$ denote the optimum value of the surrogate problem for trajectory profile $\trajProf$.
We now upper bound the optimum utility  in the \MPOI world.  Our proof borrows ideas from the ``prevailing reward argument'' in~\cite{DTW-SIDMA03}.

\begin{restatable}{lemma}{boundAdapMax} \label{lem:boundAdapMax}
For a \UtiMax problem with objective $\val$ and packing constraints $\F$, let  $\OPT$ denote the utility of the optimal strategy. Then,
\[ \textstyle {\OPT \quad \leq \quad \E_{\trajProf} [ \Sur(\trajProf)] \quad = \quad \E_{\trajProf} \big[\max_{\I \in \F} \{ \val(\I, \bYmax(\trajProf))\}  \big], }
\]
where the expectation is  over a random trajectory profile $\trajProf$ that has every Markov system reaching a  destination state.
\end{restatable}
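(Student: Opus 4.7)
My plan is to prove the lemma via a prevailing-cost exchange argument in the spirit of~\cite{DTW-SIDMA03}. Fix an arbitrary \MPOI strategy $\ALG$. First I would extend $\ALG$ to a strategy $\ALG'$ that, after $\ALG$ halts, continues every Markov system $\MS_i$ to some destination state using an arbitrary default continuation; prices of states visited during this extension are \emph{not} charged. The resulting random trajectory profile $\trajProf$ then has every chain reaching its destination set $T_i$, while the selected set $\I(\trajProf)$ and the total price paid are identical to those under $\ALG$.

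The heart of the argument is a per-chain inequality: for every $i \in J$,
\[
\E\Big[\reward_i^{\dest(\traj_i)} \cdot \mathbf{1}[i \in \I(\trajProf)] - \sum_{u \in \traj_i^{\ALG}} \price_i^u\Big] \;\leq\; \E\big[\Ymax(\traj_i) \cdot \mathbf{1}[i \in \I(\trajProf)]\big],
\]
where $\traj_i^{\ALG}$ is the portion of $\traj_i$ actually traversed by $\ALG$. The intuition is that the grade $\tau_i^v$ is the unique penalty at which the single-chain game $\MS_i(\tau_i^v)$ is fair from state $v$: no strategy can extract more expected utility from chain $i$ than the minimum grade seen along its trajectory, which is precisely $\Ymax(\traj_i)$. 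Because the Markov systems are independent, I would condition on all plays by $\ALG$ on other chains and analyze its play on $\MS_i$ as a standalone adapted stopping strategy; the inequality then follows by applying optional stopping to the process tracking net value extracted minus $\tau_i^{v_k}$ times the indicator that the destination has not yet been reached. For chains $i$ that are not selected the inequality holds trivially, since the right-hand side vanishes and the left-hand side, which is minus the price paid, is non-positive.

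Summing the per-chain inequality over $i$ and using semiadditivity $\val(\I, \vec{x}) = \sum_{i \in \I} x_i + h(\I)$,
\[
\Umax(\ALG) = \E\Big[h(\I(\trajProf)) + \sum_{i \in \I(\trajProf)} \reward_i^{\dest(\traj_i)} - \sum_{i \in J}\sum_{u \in \traj_i^{\ALG}} \price_i^u\Big] \leq \E\Big[h(\I(\trajProf)) + \sum_{i \in \I(\trajProf)} \Ymax(\traj_i)\Big].
\]
The right-hand side equals $\E[\val(\I(\trajProf), \bYmax(\trajProf))]$, and since $\I(\trajProf) \in \F$ it is at most $\E[\max_{\I' \in \F} \val(\I', \bYmax(\trajProf))] = \E[\Sur(\trajProf)]$, which gives the claimed bound.

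The main obstacle is rigorously justifying the per-chain inequality under interleaving: $\ALG$ may advance chain $i$ a few steps, switch to other chains, and return to $i$ based on what it observed elsewhere. The resolution relies on independence of the Markov systems, which lets one ``straighten out'' $\ALG$'s adaptive plays on chain $i$ into an adapted stopping rule on $\MS_i$ alone; the DTW-style indifference property of the grade then bounds the extracted expected value by the prevailing cost.
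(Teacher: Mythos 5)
Your proposal is correct and takes essentially the same route as the paper: it bounds the utility any strategy extracts from each element by that element's prevailing cost (using the fairness/indifference property of the grade-penalized game from Dumitriu--Tetali--Winkler), and then uses feasibility of the selected set $\I(\trajProf)\in\F$ to pass to the surrogate optimum $\E_{\trajProf}[\Sur(\trajProf)]$. The only cosmetic difference is that you phrase the fairness per chain, conditioning on the independent other Markov systems, whereas the paper runs \OPT inside a single aggregate ``teasing game'' and invokes the same DTW-style fair-game lemma; the underlying argument is identical.
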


We prove Lemma~\ref{lem:boundAdapMax} in \S\ref{misproof:boundAdapMax}.

\subsubsection{Designing an Adaptive Strategy  Using a Frugal Algorithm.}
\label{subsec:FrugalAlgMax}

A \frugal algorithm selects elements one-by-one and irrevocably. Besides greedy algorithms, its definition also captures ``non-greedy'' algorithms such as  primal-dual algorithms that do not have the reverse-deletion step~\cite{Singla-SODA18}.

\begin{definition}[\frugal Packing  Algorithm]\label{defn:frugalPacking}
For a combinatorial optimization problem on universe $J$ in the \FreeInfo  world with packing constraints $\F \subseteq 2^J$ and objective $f: 2^J \rightarrow \reals$, we say Algorithm \A is \frugal if  there exists a \emph{marginal-value}  function $g(\Y,i,y):\reals^J \times J \times \reals \rightarrow \reals$ that is increasing in $y$, and for which the pseudocode is given by Algorithm~\ref{alg:frugalPacking}. Note that this algorithm always returns a feasible solution if  $\emptyset \in \F$.
\setlength{\intextsep}{5pt}
\begin{algorithm}
\caption{\frugal Packing Algorithm \A}
\label{alg:frugalPacking}
\begin{algorithmic}[1]
\State Start with $M=\emptyset$ and  $v_i=0$ for each element $i \in J$.
\State For each element $i\not\in M$, compute  $v_i = g( \Ym, i,Y_i)$. Let $j = \argmax_{i\not \in M ~\&~ M\cup i \in \F} \{v_i\}$. \label{alg:FrugalcomputeVi}
\State If $v_j>0$ then add $j$ into $M$ and  go to Step~\ref{alg:FrugalcomputeVi}. Otherwise, return $M$.
\end{algorithmic}
\end{algorithm}
\end{definition}
\vspace{-0.5cm}

The following lemma shows that a \frugal algorithm can be converted to a strategy with the same utility in the \MPOI world. 

\begin{restatable}{lemma}{convertFrugalAlgMax} \label{lem:convertFrugalAlgMax}
Given a  \frugal packing Algorithm~\A, there exists an adaptive strategy $\ALG_\A$ for the corresponding \UtiMax problem  in  \MPOI world with utility at least
$ \E_{\trajProf} [\val(\A(\bYmax(\trajProf)), \bYmax(\trajProf))],
$
where $\A(\bYmax(\trajProf)$ is the solution returned by  \A for objective $f(\I) = \val(\bYmax(\trajProf),\I)$.
\end{restatable}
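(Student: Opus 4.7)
The plan is to construct $\ALG_\A$ by online simulation of $\A$ and to lower-bound its utility using a grade-based per-epoch identity coupled to a hypothetical full-trajectory profile.

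\emph{Construction of $\ALG_\A$.} Maintain a proxy $Y_i$ for each element equal to the minimum grade observed along $\MS_i$'s trajectory so far (initialized to $\tau_i^{s_i}$, and set to the destination reward once $\MS_i$ is absorbed). At each step, compute $v_i = g(\Ym, i, Y_i)$ for every uncommitted~$i$ and let $j = \argmax_{i:\,M\cup\{i\}\in\F} v_i$; halt if $v_j \le 0$; commit $j$ (add to~$M$) if $\MS_j$ is already at a destination; otherwise advance $\MS_j$ by one state. Two structural properties drive the analysis: within an epoch of $\MS_j$ the proxy~$Y_j$ and hence $v_j$ are constant, so decisions change only at epoch boundaries; and because $g$ is monotone in its third argument while $Y_j$ can only decrease, the current argmax~$j$ keeps being advanced until it either commits or crosses into a new, lower-grade epoch.

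\emph{Prevailing-reward identity.} At the first state~$v$ of an epoch, $\tau^v$ equals the epoch's prevailing cost~$\tau$, so $U^v(\tau)=0$ and optimal play in the $\tau$-penalized game is exactly ``advance through this epoch and halt when it ends.'' Rearranging this zero-value statement yields
\[
\E\!\left[\textstyle\sum_{u\in\text{epoch}}\price_i^u\right] \;=\; \E\!\left[\bigl(\reward_i^{\dest(\traj_i)}-\tau\bigr)\,\mathbb{1}\{\text{destination reached within the epoch}\}\right].
\]
Telescoping over the strictly decreasing epoch grades $\tau_1>\cdots>\tau_K=\Ymax(\traj_i)$ of a full trajectory from $s_i$ to a destination then gives the clean identity $\E[\reward_i^{\dest(\traj_i)} - \sum_u \price_i^u] = \E[\Ymax(\traj_i)]$.

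\emph{Coupling with the surrogate.} Couple $\ALG_\A$'s execution with a profile $\trajProf$ in which every Markov chain is independently run to a destination, extending $\ALG_\A$'s observed prefixes using the underlying transition kernels where necessary. I will prove by induction on the commitment order that $M_\ALG = \A(\bYmax(\trajProf))$ almost surely. Both the base case and inductive step hinge on the following observation: at the moment $\ALG_\A$ commits its $t$-th element~$j$, the proxy $Y_j$ has settled to the true prevailing cost $\Ymax(\traj_j)$ (since $\MS_j$ has reached a destination), while for every uncommitted~$i$ the proxy $Y_i \ge \Ymax(\traj_i)$. Because $g$ depends on $\Y$ only through the committed-value vector~$\Ym$---which matches between $\ALG_\A$'s execution and the surrogate run of $\A$ after $t-1$ induction steps---and because $g$ is monotone in its third argument, $j$ being $\ALG_\A$'s argmax forces $j$ to also be $\A$'s $t$-th selection on $\bYmax(\trajProf)$. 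Combining the set identity with the prevailing-reward identity yields
\[
\Umax(\ALG_\A) \;\geq\; \E\!\left[\textstyle\sum_{i\in M_\ALG}\Ymax(\traj_i) + h(M_\ALG)\right] \;=\; \E_{\trajProf}[\val(\A(\bYmax(\trajProf)),\bYmax(\trajProf))].
\]

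\emph{Main obstacle.} The chief difficulty lies in the coupling step together with the bookkeeping for elements whose exploration $\ALG_\A$ begins and later pauses when the current top's $v$ drops below a competitor's. I must show both that such paused exploration does not break the inductive set identity and that its partial prices are fully amortized: each paused epoch comes with a slack $\tau_k\Pr[\text{destination reached in epoch }k]$ from the grade identity, and a careful charging across the epochs of all elements ensures that no uncharged price is left over. Threading the induction, the per-epoch identity, and this amortization simultaneously is the delicate core of the proof.
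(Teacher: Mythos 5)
Your overall architecture matches the paper's: use the minimum grade so far (the prevailing cost) as the proxy, show by induction that the committed set coincides with $\A$ run on $\bYmax(\trajProf)$ (this is the paper's Claim~\ref{claim:sameSolutionMax}, and your argmax-plus-monotonicity argument at commit times is a valid variant of it), and then convert prices into prevailing costs via a fairness argument, which the paper packages as the ``teasing game'' Lemma~\ref{claim:FairTeasingGameMax} following~\cite{DTW-SIDMA03}. Your per-epoch identity (at the first state of an epoch the grade equals the penalty, so the epoch policy has zero value in the penalized game) is the right building block for that fairness statement.

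The genuine gap is in the accounting step you yourself flag as the ``delicate core,'' and it is not merely technical. First, the telescoped identity $\E[\reward^{\dest(\traj)}-\sum_u\price^u]=\E[\Ymax(\traj)]$ is for a chain run to absorption in isolation, whereas $\ALGA$ abandons chains adaptively; what is needed is that every epoch the algorithm chooses to enter is a \emph{conditionally} fair bet, i.e.\ its expected price equals $\E[(\reward-\tau_k)\,\mathbb{1}\{\text{absorbed in that epoch}\}]$ given the information at the epoch's start, followed by an optional-stopping/linearity argument over the adaptively chosen epochs. A per-sample-path charging cannot work (a played epoch that ends without absorption is pure loss on that path), and the compensating quantity is the excess $\E[(\reward-\tau_k)\mathbb{1}\{\cdot\}]$, not the ``slack $\tau_k\Pr[\text{destination reached in epoch }k]$'' you propose. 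Second, and more importantly, this fair-bet argument by itself only yields utility $\leq \E[\sum_{i\in M}\Ymax(\traj_i)]+\E[h(M)]$, the \emph{wrong} direction; equality (hence the lower bound you need) additionally requires that the algorithm never forgoes a positive collectible, i.e.\ whenever a chain it is advancing is absorbed at a destination whose reward strictly exceeds the current prevailing cost, that element is in fact selected. This is true, but needs its own short argument (during that final epoch only this chain is advanced, so $M$, $\Ym$ and the other $v_i$ are frozen, and since $g$ is increasing in its last argument the value at the destination is at least the value at the epoch's start, so the element is still the positive argmax and is committed immediately); your induction only speaks about elements that do get committed, so it does not cover this case. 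Finally, a small slip: upon absorption the proxy should be the minimum grade including the destination state (whose grade equals its reward), i.e.\ the prevailing cost, not the destination reward itself--you use the correct quantity later in the coupling step, but as written the construction is inconsistent.
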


We prove Lemma~\ref{lem:convertFrugalAlgMax} in \S\ref{misproof:convertFrugalAlgMax}.
Finally, we can prove Theorem~\ref{thm:frugalToAdaptiveMax}.

\begin{proof}[Proof of Theorem~\ref{thm:frugalToAdaptiveMax}]
From Lemma~\ref{lem:convertFrugalAlgMax},  the utility of $\ALG_\A$ is at least
$ \E_{\trajProf} [\val(\A(\bYmax(\trajProf)), \bYmax(\trajProf))]. $
Since Algorithm~\A is an $\alpha$-approx algorithm in the \FreeInfo world, it follows
\begin{align*}
\E_{\trajProf} [\val(\A(\bYmax(\trajProf)), \bYmax(\trajProf))] \geq \frac{1}{\alpha} \cdot \E_{\trajProf} \big[\max_{\I \in \F} \{ \val(\I, \bYmax(\trajProf))\} \big].
\end{align*}
Using the upper bound on  optimal utility $\OPT \leq \E_{\trajProf} \big[\max_{\I \in \F} \{ \val(\I, \bYmax(\trajProf))\} \big]$ from Lemma~\ref{lem:boundAdapMax}, we  have  utility of $\ALG_\A$ is at least $\frac{1}{\alpha} \cdot \OPT$.
\end{proof}

\eat{\begin{proof}
  We  describe how to adapt the \frugal Algorithm~\A to an adaptive strategy $\ALGA$ in the \MPOI world. $\ALGA$ uses the grade $\tau$ as proxy for $\bYmax$, since $\bYmax$ is known only when the Markov systems reach their destination states. More specifically, at each moment when the \frugal Algorithm~\A is trying to evaluate the marginal-value function for each element, instead of using the $\bYmax$ value for each element, which we may not yet know at the moment, the strategy uses the $\tau$ values to compute the marginal. For the element chosen by~\A, the corresponding Markov system will be advanced one more step. A more specific description of our algorithm $\ALGA$ is given Algorithm~\ref{alg:frugalToAdaptiveMax}.
Here $\Ymmax$ for a set $M \subseteq J$ is defined as the list of $\bYmax$ values that are in the set $M$.

\begin{algorithm}
\caption{Algorithm \ALGA for \UtiMax in \MPOI}
\label{alg:frugalToAdaptiveMax}
\begin{algorithmic}[1]
\State Start with $M=\emptyset$ and  $v_i=0$ for all elements $i$.
\State For each element $i\not\in M$, set $g(\Ymmax , i, \tau^{u_i}_i)$ where $u_i$ is the current state of $i$. \label{alg:computeViMax}
\State Consider the element $j = \argmax_{i\not \in M~\&~ M\cup i \in \F} \{v_i\}$.
\State If $v_j>0$, then if $\MS_j$ is not in a destination state then proceed $\MS_j$ by one step and go to Step~\ref{alg:computeViMax}.
Else, when $v_j>0$ but  $\MS_j$ is  in a destination state $t_j$, select $j$ into $M$ and go to Step~\ref{alg:computeViMax}.
\State Else, if every element $i\not\in M$ has $v_i \leq 0$ then return set $M$.
\end{algorithmic}
\end{algorithm}


In the following Claim~\ref{claim:sameSolutionMax}, we argue that for any trajectory profile $\trajProf$, running $\ALGA$ in \MPOI returns the same set of elements as running $\A$ for $\bYmax(\trajProf)$.
\begin{claim}
\label{claim:sameSolutionMax}
For any trajectory profile $\trajProf$,
the solution returned by running Algorithm~\ref{alg:frugalToAdaptiveMax} in the \MPOI world is the same as   the solution by  Algorithm~\A on $\bYmax(\trajProf)$.
\end{claim}
Before proving Claim~\ref{claim:sameSolutionMax}, we use it to prove Lemma~\ref{lem:convertFrugalAlgMax} by showing that the utility of Algorithm~\ref{alg:frugalToAdaptiveMax} in the \MPOI world is at least
\[\E_{\trajProf} [\val(\A(\bYmax(\trajProf)), \bYmax(\trajProf))].
\]

By Claim~\ref{claim:sameSolutionMax}, the value due to the set function $h$ is the same for both algorithms. So without loss of generality, assume $h$ is always 0. We consider the teasing game $\TG$ as defined in Lemma~\ref{claim:FairTeasingGameMax}.
By definition, $g$ is an increasing function of the last parameter $y$.
Since grade is used as that parameter and the grade of each state visited during an epoch is at least the grade of the initial state of that epoch,
 it follows that once Algorithm~\ref{alg:frugalToAdaptiveMax} starts to play a Markov system $\MS_i$, it will not switch  before finishing an epoch. Therefore, by Lemma~\ref{claim:FairTeasingGameMax}, Algorithm~\ref{alg:frugalToAdaptiveMax} plays a fair game. So the expected cost that Algorithm~\ref{alg:frugalToAdaptiveMax} pays is the same as its expected utility from playing the Markov systems. However,  Claim~\ref{claim:sameSolutionMax} gives the expected cost payed by Algorithm~\ref{alg:frugalToAdaptiveMax} is the same as the utility of running Algorithm~\A in the \FreeInfo world, i.e., $\E_{\trajProf} [\val(\A(\bYmax(\trajProf)), \bYmax(\trajProf))]$. Hence, the utility of running Algorithm~\ref{alg:frugalToAdaptiveMax} at least $\E_{\trajProf} [\val(\A(\bYmax(\trajProf)), \bYmax(\trajProf))]$.
\end{proof}


\begin{proof}[Proof of Theorem~\ref{thm:frugalToAdaptiveMax}]
From Lemma~\ref{lem:convertFrugalAlgMax}, we know that the utility of $\ALG_\A$ is at least
$$ \E_{\trajProf} [\val(\A(\bYmax(\trajProf)), \bYmax(\trajProf))]. $$
Since Algorithm~\A is an $\alpha$-approximation algorithm in the \FreeInfo world, it follows
\begin{align*}
\E_{\trajProf} [\val(\A(\bYmax(\trajProf)), \bYmax(\trajProf))] \geq \frac{1}{\alpha} \cdot \E_{\trajProf} \left[\max_{\I \in \F} \{ \val(\I, \bYmax(\trajProf))\} \right].
\end{align*}
Now using the upper bound on  the optimal utility $\OPT \leq \E_{\trajProf} \left[\max_{\I \in \F} \{ \val(\I, \bYmax(\trajProf))\} \right]$ from Lemma~\ref{lem:boundAdapMax}, we  have  utility of $\ALG_\A$ is at least $\frac{1}{\alpha} \cdot \OPT$.
\end{proof}

It remains to prove the missing Claim~\ref{claim:sameSolutionMax} in the proof of Lemma~\ref{lem:convertFrugalAlgMax}.
\begin{proof}[Proof of Claim~\ref{claim:sameSolutionMax}]
Suppose we fix a trajectory profile $\trajProf$ where each Markov system reaches some destination state. We prove the claim by induction on the number of elements already selected into the set $M$. Suppose the set of elements selected into $M$ is the same by running the two algorithms until now. We  show that the next element selected by the algorithms into $M$ is the same.

Assume for the purpose of contradiction that the next element picked by \A is $j$ but the next element picked by Algorithm~\ref{alg:frugalToAdaptiveMax} is $i\neq j$. By the definition of Algorithm~\A,
\begin{align}
\label{eqn:DefnOfJ}
j=\argmax_{i' \notin M} \left\{ g \left( \Ymmax(\trajProf), i',\Ymax_{\traj_{i'}} \right) \right\}.
\end{align}
where $\traj_{i}'$ denotes the trajectory of $\MS_{i'}$ in $\trajProf$.
Now we look at the trajectory $\traj_i$, it follows that the prevailing cost $\Ymax_{\traj_i}$ is non-increasing over this trajectory and is equal to $\Ymax_{\traj_i}$ when $\MS_i$ reaches the destination state. We look at the last moment $t_0$ when the prevailing cost of $\MS_i$ decreases. Consider the first moment $t_1$ after $t_0$ that our Algorithm~\ref{alg:frugalToAdaptiveMax} decides to play $\MS_i$ (but has not actually played $\MS_i$ yet). It follows that the prevailing cost of $\MS_i$ at moment $t_1$ is exactly the same as $\Ymax_{\traj_i}$ and also the grade $\tau_i^{u_i}$ of the current state $u_i$. Denote $\Ymax_{\traj'_j}$ the prevailing cost of $\MS_j$ and $u_j$ the state of $\MS_j$ at moment $t_1$. Then we have $\Ymax_{\traj'_j} \geq \Ymax_{\traj_j}$ because the prevailing cost of $\MS_j$ is also non-increasing. By the definition of $t_1$, one has
$$
g \left( \Ymmax(\trajProf), i,\Ymax_{\traj_i} \right) = g \left( \Ymmax(\trajProf), i, \tau_i^{u_i} \right) > g \left( \Ymmax(\trajProf), j, \tau_j^{u_j} \right) \geq g \left( \Ymmax(\trajProf), j,\Ymax_{\traj'_j} \right).
$$
However, since $g$ is increasing in the last parameter, it follows that
$$g \left( \Ymmax(\trajProf), j,\Ymax_{\traj'_j} \right) \geq g \left( \Ymmax(\trajProf), j,\Ymax_{\traj_j} \right),$$
which implies
\begin{gather*}
  g \left( \Ymmax(\trajProf), i,\Ymax_{\traj_i} \right) > g \left(
    \Ymmax(\trajProf), j,\Ymax_{\traj_j} \right).
\end{gather*}
This contradicts with the definition of $j$ in Eq~\eqref{eqn:DefnOfJ}.
\end{proof}}

In \S\ref{sec:DisutilityMinimization}, a similar approach is used for the \DisMin problem with semi-additive function.
This shows that for both \UtiMax or \DisMin problem with semi-additive function, a $\frugal$  algorithm can be transformed from \FreeInfo to \MPOI world while retaining its performance.


\IGNORE{
\subsection{Our Techniques}
\label{subsec:techniques}

We first sketch how a \frugal algorithm solves a packing problem in the \FreeInfo world;
the story for a covering problem is very similar.
Recall that in a \FreeInfo packing problem, the goal
is to select a set~$\I$ of elements to maximize our total value.
A \frugal algorithm does this
by repeatedly selecting elements to add to~$\I$ one at a time.
Specifically, for each element~$i$,
a \frugal algorithm computes a \emph{marginal value} of~$i$
based on the element's value~$y_i$  and the already-selected elements. It
then selects the element~$j$ of maximal marginal value.
A key property of \frugal algorithms is that such selections are \emph{irrevocable}:
once an element is selected, it will certainly be in the output set~$M$.

How might we take a \frugal algorithm,
originally designed for the \FreeInfo world,
and apply it to the \MPOI world?
Our general approach is the following.
  \begin{itemize}
\item   Instead of using a fixed value~$y_i$ for element~$i$,
  we use a \emph{time-varying ``proxy'' value}
  that depends on the state of $i$'s Markov chain.
\item   Instead of immediately selecting the element~$j$ of greatest marginal value,
  we \emph{advance $j$ one step},
  selecting $j$ only if it is in a destination state (i.e., \prepared).
  \end{itemize}

This outline leaves us with an important question:
what proxy value should we use in place of $y_i$ for element~$i$?
Simple heuristics,
such as using $i$'s expected value minus expected probing price,
are suboptimal even for very simple Markov systems and packing constraints.

The key to our adaptation of \frugal algorithms
is to use the right proxy value for each element~$i$
in place of the fixed value~$y_i$ used in the \FreeInfo algorithm.
This proxy is  called the \emph{grade}, written $\tau_i^u$,
for  state $u$ of Markov system~$\MS_i$.
The adapted algorithm then has a very simple form:
  pretend each element~$i$ has value~$y_i$ equal to its current grade~$\tau_i^u$.
  If the \frugal algorithm  selects element~$i$ next,
  then either advance~$i$ one step if it is not \prepared, or
   select  $i$ if it is \prepared.

To define the grade of a state, we consider that Markov system in isolation.
Roughly,  the grade denotes the maximum penalty
 we can put at the destinations, while ensuring that it is optimal to advance
 this Markov system at least once.
(For those familiar with the literature, this grade is closely
related to the well-known Gittins index.)
While our definition of the  grade
is an extension of similar past definitions
\cite{DTW-SIDMA03,Weber-Prob92},
it was an open problem to
 combine these ideas with combinatorial constraints---indeed,
 it had been unclear what the right algorithm should be, and how to argue
 about such an algorithm.
We manage to give a simple efficient algorithm
 for such a generalization. This is the main conceptual
 contribution of this part of the paper.

}


%


\section{Robustness in Model Parameters}
\label{subsubsec:RobustnessModel}

In practical applications,
the parameters of Markov systems (i.e., transition probabilities, values, and prices)
are not known exactly but are \emph{estimated} by statistical sampling.
In this setting,
the \emph{true parameters}, which govern how each Markov system evolves,
differ from the estimated parameters that the algorithm uses to make decisions.
This raises a natural question:
how well does an adapted \frugal algorithm do when
the true and the estimated parameters differ?
We would hope to design a \emph{robust} algorithm,
meaning small  estimation errors
cause only small  error in the utility objective.

In the important special case where  the Markov chain corresponding to each element is formed by a \emph{directed acyclic graph} (\DAG),
an adaptation of our strategy in Theorem~\ref{thm:frugalToAdaptiveMax} is robust.  This \DAG assumption turns out to be necessary as  similar results do not hold for general Markov chains (see Appendix~\ref{sec:DAGNecessaryforRobust}).
In particular, we prove the following generalization of Theorem~\ref{thm:frugalToAdaptiveMax} under the \DAG assumption.

\begin{theorem}[\textnormal{Informal statement}]
\label{thm:RobustInformal}
If  there exists an  $\alpha$-approximation \frugal algorithm $\A$ ($\alpha \geq 1$) for a packing  problem with a semiadditive objective function, then it suffices to estimate the true model parameters of a \DAG-\MPOI game within an additive error of $\epsilon/\poly$, where $\poly$ is some polynomial in the size of the input, to design a strategy with utility at least  $\frac{1}{\alpha} \cdot \OPT - \epsilon$, where \OPT is the utility  of the optimal policy that  knows all the \emph{true}  model parameters.
\end{theorem}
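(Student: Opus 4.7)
The plan is to run the adapted frugal algorithm from Theorem~\ref{thm:frugalToAdaptiveMax} but with the grades $\hat{\tau}_i^v$ computed from the \emph{estimated} Markov systems $\hat{\MS}_i$ in place of the true ones $\MS_i$. Call this strategy $\ALG$, and write $\hat{\OPT}$ for the utility of the optimal policy in the estimated world. Denoting by $\mathsf{U}_{\mathrm{tr}}$ and $\mathsf{U}_{\mathrm{es}}$ the expected utility of a strategy evaluated under the true and estimated parameters, respectively, the argument is the three-way chain
\[\mathsf{U}_{\mathrm{tr}}(\ALG) \;\geq\; \mathsf{U}_{\mathrm{es}}(\ALG) - \eta \;\geq\; \tfrac{1}{\alpha}\hat{\OPT} - \eta \;\geq\; \tfrac{1}{\alpha}\bigl(\OPT - \eta\bigr) - \eta,\]
where the middle inequality is Theorem~\ref{thm:frugalToAdaptiveMax} applied inside the estimated world (which is legitimate because $\hat{\tau}$ is, by definition, the grade function of $\hat{\MS}$). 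Choosing the estimation error $\delta = \epsilon/\poly$ small enough to force $2\eta \leq \epsilon$ will yield $\mathsf{U}_{\mathrm{tr}}(\ALG) \geq \OPT/\alpha - \epsilon$, since $\alpha \geq 1$.

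The workhorse is a \emph{perturbation lemma} asserting that for every adaptive strategy $\sigma$, $|\mathsf{U}_{\mathrm{tr}}(\sigma) - \mathsf{U}_{\mathrm{es}}(\sigma)| \leq \eta = \poly \cdot \delta$, with the polynomial depending on $n$, $|V|$, and the magnitudes of the rewards and prices. I would prove this by coupling the two executions of $\sigma$ step by step. Because each Markov chain is a \DAG, every trajectory has length at most $|V|$, so across the $n$ elements the total number of random transitions is bounded by $n|V|$; a hybrid argument that swaps one transition at a time from the true to the estimated kernel then bounds the total variation distance between the joint trajectory distributions by $n|V|\delta$. Since the per-state rewards and prices themselves differ pointwise by at most $\delta$, the accumulated utility along any single trajectory differs by at most $n|V|\delta$, and integrating against either distribution leaves a shift of $\poly\cdot\delta$. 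With this lemma in hand, each link of the chain is immediate: the first and third inequalities are direct applications to $\ALG$ and to the true-optimal policy transplanted into the estimated world (which certifies $\hat{\OPT} \geq \OPT - \eta$), and the middle inequality is Theorem~\ref{thm:frugalToAdaptiveMax} in the estimated world.

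The main obstacle is proving the perturbation lemma in full generality, because $\sigma$ can be history-dependent and may branch arbitrarily on the observed transitions; the coupling must track the joint state across all $n$ chains while mirroring $\sigma$'s choices, and the error has to be bookkept simultaneously from transition probabilities, rewards, and prices. The \DAG assumption is indispensable here: with cycles, an $O(\delta)$ perturbation of a transition probability can change expected hitting times by an arbitrarily large amount, blowing up both the prices paid and, through the fixed-point definition of $U^v(\tau)$, the computed grades as well. This is exactly the obstruction pinned down in Appendix~\ref{sec:DAGNecessaryforRobust}, and it also explains the $\poly$ dependence hidden in the required accuracy $\epsilon/\poly$, which scales at least linearly in $n$, $|V|$, and the reward/price magnitudes.
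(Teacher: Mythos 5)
Your argument is correct in its essentials, but it follows a genuinely different route from the paper. The paper does not use a global perturbation/coupling lemma at all: it first shows (Lemma~\ref{lem:WellEstGradeMax}) that estimated transition probabilities give grades accurate to an additive $O(D^2BP\epsilon)$, and then analyzes a \emph{modified} strategy $\ALGHA$ (Algorithm~\ref{alg:FrugalRobustToAdaptiveMax}) directly in the true world: each time a chain is advanced, its estimated grade is shifted upward by $\epsilon/2kD_i$, which guarantees that the strategy always traverses an entire true epoch and therefore still plays the teasing game of Claim~\ref{claim:FairTeasingGameMax} fairly (Claim~\ref{claim:POIDisutilityRobustMax}); the shifted prevailing costs stay within $\epsilon/2k$ of the true ones per element (Claim~\ref{claim:RobustnessAlgMax}), so the loss against the surrogate bound of Lemma~\ref{lem:boundAdapMax} is at most $\epsilon$ \emph{per trajectory profile}. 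Your approach instead applies Theorem~\ref{thm:frugalToAdaptiveMax} as a black box inside the estimated world and transfers via a hybrid/coupling bound on the total-variation distance between the true and estimated executions of an arbitrary adaptive strategy; this is legitimate and arguably more modular, it sidesteps the upward-shifting device and the fairness subtlety entirely, and it does not need the paper's auxiliary assumption that nonzero transition probabilities are at least $1/P$ (which the paper uses to turn probability error into grade error). What it costs you is a cruder error bound: your $\eta$ is (TV distance) $\times$ (maximum utility magnitude), so besides Assumption~\ref{assump:BoundedInput} on prices and rewards you implicitly need the nonadditive part $h$ of the semiadditive objective to be polynomially bounded as well --- in the paper's per-trajectory comparison the $h$-term cancels exactly because the same set is selected, whereas in your coupling it only cancels on the event that the coupling succeeds. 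With that additional (mild, and in the spirit of Assumption~\ref{assump:BoundedInput}) boundedness hypothesis made explicit, your three-way chain $\mathsf{U}_{\mathrm{tr}}(\ALG) \geq \mathsf{U}_{\mathrm{es}}(\ALG) - \eta \geq \frac{1}{\alpha}\widehat{\OPT} - \eta \geq \frac{1}{\alpha}(\OPT - \eta) - \eta$ is sound and yields the informal statement, with a somewhat larger (but still polynomial) accuracy requirement than the paper's $\epsilon/4kD_i$ on the grades.
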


Specifically, our strategy $\ALGHA$ for Theorem~\ref{thm:RobustInformal} is obtained from the strategy in Theorem~\ref{thm:frugalToAdaptiveMax} by making use of the following  idea: each time we advance an element's Markov system, we slightly increase the estimated grade of every state in that Markov system.
This ensures that whenever we advance a Markov system, we  advance through an entire epoch and remain optimal in the ``teasing game".

Our analysis of $\ALGHA$ works roughtly as follows.
We first show that close estimates of the model parameters of a Markov system can be used to  closely estimate the grade of each state.
We can therefore assume that close estimates of all grades are given as input.
Next we define the ``shifted'' prevailing cost corresponding to the ``shifted'' grades. This allows us to equate the utility of $\ALGHA$ by the utility of running $\A$ in the ``modified'' surrogate problem where the input to $\A$ is the ``shifted'' prevailing costs instead of the \emph{true} prevailing costs.
Finally, we prove that the ``shifted'' prevailing costs are close to the real prevailing costs and thus the ``modified'' surrogate problem is close to the surrogate problem.
This allows us to bound the utility of running $\A$ in the ``modified'' surrogate problem by the optimal strategy to the surrogate problem.
Combining with Lemma~\ref{lem:boundAdapMax} finishes the proof of Theorem~\ref{thm:RobustInformal}.

Similar arguments extend to prove the analogous result for \DisMin.


We  formally state our main theorem and the parameters on which it depends in Section~\ref{subsec:ParametersAssumptionsDAG}.
Section~\ref{subsubsec:ClosenessImpliesGradeMax} shows that close estimates of transition probabilities can be used to obtain close estimates of the grades.
In Section~\ref{subsubsec:FrugalRobusttoMax}, we use these estimated grades  to transform a \frugal algorithm  into a robust adaptive algorithm for  \DAG-\UtiMax.
Similar arguments can be used to  obtain the corresponding results for \DAG-\DisMin  (we omit this proof).

\subsection{Main Results and Assumptions}
\label{subsec:ParametersAssumptionsDAG}

We first explicitly define the \emph{input size} of \DAG-\UtiMax as follows.
\begin{enumerate}[label=(\roman*)]
\item $n$ is the number of Markov systems.
\item $k$ is the maximum number of elements in a feasible solution, i.e., $k\overset{\Delta}{=} \max_{\I \in \F} |\I|$.
\item $D$ is the maximum depth of any \DAG Markov system.
\end{enumerate}

Denote $B$ an upper bound on all input prices and values, i.e., $\forall i, \forall \pi \in \bprice_i, \forall r \in \breward_i$, we have $|\pi| \leq B, |r|\leq B$. We make the following assumption.
\begin{assumption}\label{assump:BoundedInput}
The upper bound $B$ is polynomial in $n,k$, and $D$.
\end{assumption}
Such an assumption turns out to be necessary (see Appendix~\ref{sec:BoundedInput}).
We now state our main theorem of this section.

\begin{restatable}{theorem}{FrugalRobusttoAdaptiveMax} \label{thm:FrugalRobusttoAdaptiveMax}
Consider a \DAG-\UtiMax problem with a semiadditive objective  and satisfying Assumption~\ref{assump:BoundedInput}. Suppose there exists an $\alpha$-approximation \frugal algorithm in the \FreeInfo world.
If each input parameter is known to within an additive error of $\epsilon/\poly$, where $\poly$ is some polynomial in  $n,k$, and $D$, then there exists an adaptive algorithm \ALGH with utility at least
$$
\frac{1}{\alpha} \cdot \OPT - \epsilon,
$$
where \OPT is the utility of the optimal policy that exactly knows  the true input parameters.
\end{restatable}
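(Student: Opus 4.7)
The plan is to follow the three-step outline sketched immediately before the theorem. The goal is to construct an adaptive strategy $\ALGHA$ from the adaptation $\ALGA$ of Theorem~\ref{thm:frugalToAdaptiveMax} that uses only the \emph{estimated} model parameters but whose utility is close to $\frac{1}{\alpha}\OPT$. I will break the proof into (i) grade stability, (ii) definition of $\ALGHA$ via a shift trick, and (iii) reduction to the surrogate problem with perturbed prevailing costs.

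First I would show that the grade is a polynomially Lipschitz function of the model parameters under the \DAG assumption. In a \DAG Markov system of depth $D$, the utility $U^v(\tau)$ can be computed by backward induction from the destinations in $D$ layers, and the grade $\tau^v$ is the unique zero of $U^v(\cdot)$, which is a piecewise-linear decreasing function whose slope near the zero is a product of transition probabilities along paths to destinations. Combining this with Assumption~\ref{assump:BoundedInput} (so all values, prices, and hence grades are bounded by $\poly(n,k,D)$) yields that if every input parameter is known within additive error $\epsilon/\poly(n,k,D)$, then the induced estimates $\hat\tau_i^v$ of the grades satisfy $|\hat\tau_i^v-\tau_i^v|\le \epsilon/\poly(n,k,D)$ as well. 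This is the step where the \DAG assumption is essential: in general Markov chains the utility slope can be exponentially small, which is what kills robustness (as noted in Appendix~\ref{sec:DAGNecessaryforRobust}).

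Next I would define $\ALGHA$ by emulating $\ALGA$ using the estimated grades $\hat\tau_i^v$ in place of $\tau_i^v$, but with a small ``monotonicity shift'': each time $\ALGHA$ advances a Markov system $\MS_i$ one step, it increases the working grade of every previously visited state of $\MS_i$ by a small amount $\eta$. The purpose of the shift is exactly to preserve the property used in the proof of Lemma~\ref{lem:convertFrugalAlgMax}: once $\ALGHA$ begins advancing $\MS_i$, it must continue to advance it until the end of an epoch with respect to the shifted grades, so that the resulting sequence of plays is a fair teasing game. Define the \emph{shifted prevailing cost} $\YHmax(\traj_i)$ as the minimum shifted estimated grade encountered along $\traj_i$; with $\eta$ chosen so that $\epsilon/\poly \ll \eta \ll \epsilon/(kD)$, the DAG depth bound gives $|\YHmax(\traj_i) - \Ymax(\traj_i)| \le O(D\eta)$ on every trajectory.

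The same fair-teasing-game argument used in Lemma~\ref{lem:convertFrugalAlgMax} then shows that the utility of $\ALGHA$ equals $\E_{\trajProf}[\val(\A(\bYHmax(\trajProf)),\bYHmax(\trajProf))]$, where $\A$ is run in the \FreeInfo world on the shifted prevailing costs. Since $\A$ is an $\alpha$-approximation on every fixed input, this quantity is at least $\tfrac{1}{\alpha}\,\E_{\trajProf}[\max_{\I\in\F}\val(\I,\bYHmax(\trajProf))]$. Using semiadditivity of $\val$ and $|\I|\le k$, the $O(D\eta)$ closeness of shifted to true prevailing costs implies that this ``modified surrogate'' differs from the true surrogate $\E_{\trajProf}[\Sur(\trajProf)]$ by at most $O(\alpha k D \eta) \le \epsilon$. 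Combining with the upper bound $\OPT \le \E_{\trajProf}[\Sur(\trajProf)]$ from Lemma~\ref{lem:boundAdapMax} gives the advertised bound $\mathrm{util}(\ALGHA)\ge \tfrac{1}{\alpha}\OPT - \epsilon$.

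The main obstacle I expect is the grade-stability step: giving a clean polynomial-in-$(n,k,D)$ Lipschitz constant that simultaneously controls perturbations in transition probabilities, prices, and destination values as they propagate through $D$ layers of the backward recursion, while using Assumption~\ref{assump:BoundedInput} to keep magnitudes under control. Once that Lipschitz bound is in hand, the shift-and-epoch argument and the reduction to the surrogate problem are structurally parallel to the proofs of Lemmas~\ref{lem:convertFrugalAlgMax} and~\ref{lem:boundAdapMax}, and the semiadditive error accounting is straightforward.
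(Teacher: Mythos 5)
Your steps (ii) and (iii) are essentially the paper's own proof: the counter-based ``upward shift'' of the estimated grades is exactly Algorithm~\ref{alg:FrugalRobustToAdaptiveMax}, the whole-epoch/fair-teasing-game argument is Claim~\ref{claim:POIDisutilityRobustMax} (via Claim~\ref{claim:FairTeasingGameMax} and the proof of Lemma~\ref{lem:convertFrugalAlgMax}), and the $k\cdot\epsilon/2k$ accounting against the surrogate plus Lemma~\ref{lem:boundAdapMax} is Claim~\ref{claim:RobustnessAlgMax} and Lemma~\ref{lem:FrugalRobustMax}. One small slip there: the fairness of the teasing game is with respect to the \emph{true} grades, so the identity you get is $\E_{\trajProf}[\val(\A(\bYHmax(\trajProf)),\bYmax(\trajProf))]$, not $\E_{\trajProf}[\val(\A(\bYHmax(\trajProf)),\bYHmax(\trajProf))]$; since $|\YHmax_{\traj_i}-\Ymax_{\traj_i}|\le\epsilon/2k$ and at most $k$ elements are selected, your error budget absorbs this, so it is only an imprecision.

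The genuine gap is in step (i), the grade-stability argument, which is also the step the paper handles differently (Lemma~\ref{lem:WellEstGradeMax}). Your proposed route bounds the movement of the zero of $U^v(\cdot)$ by a lower bound on its slope, and you assert the slope is ``a product of transition probabilities along paths to destinations.'' But that slope equals the probability that optimal play from $v$ ends at a destination rather than halting, and even in a \DAG of depth $D$ with all transition probabilities at least $1/P$ this can be as small as $P^{-D}$ (at each layer the policy may continue only into a child reached with probability $1/P$). A slope-based implicit-function argument therefore yields a Lipschitz constant exponential in $D$, i.e., it would require estimating parameters to accuracy $\epsilon/P^{D}$, which is not the claimed $\epsilon/\poly(n,k,D)$. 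The paper avoids this by a policy-transfer argument: take the policy \pol that is optimal (and indifferent) in the estimated world, run it unchanged in the real world, and compare the two fair penalties written as normalized sums $\sum_{\trajProf}\bigl(p_\trajProf/\sum_{\trajProf\in\SC_{win}}p_\trajProf\bigr)r_\trajProf$; each trajectory probability is a product of at most $D$ factors each known to multiplicative accuracy $1\pm O(P\epsilon)$, so the normalized weights are preserved up to $1\pm O(DP\epsilon)$ and, together with the bound $|\hat\tau^u_i|\le DB$, the grade moves by only $O(D^2BP\epsilon)$. The essential point your sketch misses is that the perturbations of the ``win probability'' and of the expected value are correlated and cancel in this ratio — a feature invisible to a worst-case slope bound — and that this argument leans on the $1/P$ lower bound on nonzero transition probabilities (justified without loss of generality in Appendix~\ref{sec:JustifyAssumptionsRobust}), an assumption your write-up never invokes explicitly.
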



To simplify the proof of Theorem~\ref{thm:FrugalRobusttoAdaptiveMax}, we also assume the following  without loss of generality (see  Appendix~\ref{sec:JustifyAssumptionsRobust} for justifications).
\begin{enumerate}[label=(\roman*)] \setcounter{enumi}{3}
\item All non-zero transition probabilities are lower bounded by $1/P$, where $P$ is a polynomial in $n,k$, and $D$.
\item We know the  prices $\bprice$ and the rewards $\breward$ exactly, i.e., the only unknown input parameters are the transition probabilities. \label{assum:rewardsExact}
\end{enumerate}


\subsection{Well-Estimated Input Parameters Imply Well-Estimated Grades}
\label{subsubsec:ClosenessImpliesGradeMax}
We  call the set of Markov systems constructed using our estimated transition probabilities the \emph{estimated world}.
The $i$th Markov system in this \emph{estimated world} is denoted by $\MCH_i=(V_i,\PH_i,s_i,T_i,\bprice_i,\breward_i)$, where $\PH_i$ contains the estimated transition probabilities.  Note,   $\bprice_i$ and $\breward_i$ are exact due to Assumption~\ref{assum:rewardsExact}.
We estimate the grade of a state by simply computing the grade of that state in the estimated world.
The following Lemma~\ref{lem:WellEstGradeMax} bounds the error in estimated grades in terms of the error in  transition probabilities.

\begin{lemma} \label{lem:WellEstGradeMax}
Consider the \DAG-\UtiMax problem satisfying the assumptions in Section~\ref{subsec:ParametersAssumptionsDAG}.
Suppose all  transition probabilities are estimated to within an additive error of $\epsilon < 1/P$,
then $\forall i, \forall u \in V_i$, the estimated grade $\tauH^u_i$ is within an additive factor of $ O(L \cdot \epsilon)$ from the real grade $\tau^u_i$, where $L= D^2 BP$.
\end{lemma}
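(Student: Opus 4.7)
The plan is to proceed in two stages: first uniformly bound $|U^v(\tau)-\widehat{U}^v(\tau)|$ over the relevant $\tau$-range via induction on DAG depth, then convert this utility closeness into grade closeness using a slope lower bound on $\widehat{U}^v$ near its zero crossing $\tauH^v$. I suppress the element subscript $i$ and write $U^v, \widehat{U}^v$ for the optimal utility functions of the $\tau$-penalized systems $\MS$ and $\MCH$ (as in \S\ref{para:grade}).

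For the \emph{uniform utility bound}, I would prove by induction on the DAG depth $d$ of $v$ that $\sup_{|\tau|\leq O(DB)}|U^v(\tau)-\widehat{U}^v(\tau)|\leq \mu_d$, where $\mu_d=O(d\cdot|V|\cdot DB\cdot\epsilon)$. The base case $d=0$ is immediate, since at destinations $U^t(\tau)=\max(0,\reward^t-\tau)=\widehat{U}^t(\tau)$ by Assumption~\ref{assum:rewardsExact}. For the inductive step, the Bellman recursion $U^v(\tau)=\max(0,-\price^v+\sum_u p_{v,u}U^u(\tau))$ and the $1$-Lipschitzness of $\max(0,\cdot)$ yield
\begin{gather*}
\bigl|U^v(\tau)-\widehat{U}^v(\tau)\bigr|\;\leq\;\sum_u\bigl|p_{v,u}-\pH_{v,u}\bigr|\cdot\bigl|U^u(\tau)\bigr|\;+\;\sum_u\pH_{v,u}\,\bigl|U^u(\tau)-\widehat{U}^u(\tau)\bigr|.
\end{gather*}
Under Assumption~\ref{assump:BoundedInput} and the DAG-depth bound, $|U^u(\tau)|\leq O(DB)$ on $|\tau|\leq O(DB)$, so the first sum is $O(|V|DB\epsilon)$ while the second is $\leq \mu_{d-1}$ by induction. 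Telescoping to depth $D$ gives $\mu_D \leq O(D^2|V|B\epsilon)$.

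For the \emph{slope argument}, note that $\widehat{U}^v(\tauH^v)=U^v(\tau^v)=0$ by definition, so the uniform bound immediately gives $|\widehat{U}^v(\tau^v)|\leq \mu_D$. Viewing $\widehat{U}^v$ as a pointwise maximum over policies $\sigma$ of the linear functions $\widehat{V}_\sigma-\widehat{q}_\sigma\,\tau$ (with $\widehat{V}_\sigma = \E[\text{reward}-\text{prices}]$ and $\widehat{q}_\sigma=\Pr[\text{stop at destination}]$), the function $\widehat{U}^v$ is continuous, non-increasing, convex, and piecewise linear in $\tau$. Hence $|\tau^v - \tauH^v|$ is at most $\mu_D$ divided by a lower bound on the magnitude of $d\widehat{U}^v/d\tau$ on the interval between $\tau^v$ and $\tauH^v$. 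I would show this slope is at least $1/P$ by exhibiting an active policy $\sigma$ on that interval with $\widehat{q}_\sigma \geq 1/P$, namely the Gittins-style rule ``play at $v$ and every strictly-higher-grade successor, halt otherwise,'' whose final step into a destination uses a transition of probability $\geq 1/P$ by the assumption that all nonzero transitions are lower-bounded by $1/P$. Combining the two stages then yields $|\tau^v-\tauH^v|\leq P\cdot\mu_D = O(D^2 BP\cdot \poly\cdot \epsilon)$, matching $L=D^2 BP$ after absorbing $|V|$ into $\poly$.

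The \emph{main obstacle} is the slope lower bound of $1/P$: in the worst case, the Gittins-optimal $\widehat{q}_\sigma$ can be as small as $1/P^D$, reflecting long low-probability paths from $v$ to any destination. The careful argument must exploit Assumption~\ref{assump:BoundedInput}---which forces $|\tau^v|\leq O(DB)$ to be polynomial and thereby restricts how degenerate the optimal stopping rule can be---to rule out the regime where $\widehat{q}_\sigma$ is exponentially small, and recover the $1/P$ slope needed for the polynomial conclusion. An analogous argument should then handle \DAG-\DisMin by symmetry.
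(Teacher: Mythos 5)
Your Stage~1 (Bellman induction giving a uniform additive bound on $|U^v(\tau)-\widehat{U}^v(\tau)|$) is fine, but Stage~2 has a genuine gap, and you have essentially located it yourself: the slope of $\widehat{U}^v$ between $\tau^v$ and $\tauH^v$ equals the probability that the fair policy terminates by playing at a destination, which is a sum of products of up to $D$ transition probabilities; it can truly be as small as $P^{-D}$, and the argument ``the final step into a destination has probability $\geq 1/P$'' only bounds the last factor of one such product, not the product. Assumption~\ref{assump:BoundedInput} does not rescue this, since it bounds prices and rewards, not stopping probabilities. Concretely, take a chain $v_1\to v_2\to\cdots\to v_D\to t$ in which each $v_j$ moves to the next state with probability $1/P$ and otherwise to a zero-reward destination, all prices equal a tiny $\delta>0$, and $r^t=B$: all assumptions of \S\ref{subsec:ParametersAssumptionsDAG} hold, the grade of $v_1$ is close to $B$ (so polynomially bounded), yet at the zero crossing the optimal policy halts at the zero-reward destinations and its win probability is about $P^{-D}$. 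With that slope, your two stages only give $|\tau^v-\tauH^v|\le P^{D}\mu_D$, exponentially weaker than the claimed $O(D^2BP\epsilon)$. The deeper problem is structural: in the small-slope regime the true utility error at the relevant penalties is itself proportionally small, because estimation errors enter \emph{multiplicatively} with the trajectory probabilities; a uniform additive bound on $|U-\widehat{U}|$ discards exactly this correlation, so no slope lower bound can recover the polynomial guarantee from it.

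The paper's proof avoids slopes entirely. Fix the policy $\pol$ that is fair in the estimated game at penalty $\tauH^u_i$ (it advances at least once and has expected utility $0$), run the \emph{same} policy in the real game, and let $\tau_{fair}$ be the penalty at which $\pol$ breaks even there; then $\tau^u_i\ge\tau_{fair}$, and the symmetric argument gives the other direction. Both quantities are ratios over the same trajectory set with the same per-trajectory utilities:
\begin{align*}
\tauH^u_i=\sum_{\trajProf\in\SC}\frac{\pH_\trajProf}{\sum_{\trajProf'\in\SC_{win}}\pH_{\trajProf'}}\,r_\trajProf,
\qquad
\tau_{fair}=\sum_{\trajProf\in\SC}\frac{p_\trajProf}{\sum_{\trajProf'\in\SC_{win}}p_{\trajProf'}}\,r_\trajProf .
\end{align*}
Each trajectory probability is a product of at most $D$ factors, each known to within a multiplicative $(1\pm O(P\epsilon))$, so every normalized weight is within $(1\pm O(DP\epsilon))$ of its estimate; the common---possibly exponentially small---normalizer $\sum_{\SC_{win}}\pH_{\trajProf'}$ cancels, and together with $|\tauH^u_i|\le DB$ this yields the additive $O(D^2BP\epsilon)=O(L\epsilon)$ bound with no lower bound on the stopping probability. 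To salvage your outline you would have to replace the additive Stage-1/Stage-2 split by this kind of per-trajectory multiplicative comparison (equivalently, bound $|U-\widehat{U}|$ relative to the fair policy's stopping probability rather than uniformly).
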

\begin{proof}
\eat{Consider an arbitrary state $u$ in the $i$th Markov system.
Denote $\tauH^u_i$ the estimated grade of $u$ (i.e., the grade of $u$ in the estimated world).}
We  show below that $\tau^u_i \geq \tauH^u_i - L \cdot \epsilon$.
A symmetrical argument shows $\tauH^u_i \geq \tau^u_i - L \cdot \epsilon$, which finishes the proof of this lemma.


We consider the Markov game $\GH_u$ defined in Section~\ref{para:grade} in the estimated world.
By definition, there exists an optimal policy \pol that advances the chain at least one more step and achieves an expected utility of 0.
Also consider the Markov game $G_u$ in the real world and apply \pol in $G_u$.
Notice \pol might be sub-optimal in $G_u$ and might therefore obtain a negative expected value.
Let $\tau_{fair}$ be the cost $\tau$ in $G_u$ such that \pol obtains an expected value of 0.
It follows that $\tau^u_i \geq \tau_{fair}$.
It therefore suffices to show that $\tau_{fair} \geq \tauH^u_i - L \cdot \epsilon$.

Denote the set of trajectories when applying \pol (in either world) by $\SC$ and those in which the item is picked by $\SC_{win}$.
Denote $p_\trajProf$ the probability of a trajectory $\trajProf \in \SC$ in the real world
and $\pH_\trajProf$ the probability of it in the estimated world.
Let $r_\trajProf$ be the utility of $\trajProf$ (as defined for \UtiMax by ignoring the cost $\tau$) in either world.
It follows that
$$
\tau_{fair} = \frac{1}{\sum_{\trajProf \in \SC_{win}} p_{\trajProf}} \cdot \sum_{\trajProf \in \SC} \left ( p_\trajProf \cdot r_\trajProf \right)
= \sum_{\trajProf \in \SC} \left ( \frac{p_\trajProf}{\sum_{\trajProf \in \SC_{win}} p_{\trajProf}} \cdot r_\trajProf \right),
$$
and that
$$
\tauH^u_i = \frac{1}{ \sum_{\trajProf \in \SC_{win}}  \pH_{\trajProf} } \cdot \sum_{\trajProf \in \SC} \left( \pH_\trajProf \cdot r_\trajProf \right)
= \sum_{\trajProf \in \SC} \left( \frac{ \pH_\trajProf}{\sum_{\trajProf \in \SC_{win}}  \pH_{\trajProf} } \cdot r_\trajProf \right).
$$
Since each transtion probability is lower bounded by $1/P$, it is estimated to within a multiplicative error of $\left( 1 \pm O(P\epsilon) \right)$.
Since $p_{\trajProf}$ and $\pH_{\trajProf}$ can be written as the product of at most $D$ probabilities, each term $\frac{p_\trajProf}{\sum_{\trajProf \in \SC_{win}} p_{\trajProf}}$ is within a multiplicative error of $\left( 1 \pm O(D P \epsilon) \right)$ from $\frac{ \pH_\trajProf}{\sum_{\trajProf \in \SC_{win}}  \pH_{\trajProf} }$.
It follows that $\tau_{fair}$ is within a multiplicative factor of $\left( 1 \pm O(D P \epsilon) \right)$ from $\tauH^u_i$.
But notice that $\tauH^u_i \leq DB$, which implies that $\tau_{fair} \geq \tauH^u_i - O(D^2BP \cdot \epsilon) = \tauH^u_i - O(L \cdot \epsilon)$.
\end{proof}

\subsection{Designing an Adaptive Strategy for DAG-Utility Maximization}
\label{subsubsec:FrugalRobusttoMax}
From the previous section we know how to obtain close estimates of the grades.
Now we use well-estimated grades  to design a robust adaptive strategy for \DAG-\UtiMax and prove Theorem~\ref{thm:FrugalRobusttoAdaptiveMax}. Theorem~\ref{thm:FrugalRobusttoAdaptiveMax} directly follows by combining Lemma~\ref{lem:boundAdapMax} and the following Lemma~\ref{lem:FrugalRobustMax}.

\begin{lemma} \label{lem:FrugalRobustMax}
Assuming the conditions of Theorem~\ref{thm:FrugalRobusttoAdaptiveMax} and that the grade of each state is estimated to within an additive factor of $\epsilon / 4kD_i$, where $D_i$ is the depth of $\MS_i$, then there exists an adaptive algorithm \ALGH with utility at least
$$
\frac{1}{\alpha} \cdot \E_{\trajProf}\left[ \max_{\I \in \F}\{ \val(\I, \bYmax(\trajProf)) \} \right] - \epsilon.
$$
\end{lemma}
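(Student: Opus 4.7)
The plan is to adapt the strategy from Theorem~\ref{thm:frugalToAdaptiveMax} by substituting the estimated grade $\tauH^u_i$ for the true grade $\tau^u_i$, and by adding a small monotone ``shift'' that grows by $\delta_i$ every time $\MS_i$ is advanced. The shift is calibrated so that (i) it dominates the within-epoch fluctuation of the estimated grade---forcing $\ALGH$ to advance each Markov system in full epochs and thereby preserving the fair-teasing-game bookkeeping of the proof of Lemma~\ref{lem:convertFrugalAlgMax}---and (ii) its total contribution across the at-most-$k$ selected elements is bounded by $O(\epsilon)$.

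\emph{Construction of $\ALGH$.} Set $\delta_i := \epsilon/(2kD_i)$ and maintain a step counter $c_i$ initialized to $0$ for every element $i$. At each decision point, $\ALGH$ evaluates the marginal value of each unselected feasible-to-add element $i$ in state $u$ as $g(\YmHmax, i, \tauH^u_i + c_i\delta_i)$; it then picks the element $j$ of greatest marginal value and, provided this value is positive, either advances $\MS_j$ by one step (incrementing $c_j$) or selects $j$ if $\MS_j$ already lies in a destination state; otherwise $\ALGH$ terminates and returns the currently selected set. Define the \emph{shifted prevailing cost} $\YHmax_i(\traj_i) := \min_{v \in \traj_i}(\tauH^v_i + c_i(v)\delta_i)$, where $c_i(v)$ denotes the value of $c_i$ at the moment $v$ is first visited, and write $\bYHmax(\trajProf)$ for the corresponding vector over all elements.

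\emph{Epoch commitment and coupling to a modified surrogate.} By hypothesis, $|\tauH^v_i - \tau^v_i| \le \epsilon/(4kD_i)$; since the true grade is constant on any epoch, two consecutive estimated grades in the same epoch differ by at most $\epsilon/(2kD_i) = \delta_i$. Therefore $\tauH^u_i + c_i\delta_i$ is non-decreasing throughout any epoch, and since the marginal-value function $g$ is monotone in its third argument (Definition~\ref{defn:frugalPacking}), the same induction underlying the proof of Lemma~\ref{lem:convertFrugalAlgMax} shows that $\ALGH$ never abandons a Markov system mid-epoch and that, on every trajectory profile $\trajProf$, the set returned by $\ALGH$ coincides with $\A(\bYHmax(\trajProf))$---the output of $\A$ in the \FreeInfo world with item values $\bYHmax(\trajProf)$. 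Because $\ALGH$ plays in complete epochs, the fair-teasing-game argument from the proof of Lemma~\ref{lem:convertFrugalAlgMax} applies in the real world up to the deliberate shift, yielding
\[ \Umax(\ALGH) \;\ge\; \E_{\trajProf}\!\left[\val\!\left(\A(\bYHmax(\trajProf)),\,\bYHmax(\trajProf)\right)\right] \;-\; O\!\left(k \cdot \max_i D_i \delta_i\right) \;=\; \E_{\trajProf}\!\left[\val\!\left(\A(\bYHmax(\trajProf)),\,\bYHmax(\trajProf)\right)\right] - O(\epsilon). \]

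\emph{Closing the gap to the true surrogate.} Each coordinate of $\bYHmax(\trajProf)$ differs from the corresponding coordinate of $\bYmax(\trajProf)$ by at most $\epsilon/(4kD_i) + D_i\delta_i = O(\epsilon/k)$. Semiadditivity of $\val$ and the $|\I| \le k$ constraint together give $|\val(\I,\bYHmax(\trajProf)) - \val(\I,\bYmax(\trajProf))| = O(\epsilon)$ for every $\I \in \F$. Applying this bound once to $\A(\bYHmax(\trajProf))$ and once to the $\bYmax$-optimizer of the surrogate, and invoking the $\alpha$-approximation guarantee of $\A$ on input $\bYHmax(\trajProf)$, yields the claimed $\frac{1}{\alpha}\E_{\trajProf}[\max_{\I\in\F}\val(\I,\bYmax(\trajProf))] - O(\epsilon)$ bound; rescaling $\epsilon$ by a constant recovers the lemma exactly. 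The main obstacle is the two-sided calibration of $\delta_i$: the per-step shift must exceed the worst-case per-step fluctuation of the estimated grade (otherwise $\ALGH$ could switch Markov systems inside an epoch and the teasing-game bookkeeping fails), yet the total accumulated shift across $k$ selected elements and $D_i$ steps per element must stay $O(\epsilon)$, and it is precisely the interplay of these two requirements that forces the hypothesized $\epsilon/(4kD_i)$ accuracy in the grade estimates.
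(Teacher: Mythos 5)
Your proposal matches the paper's proof essentially step for step: the same upward-shifted algorithm (shift $\epsilon/2kD_i$ per advance of $\MS_i$), the same shifted prevailing costs $\bYHmax$ defining a modified surrogate, the same coupling of the adaptive run to $\A$ on $\bYHmax(\trajProf)$ via full-epoch play and the fair teasing game, and the same final chain combining coordinate-wise closeness of $\bYHmax$ and $\bYmax$ with the $\alpha$-approximation guarantee of $\A$. Two minor imprecisions, neither fatal: within an epoch it is the \emph{prevailing cost}, not the grade, that is constant (grades are only bounded below by the grade at the epoch's initial state), so the correct comparison is of the shifted estimate to its value at the epoch's start---which still goes through with your calibration; and the paper's teasing-game step gives the utility \emph{exactly} as $\E_{\trajProf}\left[\val(\A(\bYHmax(\trajProf)),\bYmax(\trajProf))\right]$, which yields the stated $-\epsilon$ with the given constants, whereas your accounting loses an extra $O(\epsilon)$ and needs the rescaling you mention.
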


To prove Lemma~\ref{lem:FrugalRobustMax}, we  describe  our algorithm \ALGHA (Algorithm~\ref{alg:FrugalRobustToAdaptiveMax}).
We  define $\bYHmax$ as follows.

\begin{definition} \label{defn:SurrogateHatMax}
Fix a trajectory profile $\trajProf$ where each Markov system reaches the destination state.
For each $i$ and $u \in V_i$, let $d_u(\traj_i)$ be the number of transitions  for $\MS_i$ to reach $u$ from $s_i$ by taking the trajectory $\traj_i \in \trajProf$.
Let $\gamH^u_i(\traj_i) = \tauH^u_i + d_u(\traj_i) \epsilon/2kD_i$.
Define $\YHmax_{\traj_i} \overset{\Delta}{=} \min_{u \in \traj_i}\{ \gamH^u_i(\traj_i) \}$.
Denote the list of $\YHmax_{\traj_i}$'s as $\bYHmax(\trajProf)$ and $\YmHmax(\trajProf)$ the list of $\YHmax_{\traj_i}$ values in the set $M$.
\end{definition}

The key idea in $\ALGHA$ (the main difference from Algorithm~\ref{alg:frugalToAdaptiveMax}) is the ``upward shifting'' technique in Step~\ref{alg:computeViRobMax}.
As we advance a Markov system, we shift our estimates of its grades upward.
This guarantees that our algorithm is optimal in the teasing game \TG defined for Claim~\ref{claim:FairTeasingGameMax}.
\begin{algorithm}
\caption{Algorithm $\ALGHA$ for \UtiMax in \MPOI}
\label{alg:FrugalRobustToAdaptiveMax}
\begin{algorithmic}[1]
\State Start with $M=\emptyset$. Set $v_i=0$ and $\ctr_i=0$ for all elements $i$.
\State For each element $i\not\in M$, set $v_i=g\left(\YmHmax, i ,\tauH_i^u + \ctr_i \cdot \epsilon / 2kD_i \right)$ where $u$ is the current state of $i$.\label{alg:computeViRobMax}
\State Consider the element $j = \argmax_{i\not \in M~\&~ M\cup i \in \F} \{v_i\}$ and $v_j>0$.
\State Proceed $\MS_j$ for one step and set $\ctr_j=\ctr_j + 1$. If $t_j$ is reached by $\MS_j$, select $j$ into $M$.
\State If every element $i\not\in M$ has $v_i \leq 0$ then return set $M$. Else, go to Step~\ref{alg:computeViRobMax}.
\end{algorithmic}
\end{algorithm}

\begin{proof}[Proof of~Lemma~\ref{lem:FrugalRobustMax}]
This lemma immediately follows from the following two claims (whose proofs are in Appendix~\ref{sec:MissingProofRobustness}).

\begin{claim} \label{claim:POIDisutilityRobustMax}
The utility of running $\ALGHA$ in the real world is exactly the same as
$$
\E_{\trajProf} \left[ \val(Alg(\bYHmax(\trajProf), \A), \bYmax(\trajProf)) \right].
$$
\end{claim}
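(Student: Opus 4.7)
The plan is to establish the identity in two parts: first, a deterministic \emph{set-equality} statement that for every trajectory profile $\trajProf$ the set $M$ returned by $\ALGHA$ equals $\A(\bYHmax(\trajProf))$; second, a \emph{fair-teasing-game} argument that converts the expected prices paid by $\ALGHA$ into expected prevailing-cost contributions, yielding the claimed equality in expectation.

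For the set-equality part, I would argue by induction on $|M|$ exactly as in the proof of Claim~\ref{claim:sameSolutionMax}. The crucial structural identity is that when $\MS_i$ occupies state $u \in \traj_i$ during $\ALGHA$'s execution, the counter satisfies $\ctr_i = d_u(\traj_i)$, so the value that $\ALGHA$ assigns to~$i$ is exactly $g(\YmHmax,i,\gamH^u_i(\traj_i))$, and by Definition~\ref{defn:SurrogateHatMax} this quantity is minimized over $u \in \traj_i$ precisely at the state $u^*_i$ realizing $\YHmax_{\traj_i}$. Suppose the sets agree through the first $k$ selections but $\A(\bYHmax(\trajProf))$ would next choose $j$ while $\ALGHA$ next adds $i \neq j$. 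Consider the first moment at which $\ALGHA$ transitions $\MS_i$ into $u^*_i$: its computed value is $g(\YmHmax,i,\YHmax_{\traj_i})$. For every other feasible element $i'\notin M$, the current state of $\MS_{i'}$ lies on the prefix of $\traj_{i'}$ already revealed, so by monotonicity of $g$ in its last argument, $\ALGHA$'s current value for $i'$ is at least $g(\YmHmax,i',\YHmax_{\traj_{i'}})$. Since at this moment $\ALGHA$ chooses to keep advancing (or to add) $i$ rather than switch to some $i'$, we obtain $g(\YmHmax,i,\YHmax_{\traj_i}) \ge g(\YmHmax,j,\YHmax_{\traj_j})$, contradicting the choice of $j$ by $\A$.

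For the fair-teasing-game part, I would invoke Lemma~\ref{claim:FairTeasingGameMax}, whose hypothesis is that $\ALGHA$ always advances $\MS_i$ through a complete epoch of the true prevailing cost before switching. This is exactly what the upward shift $\ctr_i\cdot\epsilon/2kD_i$ buys: within an epoch of $\MS_i$, every newly visited state has estimated grade at least the starting estimated grade (by definition of an epoch, modulo the close-estimates Lemma~\ref{lem:WellEstGradeMax}), while the counter term strictly increases, so $\ALGHA$'s evaluation $v_i$ is non-decreasing during the epoch and cannot fall below the value of a competitor mid-epoch if it did not do so at the epoch's start. Consequently Lemma~\ref{claim:FairTeasingGameMax} applies and yields
\[
\E_{\trajProf}\!\left[\,\text{total prices paid by }\ALGHA\,\right] \;=\; \E_{\trajProf}\!\left[\sum_{i\in M} r_i^{\dest(\traj_i)} - \sum_{i\in M}\Ymax_{\traj_i}\right].
\]
Substituting this into the utility of the semiadditive objective, $\Umax(\ALGHA)=\E_{\trajProf}[\sum_{i\in M} r_i^{\dest(\traj_i)} + h(M) - \text{prices}]$, gives $\Umax(\ALGHA)=\E_{\trajProf}[\sum_{i\in M}\Ymax_{\traj_i} + h(M)]$, and combining with Part~1's identity $M=\A(\bYHmax(\trajProf))$ and the semiadditive formula yields the required equality with $\E_{\trajProf}[\val(\A(\bYHmax(\trajProf)),\bYmax(\trajProf))]$.

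The main obstacle I expect is the Part~2 verification that $\ALGHA$ plays through complete true-grade epochs rather than through estimated/shifted epochs: the algorithm uses only the shifted estimated grades $\tauH^u_i + \ctr_i\epsilon/2kD_i$, yet the teasing-game lemma is stated for true prevailing-cost epochs. Closing this gap requires using the $\epsilon/\mathrm{poly}$ closeness from Lemma~\ref{lem:WellEstGradeMax} together with the calibrated shift $\epsilon/2kD_i$ (small enough not to merge separate true epochs, large enough to enforce monotonic values within an epoch) to argue that the epoch structure seen by $\ALGHA$ faithfully tracks the true one, so that the ``no-mid-epoch-switching'' property transfers from the shifted to the true setting.
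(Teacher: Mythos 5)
Your proposal is correct and takes essentially the same route as the paper: the same-solution induction of Claim~\ref{claim:sameSolutionMax} rerun with the shifted quantities $\gamH^u_i(\traj_i)$ and $\bYHmax$, combined with the observation that the upward shift of $\epsilon/2kD_i$ per step compensates for the grade-estimation error so that $\ALGHA$ always advances through complete true epochs and hence, by Lemma~\ref{claim:FairTeasingGameMax}, plays the teasing game fairly, converting expected prices into the true prevailing costs $\bYmax(\trajProf)$ of the selected set. The only superfluous step is your ``do not merge separate true epochs'' requirement --- overshooting into subsequent epochs still satisfies the whole-epoch condition of Lemma~\ref{claim:FairTeasingGameMax}, so only the ``never stop mid-epoch'' direction is needed.
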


\begin{claim} \label{claim:RobustnessAlgMax}
For any trajectory profile $\trajProf$ and for any $i$, $|\YHmax_{\traj_i} - \Ymax_{\traj_i}| \leq \epsilon / 2k$.
Thus
$$
\val(Alg(\bYHmax(\trajProf), \A), \bYmax(\trajProf)) \geq \frac{1}{\alpha} \cdot \max_{\I \in \F}\{ \val(\I, \bYmax(\trajProf)) \} - \epsilon.
$$
\end{claim}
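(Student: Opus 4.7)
The plan is to prove the two inequalities in the claim in order; the second one follows from the first together with the semiadditive structure of $\val$ and the $\alpha$-approximation guarantee of $\A$.

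For the first inequality, I would unpack the definitions. By Definition~\ref{defn:SurrogateHatMax}, $\YHmax_{\traj_i} = \min_{u \in \traj_i}\bigl(\tauH^u_i + d_u(\traj_i)\tfrac{\epsilon}{2kD_i}\bigr)$, while $\Ymax_{\traj_i} = \min_{u \in \traj_i}\tau^u_i$. By the hypothesis of Lemma~\ref{lem:FrugalRobustMax}, $|\tauH^u_i - \tau^u_i| \leq \epsilon/(4kD_i)$ at every state $u$. The upward shift $d_u(\traj_i)\tfrac{\epsilon}{2kD_i}$ lies in $[0,\epsilon/(2k)]$ because $d_u(\traj_i) \leq D_i$ along any trajectory of a \DAG Markov system of depth~$D_i$. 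Hence each term appearing inside the minimum for $\YHmax_{\traj_i}$ deviates from $\tau^u_i$ by at most $\epsilon/(2k)$ (up to the subdominant grade-estimation error $\epsilon/(4kD_i)$), and using the elementary inequality $|\min_u a_u - \min_u b_u| \leq \max_u |a_u - b_u|$, we conclude $|\YHmax_{\traj_i} - \Ymax_{\traj_i}| \leq \epsilon/(2k)$.

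For the second inequality, let $\I^* \in \arg\max_{\I \in \F}\val(\I, \bYmax(\trajProf))$. Since $\A$ is an $\alpha$-approximation in the \FreeInfo world, running it with input values $\bYHmax(\trajProf)$ gives
$$\val\bigl(\A(\bYHmax(\trajProf)), \bYHmax(\trajProf)\bigr) \;\geq\; \tfrac{1}{\alpha}\,\val(\I^*, \bYHmax(\trajProf)).$$
I would then transfer this inequality from $\bYHmax$ to $\bYmax$ by exploiting semiadditivity: for any $\I \in \F$, the set-function term $h(\I)$ cancels, so
$$\bigl|\val(\I, \bYmax(\trajProf)) - \val(\I, \bYHmax(\trajProf))\bigr| \;\leq\; \sum_{i \in \I}\bigl|\Ymax_{\traj_i} - \YHmax_{\traj_i}\bigr| \;\leq\; |\I|\cdot\tfrac{\epsilon}{2k} \;\leq\; \tfrac{\epsilon}{2},$$
where the last step uses $|\I|\le k$. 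Applying this bound on both the $\A(\bYHmax)$ side and the $\I^*$ side of the approximation inequality, and using $\alpha \geq 1$, the accumulated additive slack is at most $\epsilon/(2\alpha) + \epsilon/2 \leq \epsilon$, which yields the claim.

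The main ``obstacle,'' such as it is, is careful constant-tracking in the first step: there are two sources of per-state perturbation—the grade-estimation error from Lemma~\ref{lem:WellEstGradeMax} and the artificial upward shift introduced by $\ALGHA$—and the scalings $\epsilon/(4kD_i)$ and $\epsilon/(2kD_i)$ must be balanced so that their combined effect on a minimum over a trajectory is $\leq \epsilon/(2k)$. Once that is in hand, the approximation-transfer step is a routine application of semiadditivity, and no deeper structural argument beyond this bookkeeping is required.
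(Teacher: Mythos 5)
Your proposal is correct and follows essentially the same route as the paper: bound the per-state deviation of the shifted estimated grades, use $|\min_u a_u-\min_u b_u|\le\max_u|a_u-b_u|$ to get the $\epsilon/2k$ bound on prevailing costs, and then run the same four-step chain (semiadditivity with $|\I|\le k$, the $\alpha$-approximation guarantee on $\bYHmax(\trajProf)$, and $\alpha\ge 1$) to transfer between $\bYHmax$ and $\bYmax$ with total slack $\epsilon/2+\epsilon/(2\alpha)\le\epsilon$. The only loose point is your first step's bookkeeping: with $d_u(\traj_i)\le D_i$ the shift alone can already be $\epsilon/(2k)$, so the grade-estimation error $\epsilon/(4kD_i)$ is not absorbable as ``subdominant''; the paper closes this by observing the shift is at most $(D_i-1)\epsilon/(2kD_i)$, so that $(D_i-1)\epsilon/(2kD_i)+\epsilon/(4kD_i)\le\epsilon/(2k)$ holds exactly.
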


\eat{Before we prove the claims above, we first show how these claims imply Lemma~\ref{lem:FrugalRobustMax}.
We will also use $\ALGHA$ to denote the disutility of running $\ALGHA$ in the real world.
By Claim~\ref{claim:POIDisutilityRobustMax} and Claim~\ref{claim:RobustnessAlgMax}
\begin{eqnarray*}
&&\ALGHA = \E_{\trajProf} \left[ \val(Alg(\bYHmax(\trajProf), \A), \bYmax(\trajProf)) \right]\\
&& \geq \frac{1}{\alpha} \cdot \E_{\trajProf} \left[ \max_{\I \in \F}\{ \val(\I, \bYmax(\trajProf)) \} \right] - \epsilon
\end{eqnarray*}
which finishes the proof of the lemma.
Now the only thing left is to prove the three claims above.
\begin{proofof}{Claim~\ref{claim:SameSolutionRobustMax}}

\begin{claim} \label{claim:SameSolutionRobustMax}
For any trajectory profile $\trajProf$, running $\ALGHA$ in the real world returns the same solution as running $\A$ on $\bYHmax(\trajProf)$.
\end{claim}
The proof of this claim is the same as the proof of Claim~\ref{claim:sameSolutionMax} by thinking of $\gamH^u_i(\traj_i)$ as the grade of state $u$ in $\MS_i$ and $\YHmax_{\traj_i}$ as the quantity that defines the surrogate probelm, together with the observation that in Step~\ref{alg:computeViRobMax} of Algorithm~\ref{alg:FrugalRobustToAdaptiveMax}, $-(\tauH_i^u - \ctr_i \cdot \theta/D_i) = \gamH^u_i(\traj_i)$.
\end{proofof}
}

\end{proof}


\eat{
\section{Robustness}
In this section, we study the \Rob model defined in Section~\ref{subsubsec:RobustnessModel} where the real data (i.e. costs, transition probabilities) are unknown.
Our focus of this section is the \DAG-\DisMin problem.
We state in Section~\ref{subsec:ParametersAssumptionsDAG} the parameters and assumptions we will use in this section.
Then in Section~\ref{subsec:DisutilityMinRobust}, we give our robust algorithm for the \DisMin problem in the \DAG model.
One can use similar arguments to obtain corresponding results for the \UtiMax problem which will be omitted from here.

\subsection{Parameters and Assumptions for the DAG Model}
\label{subsec:ParametersAssumptionsDAG}

We assume that the estimated data are close to the real data and try to construct a policy that amplifies the deviations in our estimations by only a small factor.
Ideally we would like the factor to be bounded by some polynomial of certain parameters which are defined as follows.
Let $B$ be an upper bound on the absolute value of all the transition costs, i.e. $\forall i, \forall c \in C_i$, we have $|c| \leq B$.
Let $D$ be the maximum depth of any Markov system, $S$ the maximum number of states in the Markov systems, $r$ the maximum out-degree of any node and $n=|J|$ the number of Markov systems in the system.

We also make the assumption that all non-zero transition probabilities are lower bounded by $1/P$, where $P$ is some polynomial in the parameters above.
This assumtion is without loss of generality and is only made to keep things simple.
It can be removed by the following procedure.
We start by setting a threshold $1/P$ and estimating all the data to within an additive error smaller than $1/P$.
We then ignore the transitions that have estimated probabilities smaller than $2/P$.
This is done by reallocating these probability masses to other transitions from the same state in both the original Markov systems and the estimated Markov systems.
After the removal of these negligible transition probabilities, the remaining Markov systems have a lower bound of $1/P$ on all the transition probabilities.
Since the maximum cost paid on any sample path in a Markov system is at most $DB$, it follows that this changes the optimal policy by at most a very small additive factor if the polynomial $P$ we take is large enough. Therefore, we shall assume without loss of generality a lower bound on all non-zero transition probabilities.

We shall further assume that we can estimate the costs exactly.
This is again without loss of generality and can be removed by the following argument with a small additive term in the theoretical guarantee.
Suppose all the costs are estimated within an additive error of $\delta/nD$.
Since one needs at most $D$ steps to reach the destination for each Markov system, the total cost is affected by at most a small additive factor of $\delta/nD \times nD = \delta$ if we set $\delta$ to be small. Therefore, we will assume that estimations of the costs are exact and only the estimations of transition probabilties have deviations from the real transition probabilities.

\subsection{Robust Policy for DAG-Disutility Minimization}
\label{subsec:DisutilityMinRobust}
In this subsection, we prove Theorem~\ref{thm:FrugalRobusttoAdaptiveMin} by showing a robust policy for the \DAG-\DisMin problem using the parameters and assumptions in Section~\ref{subsec:ParametersAssumptionsDAG}.
We show that the performance of our algorithm is close to the optimal policy that knows the transition probabilities exactly, denoted by \OPT.
This implies the approximation ratio of our algorithm
since  \OPT gives a lower bound on the optimal policy which doesn't know the transition probabilities exactly but only has the knownledge of the estimations as our algorithm.
We will also use \OPT to denote the disutility of the policy \OPT.

We will call the set of Markov systems constructed using our estimated data the \emph{estimated world}.
The $i$th Markov system in the estimated world is denoted by $\MCH_i=(V_i,\PH_i,C_i,s_i,t_i)$, where $\PH_i$ contains the estimated transition probabilities.

In Section~\ref{subsubsec:ClosenessImpliesGradeMin}, we show that close estimations of the transition probabilties imply close estimations of the grades of all the states.
Then we proceed in Section~\ref{subsubsec:FrugalRobusttoMin} to transform a \frugal \robust algorithm for the \Det world into an adaptive algorithm for the \DAG-\DisMin problem.

\subsubsection{Well-Estimated Data Imply Well-Estimated Grades}
\label{subsubsec:ClosenessImpliesGradeMin}
Our estimation of the grade of a state is simply the grade of that state in the estimated world.
The following Lemma~\ref{lem:WellEstGradeMin} bounds the deviation of the estimated grade from the real grade using the deviations of the transition probabilities.

\begin{lemma} \label{lem:WellEstGradeMin}
Consider the \DAG-\DisMin problem with parameters and assumptions in Section~\ref{subsec:ParametersAssumptionsDAG}.
Suppose all the transition probabilities are estimated to within a sufficiently small additive error of $\epsilon < 1/P$,
then the estimated grade of each state is within an additive factor of $ O(L \cdot \epsilon)$ from the real grade of that state, where $L= D^2 BP$.
\end{lemma}
\begin{proof}
Consider an arbitrary state $u$ in the $i$th Markov system.
Denote $\tauH^u_i$ the estimated grade of $u$ (i.e., the grade of $u$ in the estimated world).
We need only to show that $\tau^u_i \leq \tauH^u_i + L \cdot \epsilon$ as follows.
By symmetry, we have $\tauH^u_i \leq \tau^u_i + L \cdot \epsilon$ which finishes the proof of the lemma.

We consider the Markov game $\GH_u$ defined in Section~\ref{subsec:GradeSurrogate} in the estimated world with reward equals to the estimated grade $\tauH^u_i$.
By definition, there exists a policy that proceeds the chain at least one more step that achieves an expected value of 0.
Call this policy \pol.
Consider the Markov game $G_u$ in the real world with reward equals to the real grade $\tau^u_i$.
We consider running \pol in the game $G_u$.
\pol might not be an optimal policy in $G_u$ and might therefore obtains negative expected value.
Let $\tau_{fair}$ be the reward that needs to be placed at the destination state in $G_u$ such that \pol obtains an expected value of 0.
It immediately follows that $\tau^u_i \leq \tau_{fair}$.
It therefore suffices to show that $\tau_{fair} \leq \tauH^u_i + L \cdot \epsilon$.

Denote the set of possible sample paths for chain $i$ that starts from the state $u$ by $\SC$ and the set of sample paths that \pol obtains the reward by $\SC_{win}$.
These sets are the same for both the real world and the estimated world when we apply the same policy \pol.
\eat{
Let $p_{win}$ be the probability that the reward is obtained by running \pol in the real world
}
Denote $p_\trajProf$ the probability of a sample path $\trajProf \in \SC$ in the real world
\eat{
Let $\pH_{win}$ be the probability that the reward is obtained by running \pol in the estimated world
}
and $\pH_\trajProf$ the probability of a sample path $\trajProf \in \SC$ in the estimated world.
Let $c_\trajProf$ be the cost of the sample path $\trajProf \in \SC$ in the real world.
It follows that
$$
\tau_{fair} = \frac{1}{\sum_{\trajProf \in \SC_{win}} p_{\trajProf}} \cdot \sum_{\trajProf \in \SC} \left ( p_\trajProf \cdot c_\trajProf \right)
= \sum_{\trajProf \in \SC} \left ( \frac{p_\trajProf}{\sum_{\trajProf \in \SC_{win}} p_{\trajProf}} \cdot c_\trajProf \right)
$$
and that
$$
\tauH^u_i = \frac{1}{ \sum_{\trajProf \in \SC_{win}}  \pH_{\trajProf} } \cdot \sum_{\trajProf \in \SC} \left( \pH_\trajProf \cdot c_\trajProf \right)
= \sum_{\trajProf \in \SC} \left( \frac{ \pH_\trajProf}{\sum_{\trajProf \in \SC_{win}}  \pH_{\trajProf} } \cdot c_\trajProf \right)
$$
Consider each term $\trajProf \in \SC$ in the above summations.
Since each transtion probability has a lower bound of $1/P$ and is estimated to within an additive error of $\epsilon < 1/P$, it is estimated to within a multiplicative error of $\left( 1 \pm O(P\epsilon) \right)$.
Since $p_{\trajProf}$ and $\pH_{calP}$ can be written as the product of at most $D$ probabilities, each term $\frac{p_\trajProf}{\sum_{\trajProf \in \SC_{win}} p_{\trajProf}}$ is within a multiplicative error of $\left( 1 \pm O(D P \epsilon) \right)$ from $\frac{ \pH_\trajProf}{\sum_{\trajProf \in \SC_{win}}  \pH_{\trajProf} }$.
It follows that $\tau_{fair}$ is within a multiplicative factor of $\left( 1 \pm O(D P \epsilon) \right)$ from $\tauH^u_i$.
But notice that $\tauH^u_i \leq DB$, since the cost spent by proceeding the Markov system on any sample path is at most $DB$.
This implies that $\tau_{fair} \leq \tauH^u_i + O(D^2BP \cdot \epsilon) = \tauH^u_i + O(L \cdot \epsilon)$.
\end{proof}

\subsubsection{Designing an Adaptive Strategy for \DAG-\DisMin Using a \frugal \robust Algorithm}
\label{subsubsec:FrugalRobusttoMin}
In Section~\ref{subsubsec:ClosenessImpliesGradeMin}, we show that close estimations of the real data imply close estimations of the grades.
Now we show that well-estimated grades can be used to design an adaptive strategy for \DAG-\DisMin given a \frugal \robust algorithm in the \Det world which is defined as follows (recall the definition of a \frugal algorithm in Definition~\ref{defn:frugalCovering}).

\begin{defn}[\frugal \robust Algorithm] \label{defn:FrugalRobustCovering}
Let $\A$ be a \frugal algorithm for the \DisMin problem in the \Det world. $\A$ is called $(\epsilon,\theta)$-robust on an instance $\cali$ if for every instance $\cali'$ obtained from perturbing the value of each datum in $\cali$ by at most $\theta$, then we have
$$
\cost ( Alg(\cali', \A),\cali ) \leq \alpha \cdot \min_{\I \in \F} \{ \cost ( \I , \cali )\} + \epsilon
$$
where $Alg(\cali', \A)$ denotes the output of running $\A$ on the instance $\cali'$.
$\A$ is called an $(\epsilon,\theta)$-robust \frugal algorithm if it is $(\epsilon,\theta)$-robust on any instance $\cali$.
\end{defn}

With the definition of a \frugal \robust algorithm, we are ready to prove Theorem~\ref{thm:FrugalRobusttoAdaptiveMin} (restated below).
\FrugalRobusttoAdaptiveMin*

The proof of Theorem~\ref{thm:FrugalRobusttoAdaptiveMin} relies on the following Lemma~\ref{lem:FrugalRobustMin}.

\begin{lemma} \label{lem:FrugalRobustMin}
Assume the conditions of Theorem~\ref{thm:FrugalRobusttoAdaptiveMin} and that the grade of each state is estimated within an additive factor of $\theta/D_i$, where $D_i$ is the depth of $\MS_i$, then there exists an adaptive algorithm \ALGH with disutility at most
$$
\alpha \cdot \E_{\trajProf}\left[ \min_{\I \in \F}\{ \cost(\I, \bYmin(\trajProf)) \} \right] + \epsilon
$$
\end{lemma}

Notice that Lemma~\ref{lem:FrugalRobustMin} together with Lemma~\ref{lem:WellEstGradeMin} and \ref{lem:boundAdapMin} implies Theorem~\ref{thm:FrugalRobusttoAdaptiveMin}.

To prove Lemma~\ref{lem:FrugalRobustMin}, we shall describe below our algorithm \ALGHA which can be found in Algorithm~\ref{alg:FrugalRobustToAdaptiveMin}.
We shall define $\bYHmin$ as follows.
\begin{defn} \label{defn:SurrogateHatMin}
Fix a trajectory profile $\trajProf$ where each Markov system reaches the destination state.
For each $i$ and each node $u \in V_i$, let $d_u(\traj_i)$ be the number of transitions  for $\MS_i$ to reach $u$ from $s_i$ by taking the sample path $\traj_i \in \trajProf$.
Let $\gamH^u_i(\traj_i) = \tauH^u_i - d_u(\traj_i) \theta/D_i$.
Define $\YHmin_{\traj_i} \overset{\Delta}{=} \max_{u \in \traj_i}\{ \gamH^u_i(\traj_i) \}$.
Denote the list of $\YHmin_{\traj_i}$'s as $\bYHmin(\trajProf)$.
Denote $\YmHmin(\trajProf)$ the list of $\YHmin_{\traj_i}$ values in the set $M$.
\end{defn}
Notice that whenever an element $i$ is chosen, the sample path $\traj_i$ is known because the destination state is reached.
It follows that for a selected element $i$, $\YHmin_{\traj_i}$ is available to the algorithm so Step~\ref{alg:computeViRobMin} of our algorithm is well-defined.
The key here (which is also the main difference from Algorithm~\ref{alg:frugalToAdaptiveMin}) is the ``downward shifting'' technique in Step~\ref{alg:computeViRobMin}.
As we advance the same Markov system, we shift our estimation of the grade downward.
We will see later that this guarantees that our algorithm is optimal in the teasing game \TG defined for Claim~\ref{claim:FairTeasingGameMin}, which allows us to get the desired theoretical guarantee.
\begin{algorithm}
\caption{Algorithm $\ALGHA$ for \DisMin in \MPOI}
\label{alg:FrugalRobustToAdaptiveMin}
\begin{algorithmic}[1]
\State Start with $M=\emptyset$. Set $v_i=0$ and $\ctr_i=0$ for all elements $i$.
\State For each element $i\not\in M$, set $v_i=g(\YmHmin, i ,\tauH_i^u - \ctr_i \cdot \theta/D_i)$ where $u$ is the current state of $i$.\label{alg:computeViRobMin}
\State Consider the element $j = \argmax_{i\not \in M~\&~ M\cup i \in \F} \{v_i\}$ and $v_j>0$.
\State Proceed $\MS_j$ for one step and set $\ctr_j=\ctr_j + 1$. If $t_j$ is reached by $\MS_j$, select $j$ into $M$.
\State If every element $i\not\in M$ has $v_i \leq 0$ then return set $M$. Else, go to Step~\ref{alg:computeViRobMin}.
\end{algorithmic}
\end{algorithm}
\\
Now we are ready to prove Lemma~\ref{lem:FrugalRobustMin}.\\
\begin{proofof}{Lemma~\ref{lem:FrugalRobustMin}}
We start by presenting the following claims that will imply Lemma~\ref{lem:FrugalRobustMin}.

\begin{claim} \label{claim:SameSolutionRobustMin}
For any trajectory profile $\trajProf$, running $\ALGHA$ in the real world returns the same solution as running $\A$ on $\bYHmin(\trajProf)$.
\end{claim}

\begin{claim} \label{claim:POIDisutilityRobustMin}
The disutility of running $\ALGHA$ in the real world is exactly the same as
$$
\E_{\trajProf} \left[ \cost(Alg(\bYHmin(\trajProf), \A), \bYmin(\trajProf)) \right]
$$
\end{claim}

\begin{claim} \label{claim:RobustnessAlgMin}
For any trajectory profile $\trajProf$ and for any $i$, $|\YHmin_{\traj_i} - \Ymin_{\traj_i}| \leq \theta$.
Therefore, by $(\epsilon,\theta)$-robustness of the $\alpha$-approximation algorithm \A, we have
$$
\cost(Alg(\bYHmin(\trajProf), \A), \bYmin(\trajProf)) \leq \alpha \cdot \min_{\I \in \F}\{ \cost(\I, \bYmin(\trajProf)) \} + \epsilon
$$
\end{claim}
Before we prove the claims above, we first show how these claims imply Lemma~\ref{lem:FrugalRobustMin}.
We will also use $\ALGHA$ to denote the disutility of running $\ALGHA$ in the real world.
By Claim~\ref{claim:POIDisutilityRobustMin} and Claim~\ref{claim:RobustnessAlgMin}
\begin{eqnarray*}
&&\ALGHA = \E_{\trajProf} \left[ \cost(Alg(\bYHmin(\trajProf), \A), \bYmin(\trajProf)) \right]\\
&& \leq \alpha \cdot \E_{\trajProf} \left[ \min_{\I \in \F}\{ \cost(\I, \bYmin(\trajProf)) \} \right] + \epsilon
\end{eqnarray*}
which finishes the proof of the lemma.
Now the only thing left is to prove the three claims above.\\
\begin{proofof}{Claim~\ref{claim:SameSolutionRobustMin}}
The proof of this claim is the same as the proof of Claim~\ref{claim:sameSolutionMin} by thinking of $\gamH^u_i(\traj_i)$ as the grade of state $u$ in $\MS_i$ and $\YHmin_{\traj_i}$ as the corresponding surrogate, together with the observation that in Step~\ref{alg:computeViRobMin} of Algorithm~\ref{alg:FrugalRobustToAdaptiveMin}, $\tauH_i^u - \ctr_i \cdot \theta/D_i = \gamH^u_i(\traj_i)$.
\end{proofof}

\begin{proofof}{Claim~\ref{claim:POIDisutilityRobustMin}}
Because $\ALGHA$ shifts the estimated grade downward by $\theta/D_i$ each time we advance Markov system $\MS_i$, together with the assumption that the grade is estimated to within an additive error of $\theta/D_i$, whenever $\ALGHA$ starts to advance a Markov system, it continues to advance it through the whole epoch in the real world.
It follows from Claim~\ref{claim:FairTeasingGameMin} that $\ALGHA$ is an optimal policy in the teasing game $\TG$.
By Claim~\ref{claim:SameSolutionRobustMin}, the solution returned by $\ALGHA$ in the real world is the same as running $\A$ on $\bYHmin(\trajProf)$ for the same trajectory profile $\trajProf$.
These two observations lead to the claim.
\end{proofof}

\begin{proofof}{Claim~\ref{claim:RobustnessAlgMin}}
Since Markov system $i$ can be played at most $D_i$ times, it follows that the estimated grade of each state other than the destination state is shifted downward by at most $(D_i-1)\theta/D_i$.
Since the estimated grade is within an additive error of $\theta/D_i$, it follows that estimated grade after the downward shifting is still within an additive error of $\theta$ from the real grade for each state.
This finishes the proof of the first part of the claim.
The second part follows from the definition of $(\epsilon,\theta)$-robustness.
\end{proofof}

\end{proofof}

\eat{
\subsection{Notes for General Case}
\begin{example}[Why assuming polynomial-bounded expected hitting time is necessary]
We give an example for the utility maximization version of the problem that shows that it is necessary to have polynomial-bounded expected hitting time. Consider a system with only one Markov system. The Markov system is consisted of a three state cycle $(s,v_1,v_2)$ and a transition from $s$ to $t$, where $s$ is the starting state and $t$ is the ending state. Suppose the reward when reaching $t$ is $n$ but the probability of transitioning to $t$ from $s$ is $\exp(-n)$ and the cost of going one cycle is $\exp(-n)\cdot \frac{n}{2}$. It follows that the expected reward one gets by playing this box is $n/2$. However, if the given input is purterbed by a small amount, one cannnot distinguish between this case and the instance where the transition to the destination state is $\exp(-100n)$ in which case the expected reward of playing is negative.
\end{example}

\begin{example}[Why assuming polynomial lower bounds on probabilities is necessary]
Consider the following Markov system with four states $s,t,v_1$ and $v_2$. Each state has cost 1. There is a transtion from $s$ to $t$ with probability $1-p$ and a transition from $s$ to $v_1$ with probability $p$. $v_1$ to $v_2$ has a deterministic transition. $v_2$ goes to $t$ with probability $p/g$ and goes to $v_1$ with probability $1-p/g$, where $g$ is some polynomial. Suppose $p$ is exponentially small.  It follows that the expected hitting time is bounded by $O(g)$. However, the expected cost is due to the transition to $v_1$ and if one fails to estimate this probability, it is impossible to get a good estimation of the expected cost.
\end{example}
}
}


\section{Handling Commitment Constraints}
\label{subsec:IntroCommitment}
Consider the \MPOI model defined in \S\ref{subsec:Model} with an additional restriction that whenever we abandon advancing a Markov system, we need to \emph{immediately} and \emph{irrevocably} decide if we are selecting this element into the final solution $\I$. Since we only select \prepared elements, any element that is not \prepared when we abandon its Markov system is automatically discarded.
We call this constraint \emph{commitment}.
The benchmark for our algorithm  is the optimal policy \emph{without} the commitment constraint.
For single-stage  probing,  such commitment constraints have been well studied, especially in the context of stochastic matchings~\cite{CIKMR09,BGLMNR-Algorithmica12}.

We study  \UtiMax    in the \DAG model with the commitment constraint.
Our algorithms make use of the \emph{online contention resolution schemes}~(OCRSs) proposed in~\cite{FSZ-SODA16}. OCRSs address  our problem in the \FreeInfo world\footnote{In fact, OCRSs consider a variant where  the adversary chooses the order in which the elements are tried. This handles the present problem   where we may choose the order.} (i.e., we can see the realization of the r.v.s for free, but there is the commitment constraint). Constant factor ``selectable" OCRSs are known for several  constraint families: $\frac14$~for matroids, $\frac{1}{2e}$ for matchings, and $\Omega(\frac{1}{k})$ for intersection of $k$ matroids~\cite{FSZ-SODA16}.
We  show how to adapt them
to  \MPOI  with commitment. 

\begin{restatable}{theorem}{OCRStoCommitment}
\label{thm:OCRStoCommitment}
  For an additive objective, if there exists a $1/\alpha$-selectable OCRS ($\alpha \geq 1$)  for a packing constraint $\F$,
  then there exists an $\alpha$-approximation algorithm
  for the corresponding \DAG-\UtiMax problem with  commitment.
\end{restatable}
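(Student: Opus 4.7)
The plan is to upper-bound $\OPT$ via the surrogate problem and then simulate an OCRS-based prophet inequality on the surrogate, using the \DAG\ structure so that commitment is free. By Lemma~\ref{lem:boundAdapMax} applied to the additive objective,
\[
\OPT \;\leq\; \E_{\trajProf}\Bigl[\max_{\I \in \F} \sum_{i \in \I} \Ymax_i\Bigr] \;=\; \E_{\trajProf}\Bigl[\sum_{i \in \I^*}\Ymax_i\Bigr],
\]
where $\I^*$ attains the maximum and the $\Ymax_i$ are mutually independent. Set $x_i^* \coloneqq \Pr[i \in \I^*]$, so $(x_i^*)$ is a convex combination of indicators of feasible sets and hence lies in the polytope on which the given OCRS is $1/\alpha$-selectable. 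For each $i$, pick a threshold $T_i$ satisfying $\Pr[\Ymax_i \geq T_i] = x_i^*$ (randomize over ties if $\Ymax_i$ has atoms), and call $i$ \emph{active} if $\Ymax_i \geq T_i$; activations are independent across $i$.

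The algorithm processes elements in the OCRS's order while maintaining a committed set $\I$. For each $i$: if $\I \cup \{i\} \notin \F$, skip $i$ (commit-exclude without advancing $\MS_i$); otherwise advance $\MS_i$ by optimal play in the teasing game $\MS_i(T_i)$, halting at the first visited state of grade $\leq T_i$, and include $i$ in $\I$ precisely when $\MS_i$ reaches a destination under this rule. Because prevailing costs are non-increasing along trajectories and \DAG\ Markov systems terminate in finitely many steps, ``reached destination under teasing'' is equivalent to the event $\{\Ymax_i \geq T_i\}$. So $i \in \I$ iff $i$ is active and its feasibility check passes---exactly the greedy OCRS rule applied to the active set.

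For utility, the fair teasing game identity used in Theorem~\ref{thm:frugalToAdaptiveMax} gives, for each \emph{considered} $i$, expected value-minus-cost $U^{s_i}(T_i) + T_i\Pr[\Ymax_i \geq T_i] = \E[\Ymax_i\mathbf{1}\{\Ymax_i \geq T_i\}]$; skipped elements contribute $0$. Using that ``considered'' depends only on trajectories of other elements (hence is independent of $\MS_i$'s randomness) together with $1/\alpha$-selectability ($\Pr[i \text{ considered} \mid i \text{ active}] \geq 1/\alpha$, so $\Pr[i \text{ considered}] \geq 1/\alpha$ by independence of activations),
\[
\Umax(\ALG) \;=\; \sum_i \Pr[i \text{ considered}] \cdot \E[\Ymax_i\mathbf{1}\{\Ymax_i \geq T_i\}] \;\geq\; \tfrac{1}{\alpha}\sum_i \E[\Ymax_i\mathbf{1}\{\Ymax_i \geq T_i\}].
\]
Finally, placing mass $x_i^*$ on the top quantile of $\Ymax_i$ maximizes the contribution among all weight-$x_i^*$ sub-events, so $\E[\Ymax_i\mathbf{1}\{\Ymax_i \geq T_i\}] \geq \E[\Ymax_i\mathbf{1}\{i \in \I^*\}]$; summing and invoking the surrogate bound yields $\Umax(\ALG) \geq \OPT/\alpha$.

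The main obstacle is that commitment forces an irrevocable accept/reject decision at the moment we abandon $\MS_i$, \emph{before} $\Ymax_i$ is fully revealed. The resolution is that optimal teasing play in a \DAG\ halts the instant prevailing cost drops below $T_i$, certifying $\Ymax_i < T_i$ with no further cost paid; this cleanly separates active from inactive elements in line with the OCRS. The remaining steps---the per-element fair-game identity, the independence argument turning selectability into a bound on $\Pr[i \text{ considered}]$, and the top-quantile coupling---are the standard OCRS-to-prophet-inequality machinery adapted to prevailing costs.
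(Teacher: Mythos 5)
Your overall route is genuinely different from the paper's: the paper upper-bounds the no-commitment optimum by an LP over state-occupancy variables (Lemma~\ref{lem:LPBoundOPT}) and rounds it by advancing $\MS_i$ from each state $u$ with probability $z_i^u/y_i^u$, whereas you upper-bound \OPT by the surrogate of Lemma~\ref{lem:boundAdapMax}, extract the ex-ante vector $x_i^*=\Pr[i\in\I^*]$, define independent activation events via quantile thresholds on the prevailing costs, and realize each activation by a threshold (teasing) policy inside $\MS_i$. That skeleton is sound and is a nice grade-based alternative to the paper's LP; in particular the equivalence ``reach the destination without the prevailing cost dropping below $T_i$'' $\Leftrightarrow$ ``$\Ymax_i\geq T_i$'' is exactly the right way to make commitment harmless, since early halting leaves $i$ un-\prepared and hence automatically discarded.

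There is, however, one real gap as written: your algorithm never uses the hypothesized OCRS. You decide whether to ``consider'' $i$ by the greedy feasibility check $\I\cup\{i\}\in\F$, call this ``the greedy OCRS rule,'' and then invoke $1/\alpha$-selectability for it. The theorem only assumes that \emph{some} $1/\alpha$-selectable OCRS exists; the greedy acceptance rule is in general not $1/\alpha$-selectable (this is precisely why~\cite{FSZ-SODA16} construct nontrivial schemes, e.g.\ for matroids), so the inequality $\Pr[i\text{ considered}\mid i\text{ active}]\geq 1/\alpha$ has no justification in your argument. The repair is what the paper's Algorithm~\ref{alg:OCRStoCommitment} does: before touching $\MS_i$, query the black-box OCRS for its decision on $i$ \emph{conditioned on $i$ being active} (this decision is a function of the previously revealed activations and the OCRS's own coins, hence independent of $\MS_i$'s randomness), advance $\MS_i$ under your threshold rule only if the answer is yes, and let the OCRS guarantee both feasibility of the committed set and selectability. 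Two smaller points also need attention: (a) the final scaling step multiplies each per-element term $\E[\Ymax_i\mathbf{1}\{\Ymax_i\geq T_i\}]$ by $\Pr[i\text{ considered}]\geq 1/\alpha$, which only goes the right way if that term is nonnegative; since $\Ymax_i$ can be negative, you should note that WLOG $\I^*$ never contains an element with negative prevailing cost (packing, additive objective), so $x_i^*\leq\Pr[\Ymax_i\geq 0]$ and $T_i$ can be taken nonnegative; (b) the identity $U^{s_i}(T_i)+T_i\Pr[\Ymax_i\geq T_i]=\E[\Ymax_i\mathbf{1}\{\Ymax_i\geq T_i\}]$, i.e.\ $U^{s_i}(\tau)=\E[(\Ymax_i-\tau)^+]$, is true but is not literally Lemma~\ref{claim:FairTeasingGameMax}; it needs its own short derivation (standard, in the style of~\cite{DTW-SIDMA03}), with a little care at atoms of $\Ymax_i$ where your tie-breaking coin must be folded into the stopping rule.
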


The proof of this result uses a new LP relaxation (inspired from~\cite{GM-STOC07})  to  bound the optimum utility  of a \MPOI game \emph{without} commitment (see \S\ref{sec:LPBoundOPT}). Although this relaxation is not exact even for Pandora's box (and cannot be used to design optimal strategies in Corollary~\ref{cor:impCorollaries}), it turns out to suffice for our approximation guarantees. In \S\ref{sec:roundingLP}, we use an OCRS to round this LP with only a small loss in the utility, while respecting the commitment constraint. 

\begin{remark}
We do not consider \DisMin problem under commitment  because it captures prophet inequalities in a minimization setting where no polynomial approximation is possible even for i.i.d. r.v.s~\cite[Theorem~$4$]{EHLM-JDM17}.
\end{remark}


In \S\ref{sec:LPBoundOPT}, we give an LP relaxation to upper bound the optimum utility without the commitment constraint.
In \S\ref{sec:roundingLP}, we apply an OCRS to round the  LP solution to obtain an adaptive policy, while satisfying the commitment constraint.

\subsection{Upper Bounding the Optimum Utility} \label{sec:LPBoundOPT}

  Define the following variables, where $i$ is an index for the Markov systems.
  \begin{itemize} [noitemsep,topsep=5pt]
  \item
    $y_{i}^u$: probability we reach state~$u$ in Markov system~$\MS_i$ for  $u \in V_i \setminus T_i$.
  \item
    $z_{i}^u$: probability we play~$\MS_i$ when it is in state~$u$ for $u \in V_i \setminus T_i$.
  \item
    $x_i = \sum_{u \in T_i} z_{i}^u$: probability~$\MS_i$ is selected into the final solution when in a destination state.
    \item $P_{\F}$ is a convex relaxation containing all feasible solutions for packing  $\F$.
  \end{itemize}
We can now formulate the following LP, which is inspired from~\cite{GM-STOC07}.
  \begin{align*}
  \max_{\bz}    \qquad \qquad \sum_i &\Big( \sum_{u \in T_i} \reward_i^u z_{i}^u - \sum_{u \in  V_i \setminus T_i} \pi_{i}^u z_{i}^u \Big)\\
  \text{subject to} \qquad   \qquad  y_{i }^{s_i} &= 1 && \forall i\in J \\
    y_{i}^u &= \textstyle{\sum_{v\in V_i} (P_i)_{uv} z_{i}^v } && \forall i \in J, \forall u \in V_i \setminus s_i \\
    x_i &= \textstyle{\sum_{u \in T_i} z_{i}^u }&& \forall i\in J \\
    z_{i}^u &\leq  y_{i}^u && \forall i\in J, \forall u \in V_i\\
    \bx &\in P_{\F} \\
    x_i, y_{i}^u, z_{i}^u &\geq 0 && \forall i \in J, \forall u \in V_i
  \end{align*}
 The first four constraints characterize the dynamics in advancing the Markov systems.
The fifth constraint encodes the packing constraint $\F$.
We denote the optimal solution of this LP as $(\bx,\by,\bz)$. We can efficiently solve the above LP for packing constraints such as matroids, matchings, and intersection of $k$ matroids.

If we interpret the  variables $y_{i}^u, x_i$, and $z_{i}^u$ as the probabilities corresponding to the optimal strategy without commitment, it forms a feasible solution to the LP. This implies the following claim.
\begin{lemma} \label{lem:LPBoundOPT}
The optimum utility  without commitment is at most the LP value.
\end{lemma}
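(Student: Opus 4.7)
My plan is to prove the lemma by exhibiting, from any policy $\pi$ for the commitment-free \MPOI game (in particular the optimal one), a feasible LP solution whose objective equals the utility of $\pi$. Since the LP is a maximization, this immediately gives $\mathrm{OPT} \le \mathrm{LP}$.

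Fix an optimal (possibly randomized, adaptive) policy $\pi$. For each $i\in J$ and $u\in V_i$, define
\[
y_i^u \ \overset{\Delta}{=}\ \Pr_\pi[\MS_i \text{ reaches state } u], \qquad
z_i^u \ \overset{\Delta}{=}\ \begin{cases} \Pr_\pi[\MS_i \text{ is advanced from } u] & u \in V_i \setminus T_i,\\ \Pr_\pi[i \text{ is selected into } \I \text{ from state } u] & u \in T_i, \end{cases}
\]
and $x_i \overset{\Delta}{=} \sum_{u \in T_i} z_i^u = \Pr_\pi[i \in \I]$. I then verify the LP constraints one by one: (a) $y_i^{s_i}=1$ since every trajectory starts at $s_i$; (b) for $u \neq s_i$, the probability of reaching $u$ equals the sum over predecessors $v$ of (probability of playing at $v$) times (transition probability $v\to u$), which is exactly $\sum_v (P_i)_{uv}\, z_i^v$ under the LP's indexing convention; (c) $x_i = \sum_{u\in T_i} z_i^u$ holds by definition; (d) $z_i^u \le y_i^u$ since one can only play/select at $u$ if $u$ has been reached; (e) $\bx \in P_\F$ follows because $\bx = \E_\pi[\mathbf{1}_{\I(\trajProf)}]$ is a convex combination of indicators of feasible sets $\I \in \F$, and $P_\F$ is a convex relaxation of $\F$.

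For the objective, note that $\sum_{u \in T_i} r_i^u z_i^u$ is the expected reward contributed by item $i$ (conditioned on which destination it was selected from), while $\sum_{u \in V_i \setminus T_i} \pi_i^u z_i^u$ is the expected total price paid while advancing $\MS_i$ (by linearity of expectation, each non-terminal state $u$ contributes price $\pi_i^u$ multiplied by the probability the policy pays to leave $u$). Summing over $i$ and comparing to the \UtiMax objective in~\eqref{eq:MPOI} with additive $f$ shows the LP objective value of $(\bx,\by,\bz)$ equals $\Umax(\pi)$. Thus $(\bx,\by,\bz)$ is feasible with LP value $\Umax(\pi)$, so the LP optimum is at least the utility of the optimal commitment-free policy.

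The main thing to be careful about is the matrix-indexing interpretation of the flow constraint $y_i^u = \sum_v (P_i)_{uv}\, z_i^v$: one must read it as a conservation equation in which $z_i^v$ represents the ``outflow'' from $v$ and $(P_i)_{uv}$ splits that outflow to successor $u$. Granting this reading, the verification above is essentially routine, and no facts beyond linearity of expectation and the convexity of $P_\F$ are needed.
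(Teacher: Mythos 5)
Your proof is correct and is exactly the paper's argument: the paper proves Lemma~\ref{lem:LPBoundOPT} with the one-line observation that interpreting $y_i^u$, $z_i^u$, $x_i$ as the reach/play/select probabilities of the optimal commitment-free strategy yields a feasible LP solution whose objective equals that strategy's utility, which is what you verify in detail (including the correct ``inflow'' reading of the constraint $y_i^u=\sum_v (P_i)_{uv} z_i^v$ and the convexity argument for $\bx\in P_{\F}$).
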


\subsection{Rounding the LP Using an OCRS} \label{sec:roundingLP}
Before describing our rounding algorithm, we define an OCRS. Intuitively, it is an online algorithm that  given a random set ground elements, selects a  feasible subset of them. Moreover, if it can guarantee that  every $i$ is selected w.p. at least $\frac{1}{\alpha} \cdot x_i$, it is called
 $\frac{1}{\alpha}$-selectable.

\begin{definition}[OCRS~\cite{FSZ-SODA16}] Given a point $x \in P_{\F}$, let $R(x)$ denote a random set containing each $i$ independently w.p. $x_i$. The elements $i $ reveal one-by-one whether  $i \in R(x)$ and we need to decide irrevocably whether to select an $i \in R(x)$ into the final solution before the next element is revealed. An OCRS is an online algorithm that selects a subset $I \subseteq R(x)$ such that $I \in \F$.
\end{definition}

\begin{definition}
[$\frac{1}{\alpha}$-Selectability~\cite{FSZ-SODA16}] Let $\alpha \geq 1$. An OCRS for $\F$ is $\frac{1}{\alpha}$-selectable if for any $x \in P_{\F}$ and all $i$, we have $\Pr[i \in I \mid i \in R(x)] \geq \frac{1}{\alpha} $.
\end{definition}

Our algorithm $\ALG$ uses  OCRS as an oracle.
It starts by fixing an arbitrary order $\pi$ of the Markov systems. (Our algorithm works even when an adversary decides the order of the Markov systems.)
Then at each step, the algorithm considers the next element $i$ in $\pi$ and queries the OCRS whether to select  element $i$ if it is \prepared.
If OCRS decides to select $i$, then \ALG advances the Markov system such that it plays from each state $u$ with independent probability $z_{i}^u/y_{i}^u$.
This guarantees that the desination state is reached with probability $x_i$.
If OCRS is not going to select $i$, then \ALG moves on to the next element in $\pi$.
A formal description of the algorithm can be found in Algorithm~\ref{alg:OCRStoCommitment}.

\begin{algorithm}
\caption{Algorithm $\ALG$ for Handling the Commitment Constraint}
\label{alg:OCRStoCommitment}
\begin{algorithmic}[1]
\State Fix an arbitrary order $\pi$ of the items. Set $M=\emptyset$ and pass $\bx$ to OCRS.
\State Consider the next element $i$ in the order of $\pi$. Query OCRS whether to add $i$ to $M$ if $i$ is \prepared.  \label{alg:ConsiderNextEle}
\Statex (a) If OCRS would add $i$ to $M$, then keep advancing the Markov system: play from each current state $u \in V_i\setminus T_i$ independently w.p. $z_{i}^u/y_{i}^u$, and otherwise go to Step~\ref{alg:ConsiderNextEle}. If a destination state $t$ is reached then add $i$ to $M$ w.p. $z_{i}^t/y_{i}^t$.  \label{alg:AdvanceChain}
\Statex (b) Go to Step~\ref{alg:ConsiderNextEle}.
\end{algorithmic}
\end{algorithm}

We show below that \ALG has a utility of at least $1/\alpha$ times the LP value.
\begin{lemma} \label{lem:ConstantTimesLPOpt}
The utility of \ALG is at least $1/\alpha$ times the LP optimum.
\end{lemma}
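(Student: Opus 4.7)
The plan is to analyze the expected utility of \ALG item-by-item and compare to the per-item contribution to the LP objective. First, I would argue that without loss of generality, in the LP optimum every item~$i$ contributes non-negatively, i.e.\ $\sum_{u \in T_i} r_i^u z_i^u - \sum_{u \in V_i \setminus T_i} \pi_i^u z_i^u \geq 0$. If not, we can zero out $z_i^u$ (and hence $x_i$ and $y_i^u$ for $u \neq s_i$) for that item; since $P_{\F}$ is down-closed (packing), this remains LP-feasible and strictly increases the objective, a contradiction. This reduction is essential because \ALG will pay the per-item expected cost only with some probability, so per-item non-negativity is needed to convert a probabilistic savings into a $1/\alpha$ bound.

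Next, for each item~$i$, let $Q_i$ be the event that the OCRS answers ``yes, would add $i$'' when queried at $i$'s turn. I would establish that conditional on~$Q_i$, the distribution of \ALG's play on $\MS_i$ exactly matches the LP marginals. Specifically, letting $q_i^u$ denote $\Pr[\text{reach } u \mid Q_i]$, I would show by induction on the distance from~$s_i$ that $q_i^u = y_i^u$: the base case $q_i^{s_i}=1=y_i^{s_i}$ is immediate, and for $u\neq s_i$ the step is
\begin{align*}
q_i^u \;=\; \sum_v (P_i)_{uv} \cdot q_i^v \cdot \frac{z_i^v}{y_i^v} \;=\; \sum_v (P_i)_{uv}\, z_i^v \;=\; y_i^u,
\end{align*}
where the first equality uses \ALG's play probability $z_i^v/y_i^v$, the second uses the induction hypothesis $q_i^v = y_i^v$, and the third is the LP transition constraint. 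Consequently the probability of playing from state~$u$ (for $u \in V_i \setminus T_i$), respectively of reaching and committing to $i$ at destination $u \in T_i$, equals $z_i^u$. Hence conditional on $Q_i$, the expected reward from $i$ is $\sum_{u \in T_i} r_i^u z_i^u$ and the expected probing cost is $\sum_{u \in V_i \setminus T_i} \pi_i^u z_i^u$.

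Third, I would lower-bound $\Pr[Q_i]$ by $1/\alpha$ via the OCRS selectability. The synthetic $R(\bx)$ fed to the OCRS is defined by setting $R_i = 1$ precisely when \ALG's advancement of $\MS_i$ results in selection, which by the previous paragraph occurs with probability $x_i$ conditional on $Q_i$. For items where OCRS answers ``no'' we can, purely for the sake of analysis and without affecting \ALG, independently resample $R_i \sim \mathrm{Bern}(x_i)$; this makes the $R_i$'s mutually independent with the correct marginals, so the OCRS sees exactly the distribution it was designed for. Since the OCRS's decision at $i$ depends only on the history $R_{<i}$ and not on $R_i$, we have $\Pr[Q_i] = \Pr[i\in I \mid R_i=1] \geq 1/\alpha$ by $\tfrac{1}{\alpha}$-selectability. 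Combining with the per-item computation and per-item non-negativity yields
\begin{align*}
\mathbb{E}[\text{utility of } \ALG] \;=\; \sum_i \Pr[Q_i]\cdot \Big(\sum_{u\in T_i} r_i^u z_i^u - \sum_{u\in V_i\setminus T_i} \pi_i^u z_i^u\Big) \;\geq\; \frac{1}{\alpha}\cdot \text{LP-OPT}.
\end{align*}

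The main obstacle is the subtle compatibility between \ALG's simulation and the OCRS guarantee: the OCRS is designed for an input drawn from the product distribution $R(\bx)$, whereas \ALG only ever reveals $R_i$ when the OCRS has committed to take $i$. The non-negativity reduction on the LP and the imagined independent resampling of ``unopened'' items resolve this cleanly, letting us invoke selectability marginally at each item while paying the expected cost only on the event $Q_i$.
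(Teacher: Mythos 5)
Your proposal is correct and follows essentially the same route as the paper's proof: condition on the OCRS saying ``yes'' to item $i$, show that the conditional play of $\MS_i$ matches the LP marginals (play from $u$ with probability $z_i^u$, selection with probability $x_i$), invoke $\tfrac{1}{\alpha}$-selectability to bound the probability of that event, and finish by linearity of expectation. Your explicit WLOG step forcing each item's LP contribution to be non-negative (and the imagined resampling of unrevealed $R_i$'s) are careful touches that the paper's proof leaves implicit, but they do not change the argument.
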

Since by Lemma~\ref{lem:LPBoundOPT} the LP optimum is an upper bound on the utility of any policy without  commitment, this proves Theorem~\ref{thm:OCRStoCommitment}.
We now prove Lemma~\ref{lem:ConstantTimesLPOpt}.
\begin{proof}[Proof of Lemma~\ref{lem:ConstantTimesLPOpt}]
Recollect that we  call a Markov system \prepared if it reaches an absorbing destination state.
We first notice that once \ALG starts to advance a Markov system $i$, then by Step~\ref{alg:AdvanceChain} of Algorithm~\ref{alg:OCRStoCommitment}, element $i$ is \prepared with probability exactly $x_i$. This agrees with what \ALG tells the OCRS.
Since the OCRS is $1/\alpha$-selectable, the probability that any Markov system $\MS_i$ begins advancing is  $1/\alpha$.
Here the probability is both over the random choice of the OCRS and the randomness due to the Markov systems. Conditioning on the event that $\MS_i$ begins advancing, the probability that it is selected into the final solution on reaching a  destination state $t \in T_i$ is exactly $z_{i}^t$.
Hence, the conditioned utility from Markov system $\MS_i$ is exactly
\[
\textstyle{\sum_{u \in T_i} \reward_i^u z_{i}^u - \sum_{u \in  V_i \setminus T_i} \pi_{i}^u z_{i}^u.}
\]
By removing the conditioning and by linearity of expectation, the utility of \ALG is at least
$
\frac{1}{\alpha} \cdot  \sum_i \Big( \sum_{u \in T_i} \reward_i^u z_{i}^u - \sum_{u \not\in T_i} \pi_{i}^u z_{i}^u \Big),
$
which proves this lemma.
\end{proof}






\IGNORE{ Before proving our main Theorem~\ref{thm:frugalToAdaptiveMax} (which
generalizes Theorem~\ref{thm:frugalToAdaptiveMax}) in
\S\ref{sec:UtilityMaximization}, let us give some of the high-level
ideas behind our proof.
How do we deal with the Markov systems? The key insight is that one can
efficiently represent how profitable a Markov system $\MS_i$ is at state
$u \in V_i$ by an index $\tau_i^u$, called the \emph{grade} at $u$ and
defined in \S\ref{para:grade}, that depends only on $\MS_i$ but
not on any of the other Markov systems. Intuitively, the grade
$\tau_i^u$ captures the utility one can obtain by playing $\MS_i$ from
state $u$, without taking into the account the costs already paid to get
to state $u$ from the initial state $s_i$.  As it turned out, the grade
can be updated in a ``teasing'' way as $\MS_i$ is played so that it
represents the utility obtainable from $\MS_i$ after incorporating all
the costs already paid in order to play $\MS_i$.  This updated quantity
is called the \emph{prevailing cost} of $\MS_i$ (defined in
\S\ref{para:PrevailingCost}) and it serves as the proxy of $\MS_i$
at any moment. \agnote{This is a bit vague to me. We should talk.}\agnote{Say some of this comes from Dimitriu et al?}

Given these intuitions, how do we choose which Markov system to play at a moment?
It turns out we need only consider the prevailing costs as proxies of
the Markov systems and apply a \frugal (read ``greedy'') algorithm in the \FreeInfo world on the instance formed by these prevailing costs.

Perhaps the most interesting question is how we analyze such a policy?
We start by giving an upper bound on the utility of the optimal policy
\OPT using a random ``surrogate'' problem constructed from the
prevailing costs mentioned above.  More specifically, item $i$ is
represented by the prevailing cost at its destination state; this
captures the utility of picking $\MS_i$ in the sampled
trajectory. \agnote{Which?} Then we show that the utility of our policy
is exactly the same as running the \frugal algorithm on the random
surrogate problem, which is intuitive since our policy uses the
prevailing costs as proxies of the Markov systems.  These steps together
implies that the approximation ratio of our policy is bounded by the
approximation ratio of the \frugal algorithm.  One can use a similar set
of ideas to obtain simlar results for the \DisMin problem. This is
discussed in Appendix~\ref{sec:DisutilityMinimization}.

\agnote{Some proof ideas for the other parts of the paper.}

}

\section{Related Work}	\label{sec:relatedWork}

Our work is related to work on multi-armed bandits in the scheduling literature.
The Gittins index theorem~\cite{GJ-Journal74} provides a simple optimal strategy for several scheduling problems where the objective is to maximize the long-term exponentially discounted
reward. This theorem turned out to be  fundamental and
  \cite{Tsitsiklis-Prob94,Weber-Prob92,Whittle-Stat80} gave alternate proofs.  It can be also used to solve Weitzman's  Pandora's box.
The reader is referred to the  book~\cite{GGW-Book11} for further discussions on this topic.
Influenced by this literature,
\cite{DTW-SIDMA03} studied  scheduling of Markovian jobs, which is a  minimization variant of the Gittins index theorem without any discounting.
Their paper is part of the inspiration for our \MPOI model.

\eat{The field of combinatorial optimization has been extensively studied: we refer the readers to Schrijver's popular book~\cite{Schrijver-Book03}, and the references therein. In recent years,
there has also been a lot of interest in  studying these combinatorial problems  for stochastic inputs.
\cite{DGV-FOCS04,DGV05,GM-SODA07TALG12,GM-STOC07,BGK-SODA11,LiYuan-STOC13,Ma-SODA14}  considered  stochastic knapsack, \cite{CIKMR09,A11,BGLMNR-Algorithmica12,BCNSX15,AGM15} studied stochastic matchings, \cite{GuhaM09,GKNR-SODA12,BN-IPCO14} studied stochastic orienteering,  \cite{ANS-WINE08,GN-IPCO13,ASW14,GNS-SODA17,GNS-SODA16} considered stochastic submodular maximization, and \cite{GM-STOC07,GuhaM09,GKMR-FOCS11,Ma-SODA14} studied budgeted multi-armed bandits.  These works (besides~\cite{GM-STOC07}) do not consider a mixed-sign utility objective or a multi-stage probing, which is the focus of this paper.}
%

The Lagrangian variant of stochastic probing considered in~\cite{GM-STOC07} is similar to our \MPOI model. However, their approach  using an LP relaxation to design a probing strategy is fundamentally different from our approach using a \frugal algorithm. E.g., unlike  Corollary~\ref{cor:impCorollaries}, their approach cannot give \emph{optimal} probing strategies for matroid constraints due to an integrality gap. Also, their approach does not work for \DisMin.
In \S\ref{subsec:IntroCommitment}, we extend their techniques using OCRSs to handle the commitment constraint for \UtiMax.

 There is also a large body of work in related  models where information has a price~\cite{GK-FOCS01,CFGKRS-Journal02,KK-SODA03,GMS-Transactions07,CJK+-AAAI15,AbbasH-Book15,CHHKK-COLT15,CILSZ-ArXiv17}.
Finally, as discussed in the introduction, the works in~\cite{KWW-EC16} and \cite{Singla-SODA18} are directly relevant to this paper. The former's primary  focus is  on \emph{single item} settings and its applications to auction design, and the latter studies price of information in a \emph{single-stage}  probing model.
Our contributions concern selecting \emph{multiple items} in \emph{multi-stage} probing model,
in some sense unifying these two lines of work.

The field of combinatorial optimization has been extensively studied: we refer the readers to Schrijver's popular book~\cite{Schrijver-Book03}, and the references therein. In recent years,
there has also been a lot of interest in  studying these combinatorial problems  for stochastic inputs.
\cite{DGV-FOCS04,DGV05,GM-SODA07TALG12,GM-STOC07,BGK-SODA11,LiYuan-STOC13,Ma-SODA14}  considered  stochastic knapsack, \cite{CIKMR09,A11,BGLMNR-Algorithmica12,BCNSX15,AGM15} studied stochastic matchings, \cite{GuhaM09,GKNR-SODA12,BN-IPCO14} studied stochastic orienteering,  \cite{ANS-WINE08,GN-IPCO13,ASW14,GNS-SODA17,GNS-SODA16} considered stochastic submodular maximization, and \cite{GM-STOC07,GuhaM09,GKMR-FOCS11,Ma-SODA14} studied budgeted multi-armed bandits.  These works (besides~\cite{GM-STOC07}) do not consider mixed-sign utility objective or multi-stage probing, which is our primary focus.


\IGNORE{
\subsection{Organization of the Paper}
In \S\ref{sec:Preliminaries}, we explain the  concepts of grade and prevailing cost that form the key to our arguments.
Then in \S\ref{sec:UtilityMaximization}, we formally define  a \frugal algorithm and
show how to use it to obtain a good adaptive strategy for the \UtiMax problem. The corresponding proofs for  \DisMin  are in the appendix.
\eat{The \DisMin problem under the \MPOI model is handled in \S\ref{sec:DisutilityMinimization}.}
\eat{\S\ref{sec:Robustness}  discusses how to make our algorithms robust to input parameters.
Finally, in \S\ref{sec:CommitmentConstraints}, we
handle the commitment constraint using a new LP and OCRSs.}
}








 \bibliographystyle{splncs04}
 \bibliography{bib}

\newpage

\appendix



\section{Proof of Lemma~\ref{lem:boundAdapMax}}
\label{misproof:boundAdapMax}
We restate Lemma~\ref{lem:boundAdapMax} below.

\boundAdapMax*

\begin{proof}
We abuse the notation and use \OPT to  denote both the optimal policy and its utility. Suppose we fix a trajectory profile $\trajProf$ where each Markov system $\MS_i$ reaches a destination state. Let $\I(\trajProf)$ be the set of elements selected by \OPT on $\trajProf$, where  notice that some of the unselected elements may not be \prepared:
\OPT might   have  selected $\I(\trajProf)$ only after playing prefixes of  trajectories in $\trajProf$.
The following observation follows  from the definition of $\Sur(\trajProf)$.
\begin{observation} \label{obs:OptToSur}
For any trajectory profile $\trajProf$,
$$
\val(\I(\trajProf),\bYmax(\trajProf))\leq \Sur(\trajProf).
$$
\end{observation}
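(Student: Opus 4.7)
The plan is to unpack the definitions: Observation~\ref{obs:OptToSur} is essentially tautological once one checks that $\I(\trajProf)$ is a feasible set. Specifically, by Definition~\ref{defn:MPOIGame}, any valid strategy for the \MPOI game terminates by selecting a subset of \prepared elements that lies in $\F$. Since \OPT is itself a valid strategy, the output $\I(\trajProf)$ satisfies $\I(\trajProf) \in \F$.

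Next, I would verify that the right-hand side is well-defined when evaluated at $\I(\trajProf)$. The trajectory profile $\trajProf$ is fixed to be one in which every Markov system $\MS_i$ reaches a destination state, so the prevailing cost $\Ymax(\traj_i) = \min_{v \in \traj_i}\{\tau_i^v\}$ (Definition~\ref{defn:PrevailingCostMax}) is well-defined for every element $i \in J$. Thus the semiadditive quantity $\val(\I, \bYmax(\trajProf))$ is meaningful for every $\I \in \F$, and in particular for $\I = \I(\trajProf)$.

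With these two points in hand, the inequality is just the defining property of a maximum: since
\[
\Sur(\trajProf) \;\overset{\Delta}{=}\; \max_{\I \in \F}\{\val(\I, \bYmax(\trajProf))\}
\]
and $\I(\trajProf) \in \F$, plugging in this particular feasible set yields
\[
\val(\I(\trajProf), \bYmax(\trajProf)) \;\leq\; \max_{\I \in \F}\{\val(\I, \bYmax(\trajProf))\} \;=\; \Sur(\trajProf).
\]

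The main (in fact only) subtlety is a linguistic one, not a technical one: the statement compares $\val(\I(\trajProf), \bYmax(\trajProf))$ — the \emph{surrogate} value of the set picked by \OPT — against $\Sur(\trajProf)$, not the true utility obtained by \OPT in the \MPOI game. No actual bounding of \OPT's utility happens at this step; that comparison is deferred to the proof of Lemma~\ref{lem:boundAdapMax}, which will use this observation together with the ``prevailing reward'' argument to relate \OPT's true utility to its surrogate value. Thus no serious obstacle arises here; the observation simply records that, for any fixed realization $\trajProf$, \OPT's selected set cannot beat the optimum of the surrogate combinatorial problem on the same realization.
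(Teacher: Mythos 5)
Your proposal is correct and matches the paper's reasoning: the paper dispenses with this observation in one line, noting it ``follows from the definition of $\Sur(\trajProf)$,'' which is exactly your argument that $\I(\trajProf) \in \F$ (since \OPT is a valid strategy) and hence its surrogate value is dominated by the maximum over $\F$. Your added remark correctly identifies that the real work of comparing \OPT's true utility to the surrogate is deferred to Lemma~\ref{claim:OptAgainstOptYMax}.
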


\noindent Now,  using the following  Lemma~\ref{claim:OptAgainstOptYMax} along with Observation~\ref{obs:OptToSur} finishes the proof of Lemma~\ref{lem:boundAdapMax}.
\begin{lemma} The utility of the optimal strategy
\label{claim:OptAgainstOptYMax}
\[ \OPT \leq \E_{\trajProf} \left[\val(\I(\trajProf),\Ymax(\trajProf))\right] .  \]
\end{lemma}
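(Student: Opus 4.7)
\medskip
\noindent\textbf{Proof plan.} Since $\val$ is semiadditive, write $\val(\I,\bx)=\sum_{i\in \I} x_i + h(\I)$. On any fixed trajectory profile $\trajProf$ both the real utility of \OPT and the quantity $\val(\I(\trajProf),\bYmax(\trajProf))$ contain the same $h(\I(\trajProf))$ term, so it suffices to prove the per-system inequality
\[ \E_{\trajProf}\Big[\reward_i^{\dest(\traj_i)}\,\mathbf{1}[i\in\I(\trajProf)]\Big] \; - \; \E_{\trajProf}\Big[\textstyle{\sum_{u\in\traj_i^{\OPT}}}\price_i^u\Big] \;\leq\; \E_{\trajProf}\Big[\Ymax(\traj_i)\,\mathbf{1}[i\in\I(\trajProf)]\Big], \]
for every $i$, where $\traj_i^{\OPT}$ denotes the (possibly proper) prefix of $\traj_i$ that \OPT actually played. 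Summing this bound over $i$ and adding $\E[h(\I(\trajProf))]$ to both sides yields the lemma. Note that whenever $i\in\I(\trajProf)$, the prefix $\traj_i^{\OPT}$ in fact equals $\traj_i$, so the prevailing cost on the right is well-defined from \OPT's observations alone.

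\medskip
\noindent\textbf{Epoch decomposition.} The main step is to bound the per-system left-hand side by decomposing \OPT's play of Markov system $\MS_i$ into epochs $1,2,\ldots,K$ in the sense of Definition~\ref{defn:EpochMax}. Epoch~$k$ begins at some (random) state $u_k^\ast$ with $\tau_i^{u_k^\ast}=\tau_k$, and $\tau_1>\tau_2>\cdots>\tau_K$. The key observation is that \OPT's behaviour restricted to the interior of epoch~$k$, once we condition on $u_k^\ast$ and on epoch~$k$ being entered, is a legitimate player strategy in the single-arm game $\MS_i(\tau_k)$ started at $u_k^\ast$: it advances the chain (paying $\price_i^u$ at each state), and when it reaches a destination $t$ it either ``Plays'' (taking $r_i^t-\tau_k$, corresponding to selecting $i$ into $\I$) or ``Halts''; similarly at the moment the prevailing cost strictly drops it can be viewed as halting. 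By the very definition of the grade, $U^{u_k^\ast}(\tau_k)=0$, so the expected utility of \emph{any} such strategy in $\MS_i(\tau_k)$ is at most $0$. Unpacking this gives, conditional on entering epoch~$k$,
\[ \E\Big[\reward_i^{\dest(\traj_i)}\mathbf{1}[\text{selected during epoch }k] \;-\; (\text{prices paid during epoch }k)\Big] \;\leq\; \tau_k\cdot\Pr[\text{selected during epoch }k]. \]
This is the analogue of the ``prevailing reward argument'' used in~\cite{DTW-SIDMA03}, adapted to an arbitrary (sub)strategy rather than a specific optimal one.

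\medskip
\noindent\textbf{Summing over epochs and systems.} Summing the per-epoch inequality over $k=1,\ldots,K$ and using that $\sum_k \mathbf{1}[\text{selected during epoch }k]=\mathbf{1}[i\in\I(\trajProf)]$ and $\sum_k \tau_k\cdot\mathbf{1}[\text{selected during epoch }k]=\Ymax(\traj_i)\cdot\mathbf{1}[i\in\I(\trajProf)]$ (because if $i$ is selected during epoch $k$, then $k$ is the last epoch and the prevailing cost on the full trajectory is exactly $\tau_k$), we obtain precisely the per-system bound displayed at the start. Summing over $i\in J$ and putting the $h(\I)$ term back yields $\OPT\leq \E_{\trajProf}[\val(\I(\trajProf),\bYmax(\trajProf))]$.

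\medskip
\noindent\textbf{Expected obstacle.} The subtlety is the correct identification of \OPT's play inside an epoch with a strategy in the single-arm game $\MS_i(\tau_k)$: \OPT is a joint strategy over all Markov systems and may interleave plays, so one has to argue that, conditional on the filtration outside system~$i$, the restriction to the interior of a fixed epoch is an (adaptive but non-anticipating) stopping rule for $\MS_i$ alone. Once this conditioning argument is made carefully, the single-arm bound $U^{u_k^\ast}(\tau_k)=0$ and linearity of expectation do the rest.
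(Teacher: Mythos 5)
Your proof is correct, and at its core it is the same ``prevailing reward'' argument the paper uses; the difference is in how that argument is packaged. The paper introduces a global auxiliary \emph{teasing game} in which selecting an element costs its current prevailing cost, asserts (citing the argument of~\cite{DTW-SIDMA03}) that this game is fair (Lemma~\ref{claim:FairTeasingGameMax}), and then runs \OPT in that game: fairness gives $\OPT \leq \E_{\trajProf}[\sum_{i \in \I(\trajProf)} \Ymax_{\traj_i}]$, and the $h(\I)$ term is stripped off exactly as you do. You instead bypass the teasing-game abstraction and prove the needed inequality directly, system by system and epoch by epoch, using only the defining property of the grade ($U^{u}(\tau^{u})=0$, via continuity of $U^u(\cdot)$) applied to the restriction of \OPT's play inside each epoch of $\MS_i(\tau_k)$; your identities $\sum_k \mathbf{1}[\text{selected in epoch }k]=\mathbf{1}[i\in\I(\trajProf)]$ and $\tau_K=\Ymax(\traj_i)$ on the selection event, plus the observation that $\traj_i^{\OPT}=\traj_i$ for selected elements, recover exactly the paper's bound. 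What your route buys is self-containedness: you effectively supply (the needed special case of) the fairness claim that the paper leaves to a citation, and you explicitly surface the one real subtlety --- that \OPT interleaves systems, so its restriction to an epoch of $\MS_i$ must be argued to be a legitimate (randomized but non-anticipating) single-arm strategy via independence of the other chains --- which the paper's teasing-game formulation absorbs silently. What the paper's packaging buys is reuse: the same fairness lemma is invoked again later (e.g.\ in the robustness analysis, Claim~\ref{claim:POIDisutilityRobustMax}) and for Lemma~\ref{lem:convertFrugalAlgMax}, where the ``play through a full epoch'' strategy attains equality, whereas your argument as written only yields the upper-bound direction needed here.
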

\end{proof}

\begin{proof}[Proof of Lemma~\ref{claim:OptAgainstOptYMax}]
Since for every trajectory profile $\trajProf$ both $\OPT$ in the \MPOI
world and $\E_{\trajProf} \left[\val(\I(\trajProf),\Ymax(\trajProf))\right] $ in the
\FreeInfo world pick the same set of elements $\I(\trajProf)$,  the expected
value due to the set function $h$ is the same. Hence, WLOG assume $h(\I)=0$ for all $\I\in \F$.

Now  consider the following \emph{teasing game} $\TG$ defined using the prevailing cost from Definition~\ref{defn:PrevailingCostMax}.
Consider a game where each Markov system $\MS_i$ starts at its initial state $s_i$ and a player is invited to advance  the Markov systems.
Besides advancing, the player is allowed to select any arbitrary elements (need not be feasible in $\F$) or terminate the game at any time during the game.
Whenever an element $i$ is selected, the player  pays a corresponding \emph{cost}, which is  set to be the prevailing cost defined by the trajectory that lead to the current state in  $\MS_i$. The player's goal is to maximize the expected \emph{value}, which is the expected utility (as defined for \UtiMax) from advancing the Markov systems \emph{minus} the expected total cost he pays when some items are selected. Observe that in this game the costs are updated in a ``teasing'' manner according to the prevailing costs that motivates the player to continue playing.
By an argument similar to~\cite{DTW-SIDMA03}, we have the following lemma.
\begin{lemma}
\label{claim:FairTeasingGameMax}
The teasing game \TG is \emph{fair}, which means that no strategy achieves a positive expected value by playing it and that there exists  a strategy with  zero expected value. Moreover, the following  strategy plays fairly: irrespective of the  order in which the Markov systems are played, whenever the player starts to advance a Markov system, he continues to advance it through the entire epoch.
\end{lemma}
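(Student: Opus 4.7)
The plan is to reduce the claim to a single-Markov-system statement and then use the defining equation of the grade together with a decomposition of the trajectory into epochs. Every reward, probing price, and prevailing cost in the teasing game is labelled by a single system $i$, and the dynamics of distinct systems are independent, so the expected value decomposes additively across the systems by linearity of expectation. It therefore suffices to prove the one-system analogue: for any single $\MS_i$, (a) any strategy restricted to $\MS_i$ has non-positive expected contribution to the value, and (b) the strategy that advances through every entered epoch attains equality.

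For a single system, I would decompose the (random) trajectory induced by the player's strategy into epochs $E_1, E_2, \ldots$ per Definition~\ref{defn:EpochMax}, where epoch $E_j$ begins at a state $v_j$ of grade $c_j := \tau^{v_j}$ and lasts until either a destination is reached or the grade strictly drops below $c_j$. Throughout $E_j$ the prevailing cost equals $c_j$ by Definition~\ref{defn:PrevailingCostMax}. The sub-game the player actually plays inside $E_j$ is then structurally identical to the penalized Markov game $\MS(c_j)$ of Section~\ref{para:grade}: the player pays probing prices $\pi^v$ while advancing, and selecting at a destination $t$ yields effective payoff $r^t - c_j$, while halting (or refusing to select) corresponds to the ``halt'' action in $\MS(c_j)$. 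By the defining property $U^{v_j}(c_j) = U^{v_j}(\tau^{v_j}) = 0$, the conditional expected contribution of epoch $E_j$ is at most $0$. Summing by the tower property over the (random) sequence of epochs actually entered gives part~(a).

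For part~(b) I would use the monotonicity of $U^w(\cdot)$ in its penalty argument. For every $w\in E_j$ we have $\tau^w \geq c_j$ and hence $U^w(c_j) \geq U^w(\tau^w) = 0$, so advancing inside the epoch is weakly optimal in $\MS(c_j)$; and at the first state $w$ outside the epoch we have $\tau^w < c_j$, giving $U^w(c_j) \leq 0$, so halting there is also optimal. Consequently the prescribed ``play through the entire epoch, then defer the decision to the start of the next epoch'' strategy realizes the optimum of $\MS(c_j)$, which is exactly $0$. Chaining across epochs shows that this strategy contributes exactly $0$ in expectation per system, and summing over $i$ witnesses fairness of the overall game. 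The main obstacle is the clean identification of the in-epoch sub-game with $\MS(c_j)$: one must check that the teasing game's available actions (advance, halt, select-and-pay the prevailing cost) map bijectively to those of $\MS(c_j)$, and that subsequent epochs can be treated as independent ``restarts'' by the strong Markov property at the epoch boundary. This is exactly the bookkeeping performed in~\cite{DTW-SIDMA03} for the minimization setting, which I would adapt verbatim.
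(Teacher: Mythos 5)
Your proposal is correct and follows essentially the same route as the paper: the paper does not spell out a proof but invokes the ``prevailing reward'' argument of Dimitriu, Tetali, and Winkler~\cite{DTW-SIDMA03}, and your per-system epoch decomposition---identifying each epoch started at a state $v_j$ with the penalized game $\MS(\tau^{v_j})$, using $U^{v_j}(\tau^{v_j})=0$ for the upper bound and the indifference property of the grade (playing is weakly optimal whenever the penalty is at most the current grade) for the equality, then chaining epochs via the Markov property and summing over systems by independence and linearity---is exactly that argument. The only point to state a bit more carefully is that achieving value exactly zero also requires the end-of-epoch selection rule (select a prepared element when its value is at least its prevailing cost), which your mapping of ``select'' to the play action at destinations in $\MS(\tau^{v_j})$ already encodes.
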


Now consider running the optimal policy $\OPT$ in the teasing game. Let $\trajProf$ be a trajectory profile in which each chain reaches its destination state.
Let $\trajProf_T$ denote a trajectory profile  until the moment when $\OPT$ returns the solution $\I(\trajProf)$ on the trajectory profile $\trajProf$.
It should be noticed that each trajectory in $\trajProf_T$ is a prefix of the corresponding trajectory in $\trajProf$.
In particular, for an element $i \in \I(\trajProf)$, $\traj_i$ coincides with $(\trajProf_T)_i$ since the destination state of $\MS_i$ is reached.
For an element $i \notin \I(\trajProf)$, however, $(\trajProf_T)_i$ may only be a prefix of $\traj_i$.
It follows that applying $\OPT$ in $\TG$ along trajectory profile $\trajProf$ incurs a cost of $
\sum_{i \in \I(\trajProf)} \Ymax_{(\trajProf_T)_i}$,
where $\Ymax_{(\trajProf_T)_i}$ is the prevailing cost for $\MS_i$ on trajectory $(\trajProf_T)_i$ according to Definition~\ref{defn:PrevailingCostMax}.
Since $\TG$ is a fair game, the expected utility of $\OPT$ cannot be larger than the expected cost it pays, i.e.,
\[
 \OPT \leq \E_{\trajProf}\Big[ \sum_{i \in \I(\trajProf)} \Ymax_{(\trajProf_T)_i} \Big].
\]
Since the elements  $i \in \I(\trajProf)$ are \prepared, we have  $\traj_i = (\trajProf_T)_i$ and
 $$
\sum_{i \in \I(\trajProf)} \Ymax_{(\trajProf_T)_i} = \sum_{i \in \I(\trajProf)} \Ymax_{\traj_i}.
 $$
This implies
 $$
 \OPT \leq  \E_{\trajProf}\Big[ \sum_{i \in \I(\trajProf)} \Ymax_{\traj_i} \Big],
 $$
which finishes the proof of Lemma~\ref{claim:OptAgainstOptYMax}.
\end{proof}


\section{Proof of Lemma~\ref{lem:convertFrugalAlgMax}}
\label{misproof:convertFrugalAlgMax}
We restate Lemma~\ref{lem:convertFrugalAlgMax} below.

\convertFrugalAlgMax*

\begin{proof}[Proof of Lemma~\ref{lem:convertFrugalAlgMax}]
  We  describe how to adapt the \frugal Algorithm~\A to an adaptive strategy $\ALGA$ in the \MPOI world. $\ALGA$ uses the grade $\tau$ as proxy for $\bYmax$, since $\bYmax$ is known only when the Markov systems reach their destination states. More specifically, at each moment when the \frugal Algorithm~\A is trying to evaluate the marginal-value function for each element, instead of using the $\bYmax$ value for each element, which we may not yet know at the moment, the strategy uses the $\tau$ values to compute the marginal. For the element chosen by~\A, the corresponding Markov system will be advanced one more step. A more specific description of our algorithm $\ALGA$ is given Algorithm~\ref{alg:frugalToAdaptiveMax}.
Here $\Ymmax$ for a set $M \subseteq J$ is defined as the list of $\bYmax$ values that are in the set $M$.

\setlength{\intextsep}{5pt}
\begin{algorithm}
\caption{\ALGA for \UtiMax in \MPOI}
\label{alg:frugalToAdaptiveMax}
\begin{algorithmic}[1]
\State Start with $M=\emptyset$ and  $v_i=0$ for all elements $i$.
\State For each element $i\not\in M$, set $g(\Ymmax , i, \tau^{u_i}_i)$ where $u_i$ is the current state of $i$.\label{alg:computeViMax}
\State Consider the element $j = \argmax_{i\not \in M~\&~ M\cup i \in \F} \{v_i\}$.
\State If $v_j>0$, then if $\MS_j$ is not in a destination state then proceed $\MS_j$ by one step and go to Step~\ref{alg:computeViMax}.
Else, when $v_j>0$ but  $\MS_j$ is  in a destination state $t_j$, select $j$ into $M$ and go to Step~\ref{alg:computeViMax}.
\State Else, if every element $i\not\in M$ has $v_i \leq 0$ then return set $M$.
\end{algorithmic}
\end{algorithm}


In the following Claim~\ref{claim:sameSolutionMax}, we argue that for any trajectory profile $\trajProf$, running $\ALGA$ in \MPOI returns the same set of elements as running $\A$ for $\bYmax(\trajProf)$.
\begin{claim}[Claim~\ref{claim:sameSolutionMax}]
\label{claim:sameSolutionMax}
For any trajectory profile $\trajProf$,
the solution returned by running Algorithm~\ref{alg:frugalToAdaptiveMax} in the \MPOI world is the same as   the solution by  Algorithm~\A on $\bYmax(\trajProf)$.
\end{claim}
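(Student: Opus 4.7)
The plan is to fix a trajectory profile $\trajProf$ in which every $\MS_i$ reaches its destination state and to argue by induction on $|M|$ that both algorithms add elements to $M$ in the same order. The base case is immediate since both start with $M=\emptyset$. For the inductive step, assume that after some prefix of execution both algorithms have produced the same set $M$; since they have seen exactly the same completed trajectories of the selected elements, $\Ymmax$ is identical in the two runs, so it suffices to show that the next element added to $M$ is the same. Suppose toward contradiction that \A adds $j$ next while Algorithm~\ref{alg:frugalToAdaptiveMax} adds some $i \neq j$; under a consistent tie-breaking rule this means
\[ g(\Ymmax, j, \Ymax_{\traj_j}) \;\geq\; g(\Ymmax, i, \Ymax_{\traj_i}). \]

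The crux of the argument is to exhibit a single moment $t_1$ during the MPOI execution at which the current state $u_i$ of $\MS_i$ satisfies $\tau_i^{u_i} = \Ymax_{\traj_i}$. Let $t_0$ be the last instant along $\traj_i$ at which the running minimum of grades strictly drops; such an instant exists because the trajectory terminates with prevailing cost $\Ymax_{\traj_i}$. By definition of prevailing cost, the state reached at $t_0$ has grade equal to $\Ymax_{\traj_i}$, and no subsequent state of $\traj_i$ has smaller grade. Let $t_1$ be the first time at or after $t_0$ at which Algorithm~\ref{alg:frugalToAdaptiveMax} elects to play (or select) $\MS_i$; this exists because $\MS_i$ is ultimately added to $M$ and so is advanced further after $t_0$. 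Since $\MS_i$ is not advanced between $t_0$ and $t_1$, the current state of $\MS_i$ at $t_1$ is still the state reached at $t_0$, giving $\tau_i^{u_i} = \Ymax_{\traj_i}$ at $t_1$.

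Let $u_j$ denote the state of $\MS_j$ at $t_1$. Because Algorithm~\ref{alg:frugalToAdaptiveMax} chose to play $\MS_i$ rather than $\MS_j$ at that moment,
\[ g(\Ymmax, i, \tau_i^{u_i}) \;>\; g(\Ymmax, j, \tau_j^{u_j}). \]
Now $u_j$ is some state of $\traj_j$, so $\tau_j^{u_j} \geq \min_{v \in \traj_j} \tau_j^v = \Ymax_{\traj_j}$; monotonicity of $g$ in its last argument then gives $g(\Ymmax, j, \tau_j^{u_j}) \geq g(\Ymmax, j, \Ymax_{\traj_j})$. Combining these with $\tau_i^{u_i} = \Ymax_{\traj_i}$ yields $g(\Ymmax, i, \Ymax_{\traj_i}) > g(\Ymmax, j, \Ymax_{\traj_j})$, contradicting \A's choice of $j$.

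The main obstacle is pinning down $t_1$ and verifying that the current-state grade used as a proxy by Algorithm~\ref{alg:frugalToAdaptiveMax} coincides with the final prevailing cost of $\MS_i$ at that moment. This requires carefully tracking the evolution of the running minimum of grades along $\traj_i$ and exploiting the fact that, immediately after the last drop of this running minimum, the current state itself attains the final prevailing cost. A secondary concern is termination: a symmetric version of the argument above, applied when one algorithm halts (because all marginal values become nonpositive) while the other still wants to proceed, shows that neither algorithm can stop strictly earlier, so both terminate with identical $M$.
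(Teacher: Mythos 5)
Your proposal is correct and follows essentially the same argument as the paper: induction on the selected set, a contradiction via the moment $t_1$ just after the last drop of $\MS_i$'s running minimum of grades (so that $\tau_i^{u_i}=\Ymax_{\traj_i}$), and monotonicity of $g$ in its last argument to compare against $\MS_j$'s final prevailing cost. The only differences are cosmetic (you collapse the paper's intermediate prevailing-cost step for $j$ into the direct bound $\tau_j^{u_j}\geq\Ymax_{\traj_j}$ and add an explicit remark about termination).
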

Before proving Claim~\ref{claim:sameSolutionMax}, we use it to prove Lemma~\ref{lem:convertFrugalAlgMax} by showing that the utility of Algorithm~\ref{alg:frugalToAdaptiveMax} in the \MPOI world is at least
\[\E_{\trajProf} [\val(\A(\bYmax(\trajProf)), \bYmax(\trajProf))].
\]

By Claim~\ref{claim:sameSolutionMax}, the value due to the set function $h$ is the same for both algorithms. So without loss of generality, assume $h$ is always 0. We consider the teasing game $\TG$ as defined in Claim~\ref{claim:FairTeasingGameMax}.
By definition, $g$ is an increasing function of the last parameter $y$.
Since grade is used as that parameter and the grade of each state visited during an epoch is at least the grade of the initial state of that epoch,
 it follows that once Algorithm~\ref{alg:frugalToAdaptiveMax} starts to play a Markov system $\MS_i$, it will not switch  before finishing an epoch. Therefore, by Claim~\ref{claim:FairTeasingGameMax}, Algorithm~\ref{alg:frugalToAdaptiveMax} plays a fair game. So the expected cost that Algorithm~\ref{alg:frugalToAdaptiveMax} pays is the same as its expected utility from playing the Markov systems. However,  Claim~\ref{claim:sameSolutionMax} gives the expected cost payed by Algorithm~\ref{alg:frugalToAdaptiveMax} is the same as the utility of running Algorithm~\A in the \FreeInfo world, i.e., $\E_{\trajProf} [\val(\A(\bYmax(\trajProf)), \bYmax(\trajProf))]$. Hence, the utility of running Algorithm~\ref{alg:frugalToAdaptiveMax} is at least $\E_{\trajProf} [\val(\A(\bYmax(\trajProf)), \bYmax(\trajProf))]$.
\end{proof}


It remains to prove the missing Claim~\ref{claim:sameSolutionMax} in the proof of Lemma~\ref{lem:convertFrugalAlgMax}.
\begin{proof}[Proof of Claim~\ref{claim:sameSolutionMax}]
Suppose we fix a trajectory profile $\trajProf$ where each Markov system reaches some destination state. We prove the claim by induction on the number of elements already selected into the set $M$. Suppose the set of elements selected into $M$ is the same by running the two algorithms until now. We  show that the next element selected by the algorithms into $M$ is the same.

Assume for the purpose of contradiction that the next element picked by \A is $j$ but the next element picked by Algorithm~\ref{alg:frugalToAdaptiveMax} is $i\neq j$. By the definition of Algorithm~\A,
\begin{align}
\label{eqn:DefnOfJ}
j=\argmax_{i' \notin M} \left\{ g \left( \Ymmax(\trajProf), i',\Ymax_{\traj_{i'}} \right) \right\}.
\end{align}
where $\traj_{i}'$ denotes the trajectory of $\MS_{i'}$ in $\trajProf$.
Now we look at the trajectory $\traj_i$, it follows that the prevailing cost $\Ymax_{\traj_i}$ is non-increasing over this trajectory and is equal to $\Ymax_{\traj_i}$ when $\MS_i$ reaches the destination state. We look at the last moment $t_0$ when the prevailing cost of $\MS_i$ decreases. Consider the first moment $t_1$ after $t_0$ that our Algorithm~\ref{alg:frugalToAdaptiveMax} decides to play $\MS_i$ (but has not actually played $\MS_i$ yet). It follows that the prevailing cost of $\MS_i$ at moment $t_1$ is exactly the same as $\Ymax_{\traj_i}$ and also the grade $\tau_i^{u_i}$ of the current state $u_i$. Denote $\Ymax_{\traj'_j}$ the prevailing cost of $\MS_j$ and $u_j$ the state of $\MS_j$ at moment $t_1$. Then we have $\Ymax_{\traj'_j} \geq \Ymax_{\traj_j}$ because the prevailing cost of $\MS_j$ is also non-increasing. By the definition of $t_1$, one has
\begin{align*}
g \left( \Ymmax(\trajProf), i,\Ymax_{\traj_i} \right) &= g \left( \Ymmax(\trajProf), i, \tau_i^{u_i} \right) \\
&> g \left( \Ymmax(\trajProf), j, \tau_j^{u_j} \right) \geq g \left( \Ymmax(\trajProf), j,\Ymax_{\traj'_j} \right).
\end{align*}
However, since $g$ is increasing in the last parameter, it follows that
$$g \left( \Ymmax(\trajProf), j,\Ymax_{\traj'_j} \right) \geq g \left( \Ymmax(\trajProf), j,\Ymax_{\traj_j} \right),$$
which implies
\begin{gather*}
  g \left( \Ymmax(\trajProf), i,\Ymax_{\traj_i} \right) > g \left(
    \Ymmax(\trajProf), j,\Ymax_{\traj_j} \right).
\end{gather*}
This contradicts with the definition of $j$ in Eq~\eqref{eqn:DefnOfJ}.
\end{proof}


\section{Comparing Grade and Weitzman's Index for Pandora's Box} \label{sec:GradeVsWeitzman}
Recall Weitzman's Pandora's box formulation of the oil-drilling problem mentioned in Section~\ref{sec:intro}. Given probability distributions of $n$ independent random variables $X_i$ (amount of oil at site $i$) and their \emph{probing} (inspection) prices $\price_i$, the goal is to design a strategy to \emph{adaptively} probe a set $\probed$ to maximize expected utility
\[ {\E \Big[ \max_{i\in \probed} \{X_i\} - \sum_{i\in \probed} \price_i  \Big]. }\]

The Weitzman's index for site $i$, denoted by $\taumax_i$, is defined using the following equation $\E[(X_i - \taumax_i)^+] = \pi_i$.
It is known that the following strategy is optimal~\cite{Weitzman-Econ79}.

\emph{Selection Rule:} The next site to be probed is the one with with the highest Weitzman's index.

\emph{Stopping Rule:} Terminate when the maximum realized value amongst the probed sites exceeds the Weitzman's index of every unprobed site.

It turns out that Weitman's index $\taumax_i$ is simply the grade, defined in Section~\ref{para:grade}, in disguise.
To see this, we start by noticing that each variable $X_i$ with probing price $\price_i$ can be thought of as the following Markov system.
There is one initial state $s_i$ with moving cost $\price_i$.
$s_i$ has transitions, with probabilities according to the distribution of $X_i$, to a set $T_i$ of destination states, each corresponding to a possible outcome of the variable $X_i$ .
The value of each destination state is naturally set to be the corresponding outcome of $X_i$.
We show below that $\taumax_i$ is simply the grade $\tau_i^{s_i}$ of the initial state $s_i$.

According to our definition of grade in Section~\ref{para:grade}, in the $\tau_i^{s_i}$-penalized Markov game $\MS(\tau_i^{s_i})$, 
there is a fair strategy that probes site $i$ and achieves a zero  utility.
Such a strategy would pick site $i$ (i.e., play in the corresponding destination state) if and only if $X_i - \tau_i^{s_i} \geq 0$.
The utility of that policy is thus $-\price_i + \E[(X_i - \tau_i^{s_i})^+] = 0$.
Comparing with the definition of Weitzman's index, this shows $\taumax_i = \tau_i^{s_i}$.
The optimality of Weitzman's strategy is therefore also implied by Theorem~\ref{thm:frugalToAdaptiveMax}.

\section{Adaptive Algorithms for Disutility Minimization} \label{sec:DisutilityMinimization}

We give the corresponding definitions for the \DisMin problem.
\begin{definition}[Prevailing Reward for \DisMin]
\label{defn:PrevailingRewardMin}
The  \emph{prevailing reward} of $\MS_i$ for the trajectory $P_i$ in \DisMin is defined as
$$
\Rmin_{P_i} \overset{\Delta}{=} \max_{u \in P_i}\{ -\tau_i^{u} \}.
$$
For a trajectory profile $\trajProf$, denote $\Rmin_\trajProf$ the list of prevailing rewards for each Markov system.
\end{definition}

For a trajectory $P_i$ in the \DisMin problem, consider the change of the prevailing reward as the Markov system starts from $s_i$ and moves according to $P_i$.
It follows that the prevailing reward is non-decreasing in this process.
Moreover, it increases whenever the Markov system reaches a state that has smaller grade than each previously visited state.
Now we are ready to state the definition of an \emph{epoch}.
\begin{definition}[Epoch for \DisMin]
\label{defn:EpochMin}
An \emph{epoch} is defined to be the period from the time when the prevailng reward increases until the moment just before the next time it increases.
\end{definition}
It follows that within an epoch, all states visited has grade no smaller than the prevailing reward at the start of this epoch and thus the prevailing reward stays constant in an epoch.
We can therefore view the prevailing reward as a non-decreasing piece-wise constant function of time.

\begin{definition}[\frugal Covering  Algorithm]\label{defn:frugalCovering}
For a \DisMin problem in the \Det world with covering constraints $\F$ and cost function $\cost$, we say Algorithm \A is \frugal if  there exists a \emph{marginal-value} function $g(\Y,i,y):\R^J \times J \times \R \rightarrow \R$ that is decreasing in $y$, and  for which  the pseudocode is given by Algorithm~\ref{alg:frugalCoverageMin}.
Moreover, the function $g(\Y,i,y)$ should \emph{encode} the constraints $\F$, such that whenever $M$ is infeasible, then $\exists i \notin M$ with $v_i>0$. This requirement will ensure that a feasible solution is returned.
\begin{algorithm}
\caption{\frugal Covering Algorithm \A}
\label{alg:frugalCoverageMin}
\begin{algorithmic}[1]
\State Start with $M=\emptyset$ and  $v_i=0$ for each element $i \in J$.
\State For each element $i\not\in M$, compute  $v_i = g( \Ym, i,Y_i)$. Let $j = \argmax_{i\not \in M} \{v_i\}$.\label{alg:FrugalcomputeViCoverage}
\State If $v_j>0$ then add $j$ into $M$ and  go to Step~\ref{alg:FrugalcomputeViCoverage}. Otherwise, return $M$.
\end{algorithmic}
\end{algorithm}
\end{definition}

With the definitions above, one can prove the following theorem for \DisMin using similar techniques as in Section~\ref{sec:UtilityMaximization}.

\begin{restatable}{theorem}{frugalToAdaptiveMin} \label{thm:frugalToAdaptiveMin}
For a semiadditive objective function $\cost$, if there exists an  $\alpha$-approximation \frugal algorithm for a \DisMin problem over some covering constraints $\F$ in the \FreeInfo world, then there exists an $\alpha$-approximation strategy for the corresponding \DisMin problem in the \MPOI world.
\end{restatable}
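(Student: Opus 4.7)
The plan is to mirror the two-step argument used for \UtiMax in Lemmas~\ref{lem:boundAdapMax} and~\ref{lem:convertFrugalAlgMax}, with all signs flipped to accommodate covering constraints and the decreasing marginal-value function~$g$. The key invariants remain: (i)~the grade $\tau^v_i$ of a state governs how much \emph{cost} any strategy can be forced to pay while playing $\MS_i$ from $v$, and (ii)~a state's grade changes monotonically along a trajectory, now non-decreasingly when we track the prevailing reward $\Rmin$ of Definition~\ref{defn:PrevailingRewardMin}.

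First, I would prove an analog of Lemma~\ref{lem:boundAdapMax}: for any strategy, the expected cost incurred on $\MS_i$ is at least its expected prevailing reward $\Rmin_{\traj_i}=\max_{u\in \traj_i}\{-\tau^u_i\}$ (equivalently, at least $-\min_{u\in\traj_i}\tau^u_i$). This is established via a ``teasing game'' in which the player earns a reward equal to the prevailing reward of $\MS_i$ whenever $\MS_i$ is abandoned (or reaches its destination), paying per-step prices along the way; by the same argument as in~\cite{DTW-SIDMA03}, no strategy achieves positive expected value in this game, and there exist fair strategies—namely, any strategy that never abandons a Markov system mid-epoch. Applying this to the optimal \DisMin strategy $\OPT$, one obtains
\[
\OPT \;\geq\; \E_{\trajProf}\Big[\cost\big(\I(\trajProf),\bRmin(\trajProf)\big)\Big] \;\geq\; \E_{\trajProf}\Big[\min_{\I\in\F}\cost\big(\I,\bRmin(\trajProf)\big)\Big],
\]
the first inequality from the teasing game (the semiadditive $h(\I)$ piece matches exactly since the solution sets agree), and the second because $\OPT$ returns some feasible $\I(\trajProf)\in\F$.

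Second, I would adapt the \frugal covering algorithm $\A$ to an adaptive strategy $\ALG_\A$ in \MPOI by the same template as Algorithm~\ref{alg:frugalToAdaptiveMax}: at each step, use the \emph{current grade} $\tau^{u_i}_i$ of every unselected $i$'s current state as a proxy for its unknown $\Rmin$ value; compute $v_i = g(\Ym, i, \tau^{u_i}_i)$; pick the maximizing $j$, and either advance $\MS_j$ one step if not yet at a destination, or add $j$ to $M$ if it is. Because $g$ is \emph{decreasing} in $y$ and $\tau^{u_i}_i$ is non-increasing within an epoch of $\MS_i$, the $v_j$ of the chain currently being played only rises (or stays constant) until the epoch ends, so $\ALG_\A$ never switches chains mid-epoch. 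By the fair-strategy characterization from the teasing game, the expected cost $\ALG_\A$ pays on each $\MS_i$ equals the expected prevailing reward. Finally, by a coupling identical in spirit to Claim~\ref{claim:sameSolutionMax}—induct on $|M|$ and use monotonicity of $g$ in its last argument together with the fact that the prevailing reward along a completed trajectory equals $-\tau^{u_i}_i$ at the moment when $\Rmin$ last jumped—the set returned by $\ALG_\A$ in \MPOI coincides with the set that $\A$ would return on input $\bRmin(\trajProf)$. Combining, the disutility of $\ALG_\A$ equals $\E_{\trajProf}[\cost(\A(\bRmin(\trajProf)),\bRmin(\trajProf))]$, which is at most $\alpha\cdot\E_{\trajProf}[\min_{\I\in\F}\cost(\I,\bRmin(\trajProf))]$ by the \FreeInfo approximation guarantee, and hence at most $\alpha\cdot\OPT$ by the first step.

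The main obstacle is the sign bookkeeping and the verification that the ``no mid-epoch switch'' property still buys us a fair strategy in the covering setting: for \UtiMax one must not halt too early (so that prevailing \emph{cost} is actually paid), whereas for \DisMin one must not quit a chain without collecting its prevailing \emph{reward}. The correct way to see both uniformly is to note that grade is defined so that the game of ``play one more step versus halt with penalty $\tau$'' is a martingale at $\tau = \tau^v$; the teasing game rewrites this as a zero-mean game whose fair strategies are exactly those that never stop mid-epoch, and this characterization is symmetric under the sign flip between \UtiMax and \DisMin. Once this is checked carefully, Theorem~\ref{thm:frugalToAdaptiveMin} follows just as Theorem~\ref{thm:frugalToAdaptiveMax} did.
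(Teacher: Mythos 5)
Your proposal is correct and follows essentially the same route as the paper, which establishes Theorem~\ref{thm:frugalToAdaptiveMin} by mirroring the two-step argument of Lemmas~\ref{lem:boundAdapMax} and~\ref{lem:convertFrugalAlgMax}, using the prevailing reward $\Rmin$ of Definition~\ref{defn:PrevailingRewardMin} in place of the prevailing cost and a marginal-value function $g$ that is decreasing in its last argument. One small correction: the grade $\tau^{u_i}_i$ need not be non-increasing within an epoch (it is only bounded below by the grade at the epoch's start), but that weaker invariant is all the no-mid-epoch-switch argument requires, since the cost proxy $-\tau^{u_i}_i$ then never rises above the epoch's prevailing reward while $g$ is decreasing in that coordinate.
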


\section{Missing Proofs in the Robustness Model}
\label{sec:MissingProofRobustness}
\begin{proofof}{Claim~\ref{claim:POIDisutilityRobustMax}}
Because $\ALGHA$ shifts the estimated grade upward by $\epsilon/2kD_i$ each time we advance $\MS_i$ and that each grade is estimated to within an additive error of $\epsilon/2kD_i$, whenever $\ALGHA$ starts to advance a Markov system, it continues to advance it through the whole epoch.
It follows from Claim~\ref{claim:FairTeasingGameMax} that $\ALGHA$ is an optimal policy in the teasing game $\TG$.
By a similar argument as the proof of Claim~\ref{claim:sameSolutionMax}, one can show that for any list of trajectories $\trajProf$, running $\ALGHA$ in the real world returns the same solution as running $\A$ on $\bYHmax(\trajProf)$.
These imply the claim.
\end{proofof}

\begin{proofof}{Claim~\ref{claim:RobustnessAlgMax}}
Since Markov system $i$ can be played at most $D_i$ times, it follows that the estimated grade is shifted upward by at most $(D_i-1)\epsilon/2kD_i$.
It follows that each estimated grade after the upward shifting is still within an additive error of $\epsilon/2k$ from the real grade, which finishes the first part of the grade.

The second part follows from the following inequalities.
\begin{eqnarray*}
&&\val(Alg(\bYHmax(\trajProf), \A), \bYmax(\trajProf)) \\
&&\geq \val(Alg(\bYHmax(\trajProf), \A), \bYHmax(\trajProf)) - k \cdot \epsilon/2k \\
&&\geq \frac{1}{\alpha} \cdot \max_{\I \in \F} \left \{ \val ( \I , \bYHmax(\trajProf)) \right \} - \epsilon/2\\
&&\geq \frac{1}{\alpha} \cdot \val \left(\argmax_{\I \in \F} \left \{ \val ( \I ,\bYmax(\trajProf)) \right\}, \bYHmax(\trajProf) \right) - \epsilon/2\\
&&\geq \frac{1}{\alpha} \cdot \max_{\I \in \F} \left \{ \val ( \I , \bYmax(\trajProf)) \right \} - \epsilon,
\end{eqnarray*}
where the last line follows because $\alpha \geq 1$.
\end{proofof}


\section{Assumptions in the Robustness Model}
\subsection{DAG Assumption}
\label{sec:DAGNecessaryforRobust}
We give an example to illustrate why the \DAG assumption is necessary for our robustness results to hold.
We show that if there are cycles in the Markov chains, one might need to estimate the input parameters to a super-exponentially accurate precision in order to achieve a small additive loss in the performance.

Consider the following \UtiMax problem of picking at most one item (i.e. the constraint $\F$ is the uniform Matroid with rank 1) where \emph{all the input parameters are polynomially bounded}.
We have $n$ Markov systems $\{\MS_i\}_{1 \leq i \leq n}$.
The last $n-2$ Markov systems each has only one state, which is a destination state, with value 0.
These Markov systems can be safely ignored since one can pick nothing and obtains 0 utility.
We can therefore focus only  on the other two Markov systems.

The 2nd Markov system $\MS_2$ has only one state, which is a destination state, with value $1$.
The first Markov system $\MS_1$ has three states $\{s_1,v,t_1\}$, where $s_1$ is the initial state with playing cost $n^2/2^{2^n}$, $t_i$ is the destination state with value $n^2/2$, and $v$ is some intermediate state with playing cost 0.
The transitions in $\MS_1$ are as follows.
$s_1$ goes to $v$ deterministically.
$v$ goes to $s_1$ with probability $1-1/p2^{2^n}$ and $t_1$ with probability $1/p2^{2^n}$, where $p \in (0,1]$.
Notice that $\MS_1$ contains a cycle and a negligible transition out of the cycle to the destination.
It follows that the utility obtained by always playing $\MS_1$ is $n^2/2 - pn^2$, which is $n^2/4$ if $p=1/4$ and $-n^2/2$ if $p=1$.

In this case, if we fail to estimate the transition probabilities of $\MS_1$ to a super-exponentially accurate precision of $O(1/2^{2^n})$, it would render it impossible even to distinguish between the case where playing $\MS_1$ has utility $\Theta(n^2)$ and the case where playing $\MS_1$ has negative utility, which makes it impossible to obtain an approximation policy within a small additive error from the optimal policy.

\subsection{Polynomial Upper Bound on Input Parameters}
\label{sec:BoundedInput}
Here, we give an example to illustrate why Assumption~\ref{assump:BoundedInput} is necessary for our robustness results to hold. We show that if some parameters are exponential in the input parameter, then one might need to estimate some input parameters to within an additive error that is exponential in the input parameters.

Consider the following \UtiMax problem of picking at most one item (i.e. the constraint $\F$ is the uniform Matroid with rank 1) where \emph{all the input parameters are polynomially bounded}.
We have $n$ Markov systems $\{\MS_i\}_{1 \leq i \leq n}$. The last $n-1$ Markov systems deterministically give 0 utility.
The first Markov system $\MS_1$ has an initial state $s_1$ and two destination states $t_1$ and $t_2$.
The initial state $s_1$ has price $3^{n}$.
It goes to $t_1$ with probability $p$ and $t_2$ with probability $1-p$.
$t_1$ has reward $2\times 3^{n}$ and $t_2$ has reward $0$.

The player has to decide between playing $\MS_1$ or doing nothing at all. If $p = 1/2 + \Theta(1/2^{n})$, then the utility of playing $\MS_1$ is $\Theta(1.5^n)$ and if $p=1/2 - \Theta(1/2^n)$, then the utility of playing $\MS_1$ is $ - \Theta(1.5^n)$. It follows that one need to estimate the transition probabilities to within an additive error that is exponentially small.

\subsection{Other Assumptions Without Loss of Generality}
\label{sec:JustifyAssumptionsRobust}
Recall that for the \DAG-\UtiMax problem in the robustness model, we made the following assumptions.
\begin{itemize}
\item All non-zero transition probabilities are lower bounded by $1/P$, where $P$ is some polynomial in the parameters above.
\item We can estimate the prices $\bprice$ and the rewards $\breward$ exactly, i.e. the only unknown input parameters are the transition probabilities.
\end{itemize}

The assumption that all non-zero transition probabilities are polynomially lower bounded is without loss of generality.
It can be removed by the following procedure.
We start by setting a threshold $1/P$ and estimating all the data to within an additive error smaller than $1/P$.
We then ignore the transitions that have estimated probabilities smaller than $2/P$.
This is done by reallocating these probability masses to other transitions from the same state in both the original Markov systems and the estimated Markov systems.
After the removal of these negligible transition probabilities, the remaining Markov systems have a lower bound of $1/P$ on all the transition probabilities.
Since the maximum price paid on any sample path in a Markov system is at most $DB$, it follows that this changes the optimal policy by at most a very small additive factor if the polynomial $P$ we take is large enough. Therefore, we shall assume without loss of generality a lower bound on all non-zero transition probabilities.

The assumption that we can estimate the prices $\bprice$ and the rewards $\breward$ exactly is again without loss of generality and can be removed by the following argument with a small additive term in the theoretical guarantee.
Suppose all the prices $\bprice$ and the rewards $\breward$ are estimated within an additive error of $\delta/nD$.
Since one needs at most $D$ steps to reach the destination for each Markov system, the utility is affected by at most a small additive factor of $\delta/nD \times nD = \delta$ if we set $\delta$ to be small. Therefore, we will assume that estimations of the prices $\bprice$ and the rewards $\breward$ are exact and only the estimations of transition probabilities have deviations from the real transition probabilities.

\end{document}